\documentclass[journal,onecolumn]{IEEEtran}
%
% If IEEEtran.cls has not been installed into the LaTeX system files,
% manually specify the path to it like:
% \documentclass[journal]{../sty/IEEEtran}

%\usepackage{graphicx,epstopdf}
\usepackage{amsmath}
\usepackage{amssymb, amsfonts}
\usepackage{amsthm}
\usepackage{braket}
\usepackage[colorlinks=true, citecolor=red, urlcolor=blue ]{hyperref}
\usepackage{color,empheq}
\usepackage{xcolor}
\usepackage{float}
\usepackage{epstopdf}
\usepackage{ulem}
\DeclareMathOperator{\tr}{tr}
\usepackage{enumitem}

\let\svitem\item%
\def\mybox#1{\makebox[2.5cm][l]{\bfseries#1}}

\usepackage{array}
\newcolumntype{L}[1]{>{\raggedright\let\newline\\\arraybackslash\hspace{0pt}}m{#1}}
\newcolumntype{C}[1]{>{\centering\let\newline\\\arraybackslash\hspace{0pt}}m{#1}}
\newcolumntype{R}[1]{>{\raggedleft\let\newline\\\arraybackslash\hspace{0pt}}m{#1}}

\newtheorem{remark}{Remark}

\newtheorem{lemma}{Lemma}
\newtheorem{corollary}{Corollary}

\newtheorem{theorem}{Theorem}
\newtheorem{definition}{Definition}
\newtheorem{observation}{Observation}

\newtheorem{notation}{Notation}

\def\bed{\begin{definition}}
	\def\eed{\end{definition}}
\def\bel{\begin{lemma}}
	\def\eel{\end{lemma}}
\def\bet{\begin{theorem}}
	\def\eet{\end{theorem}}
\def\bet{\begin{notation}}
	\def\eet{\end{notation}}
\def\bet{\begin{remark}}
	\def\eet{\end{remark}}

% Some very useful LaTeX packages include:
% (uncomment the ones you want to load)

% *** MISC UTILITY PACKAGES ***
%
%\usepackage{ifpdf}
% Heiko Oberdiek's ifpdf.sty is very useful if you need conditional
% compilation based on whether the output is pdf or dvi.
% usage:
% \ifpdf
%   % pdf code
% \else
%   % dvi code
% \fi
% The latest version of ifpdf.sty can be obtained from:
% http://www.ctan.org/pkg/ifpdf
% Also, note that IEEEtran.cls V1.7 and later provides a builtin
% \ifCLASSINFOpdf conditional that works the same way.
% When switching from latex to pdflatex and vice-versa, the compiler may
% have to be run twice to clear warning/error messages.

% *** CITATION PACKAGES ***
%
\usepackage{cite}
% cite.sty was written by Donald Arseneau
% V1.6 and later of IEEEtran pre-defines the format of the cite.sty package
% \cite{} output to follow that of the IEEE. Loading the cite package will
% result in citation numbers being automatically sorted and properly
% "compressed/ranged". e.g., [1], [9], [2], [7], [5], [6] without using
% cite.sty will become [1], [2], [5]--[7], [9] using cite.sty. cite.sty's
% \cite will automatically add leading space, if needed. Use cite.sty's
% noadjust option (cite.sty V3.8 and later) if you want to turn this off
% such as if a citation ever needs to be enclosed in parenthesis.
% cite.sty is already installed on most LaTeX systems. Be sure and use
% version 5.0 (2009-03-20) and later if using hyperref.sty.
% The latest version can be obtained at:
% http://www.ctan.org/pkg/cite
% The documentation is contained in the cite.sty file itself.

% *** GRAPHICS RELATED PACKAGES ***
%
\ifCLASSINFOpdf
  \usepackage[pdftex]{graphicx}
  % declare the path(s) where your graphic files are
  % \graphicspath{{../pdf/}{../jpeg/}}
  % and their extensions so you won't have to specify these with
  % every instance of \includegraphics
  % \DeclareGraphicsExtensions{.pdf,.jpeg,.png}
\else
  % or other class option (dvipsone, dvipdf, if not using dvips). graphicx
  % will default to the driver specified in the system graphics.cfg if no
  % driver is specified.
  % \usepackage[dvips]{graphicx}
  % declare the path(s) where your graphic files are
  % \graphicspath{{../eps/}}
  % and their extensions so you won't have to specify these with
  % every instance of \includegraphics
  % \DeclareGraphicsExtensions{.eps}
\fi
\hyphenation{op-tical net-works semi-conduc-tor}

\begin{document}
%
% paper title
% Titles are generally capitalized except for words such as a, an, and, as,
% at, but, by, for, in, nor, of, on, or, the, to and up, which are usually
% not capitalized unless they are the first or last word of the title.
% Linebreaks \\ can be used within to get better formatting as desired.
% Do not put math or special symbols in the title.
\title{Secure communication over generalised quantum multiple access channels}
%
%
% author names and IEEE memberships
% note positions of commas and nonbreaking spaces ( ~ ) LaTeX will not break
% a structure at a ~ so this keeps an author's name from being broken across
% two lines.
% use \thanks{} to gain access to the first footnote area
% a separate \thanks must be used for each paragraph as LaTeX2e's \thanks
% was not built to handle multiple paragraphs
%

\author{Tamoghna Das, 
        Karol Horodecki,
        and~Robert Pisarczyk% <-this % stops a space
\thanks{Tamoghna Das was with National Quantum Information Centre in Gda\'nsk, Faculty of Mathematics, Physics and Informatics, now associated with International Centre for Theory of Quantum Technologies, University of Gdańsk, Wita Stwosza 63, 80-308 Gdańsk, Poland, email: tamoghna.das@ug.edu.pl.}% <-this % stops a space
\thanks{Karol Horodecki is with Institute of Informatics
		and National Quantum Information Centre in Gda\'nsk, Faculty of Mathematics, Physics and Informatics, University of Gda\'nsk, 80--952 Gda\'nsk, Poland and
		International Centre for Theory of Quantum Technologies, University of Gdańsk, Wita Stwosza 63, 80-308 Gdańsk, Poland, email: khorodec@inf.ug.edu.pl.}% <-this % stops a space
\thanks{Robert Pisarczyk is with the Mathematical Institute, University of Oxford, Woodstock Road, Oxford OX2 6GG, United Kingdom, email: robert.pisarczyk@maths.ox.ac.uk}}

\maketitle

% As a general rule, do not put math, special symbols or citations
% in the abstract or keywords.
\begin{abstract}
We investigate the security of generalized quantum multiple-access channels. We provide the formula for the achievable rate region of secure communication in the scenario of two senders and a single receiver. We explicitly specify a protocol for secure communication in this scenario, which employs superdense coding. The protocol is based on the distribution of a tripartite GHZ state. It allows for both symmetric and asymmetric key distribution whereby one of the senders can have twice the capacity of the other sender. We prove the security of the protocol against general quantum attacks, %In that we generalize the results of Beaudry et al. [Phys. Rev. A 88 (2013)] to the multipartite case
analyze different strategies of the eavesdropper and compute the key rate for a range of noisy quantum channels.
\end{abstract}

% Note that keywords are not normally used for peerreview papers.
\begin{IEEEkeywords}
Quantum cryptography, Quantum channels, Multiple access channel.
\end{IEEEkeywords}

% For peer review papers, you can put extra information on the cover
% page as needed:
% \ifCLASSOPTIONpeerreview
% \begin{center} \bfseries EDICS Category: 3-BBND \end{center}
% \fi
%
% For peerreview papers, this IEEEtran command inserts a page break and
% creates the second title. It will be ignored for other modes.
\IEEEpeerreviewmaketitle

\section{Introduction}
\IEEEPARstart{T}{he} 
study of communication protocols is crucially important from the perspective of building quantum internet and the development of quantum technologies \cite{Muralidharan2016,Pompili_2021, kimble2008quantum} (see \cite{Wehner2018} and references therein). Multipartite settings are in particular yet to be fully understood due to intrinsically quantum effects such as entanglement \cite{Azuma2021,Carrara2021}.

Arguably, one of the most basic multipartite setups is that multiple access channels (MAC) with two senders and a single receiver. 
MACs and their capacity regions have been studied in the early years of quantum information theory \cite{winter2001capacity, hsieh2008entanglement,yard2008capacity}. Recently, it has been shown that finding the capacity region even in that simplest scenario is NP-hard and that entanglement increases its classical capacity \cite{leditzky2020playing}. Here, we go beyond these results and consider the security aspects of MACs.

We consider a generalized MAC scenario with two senders ($B_1$ and $B_2$) and a single receiver  $A=A_1A_2$. In this scheme, the state is first distributed among all the three parties. Next, the two senders perform
quantum operations on their shares and resend
them to the receiver. The quantum MAC part of the channel is followed by classical feedback from the receiver to the senders. We refer to this scenario as a {\it generalized multiple access channel scenario} (GMAC), which is depicted in figure \ref{fig:setup}.

\begin{figure}[t]
\centering
 \includegraphics[width=1\columnwidth,keepaspectratio,angle=0]{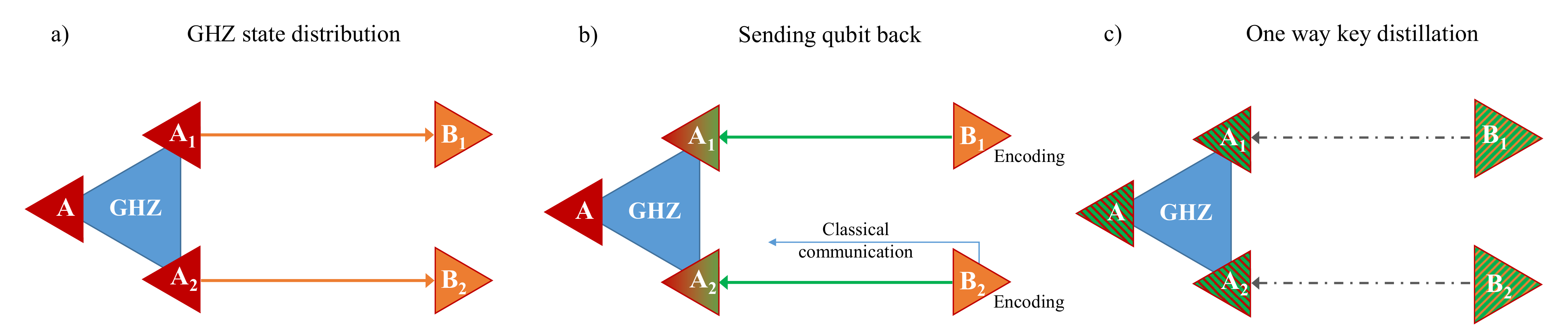}
\caption{ The presented GMAC protocol has three phases: a) Alice distributes two subsystems of a three-qubit GHZ state among senders (Bobs1 and Bob2) and keeps one qubit with herself. b) Both the senders perform encoding with unitary transformations, send back the qubits to Alice, after which Bob2 announces $1$ bit of his encoding. c) Alice performs key distillation protocols with  Bob1 and separately with Bob2, in both cases using one-way classical communication feedback. 
}
\label{fig:setup}
 \end{figure}

In our first main result, we show that in the the above scenario, given that the parties observe (i) high correlations $I(A_1:B_1)\geq H(A_1) -\delta_1$ and $I(A_2:B_2)\geq H(A_2) -\delta_2$ and (ii) low correlations between
systems $B_1$ and $B_2$ ($I(B_1:B_2)\leq \delta_3$), there is a protocol for them to achieve the following key rates between $A_1:B_1$ and $A_2:B_2$ respectively:
\begin{align}
    r({\cal P}_1) \geq  I(A_1:B_1)- I(A_1:E|B_2)- \delta_1 - 2\delta_2-\delta_3\\
    r({\cal P}_2) \geq  I(A_2:B_2)- I(A_2:E|B_1) -2\delta_1-\delta_2-\delta_3
\end{align}
In the above, $I(.)$ stands for quantum mutual information, and $H(.)$ for the Shannon entropy.
The bounds realizing the above rates
are based on the fundamental one-shot lower-bounds
by Joseph Renes and Renato Renner
\cite{RR}.
Given the general formulae, it remains to provide the honest parties a way to lower bound the RHS of the above quantities based on their statistics. Hence, as our second main result, we present such a protocol and study its rate for selected channels.

A trivial example of a GMAC channel is 
that of two separate channels, each is realising a separate protocol with the receiver. However, in some cases, the receiver node may have low computational capabilities. Then one should aim to reduce the number of qubits processed in that node. To effect this, we base our protocol on superdense coding \cite{BennettWisener}. 

Superdense coding is one of the purely quantum effects that use the phenomenon of entanglement and the unique geometry of quantum states \cite{BennettWisener}. It allows doubling of classical capacity between a sender and a receiver is given access to quantum memory used to store entangled states before the communication rounds. Furthermore, quantum dense coding protocol can be made secure when modified by suitable procedures \cite{Beaudry2013}. This has been shown in the case of a single sender single receiver scenario against a quantum adversary.

Such a protocol requires quantum memory
at Alice's side. A non-trivial example of a GMAC, which we further consider, is when the receiver has this memory restricted.
We therefore refer to the multipartite dense-coding effect \cite{BrussALMSS2005-multidense}. When sharing a multipartite GHZ state \cite{GHZk}, one out of $n$ senders can communicate to the receiver at a
double rate ($2$ bits per run), while the others have a connection at the rate of $1$ bit per run. Such multipartite scenarios have recently attracted significant interest due to the application in quantum internet, and the study of secure communication over quantum networks \cite{kimble2008quantum}. Here, we present a secure multipartite protocol for quantum key distribution that allows a sender Alice to share secret keys with two receivers Bob1 and Bob2. In doing so, we generalize the security proof of (suitably modified) secure dense coding protocol (SDC) to two senders and a single receiver. 

Beyond quantum internet, the protocol can be applied to mobile quantum networks such as quantum communication between ground stations and satellites \cite{yin2020entanglement}, airplanes \cite{pugh2017airborne, nauerth2013air} and drones \cite{hill2017drone, liu2020drone}. In particular, one can picture our scenario as a single station and two drones looking for some object on the Earth. They are sent to two different areas for searching. The first to find the object should have a better connection to send the data faster. Our protocol is symmetric, meaning that both drones can, upon agreement, have the privilege to obtain higher channel capacity (see figure \ref{fig:drones}).

Security proof of the dense coding protocol in the single sender and single receiver case, as shown in \cite{Beaudry2013}, is based on two important facts. The first one is the uncertainty relation for complementary observables \cite{Berta2010} that are used in the protocol (generalizations of $X$ and $Z$
measurements of the famous BB84 
protocol \cite{BB84}). The second one leverages the observation that every measurement-based protocol can
be purified - the technique introduced by Shor and Preskill \cite{shor-preskill}. We exploit these ideas; however, the proposed protocol has an intrinsically multipartite part. Namely, one of the two senders (Bobs) sends a single bit of information to allow the receiver (Alice) the unambiguous decoding. The protocol can be made symmetric with respect to the senders. Namely, by the time-sharing technique, the two senders can
reach (in the ideal case) any two capacities whose sum is $3$ bits, and maximal of which is less than or equal to $2$.
The protocol leverages GHZ states of this protocol
is that there is only $1$ ``e-bit''
of entanglement between receiver and senders in place of $2$ e-bits when two separate links are set. It saves not only entanglement but also local quantum memory.

% Finally, for a given noisy quantum state, one can ask 
% if the amount of secure key achievable via the state is equal to the public communication capacity of this state. We study this problem by introducing the so-called {\it security fraction}. It is the ratio between the private communication that can be achieved by a protocol by means of a quantum state $\rho_{AB_1B_2}$, and the
% public communication that can
% be achieved by means of this state.
% As we show for certain noisy channels, there exist encodings
% for which the security fraction is equal to $1$.

\subsection{Outline:}
The paper is organized as follows. Section \ref{sec:notation} is devoted to notation and known facts. In Section \ref{sec:DC_def}, we define the dense coding protocol in the multipartite scenario  of two senders and a single receiver. In Section
\ref{sec:secure-dc} we present the $2$-$1$ secure key distribution protocol based on dense coding.
Section
\ref{sec:purification_protocol} is devoted to the presentation of the purified version of the $2$-$1$ this secure dense coding protocol.
The closed formula for the key rate of the protocol is derived further in Section \ref{sec:derivation_of_key}.
We then consider special attacks (noise)
on quantum channels between senders and
the receiver in Section \ref{sec:noisy_case}.  We conclude
with the Discussion Section \ref{sec:discussion}.

\begin{figure}[h]
\centering
 \includegraphics[width=0.7\columnwidth,keepaspectratio,angle=0]{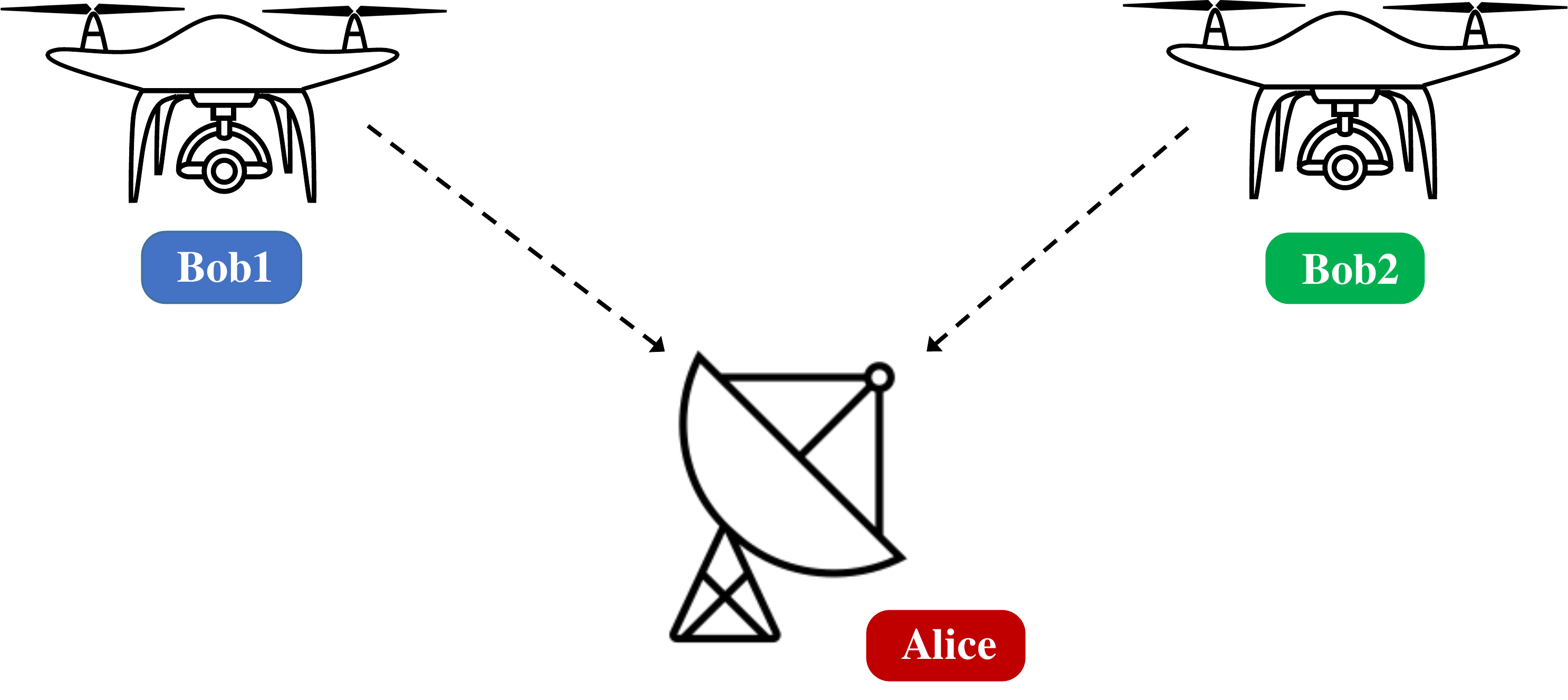}
\caption{ A possible variant
of the secure dense coding protocol with two senders (here depicted as drones) and a single receiver on the ground.}
\label{fig:drones}
 \end{figure}

\section{Preliminaries and notation}\label{sec:notation}
%In this section, we introduce the notation  used in the following sections.
\subsection{Pauli Matrices}
In the  Hilbert space ${\cal H}^2$, of dimension $2$, the three Pauli matrices $\sigma_i$, $i = x,y,z$ are defined as
\begin{equation}\label{eq:Pauli}
 \sigma_x=\left(
\begin{array}{cc}
0 & 1 \\
1 & 0
\end{array}\right)
,~~~~~~~~~~
\sigma_y=\left(
\begin{array}{cc}
0 & -i \\
i & 0
\end{array}\right)
,~~~~~~~~~~
\sigma_z=\left(
\begin{array}{cc}
1 & 0 \\
0 & -1
\end{array}\right).
\end{equation}
These are self adjoint operator and satisfy $\{\sigma_i,\sigma_j\} = 2\delta_{ij}$.
\subsection{Identity operator}
The identity operator $\mathbb{I}_2$, in  ${\cal H}^2$, is defined as $\mathbb{I}_2 =\left(
\begin{array}{cc}
1 & 0 \\
0 & 1
\end{array}\right)$.
\subsection{Mutually orthogonal unitary operators} 
A set of unitary operators $\{W_i\}$, with $W_i^\dagger W_i = \mathbb{I}, ~\forall i$, is called mutually orthogonal \cite{Hiroshima_2001}, if it satisfies
\begin{eqnarray}
\text{tr}(W_iW_j^\dagger) &=& d ~\delta_{ij}, \\
\sum_i W_i \Xi W_i^\dagger &=& d ~\text{tr} (\Xi),
\end{eqnarray} 
for any operator $\Xi$. Here $d$ is the dimension of the Hilbert space they belong to. 
Pauli matrices together with $\mathbb{I}_2$ form a set of mutually orthogonal unitary operators in ${\cal H}^2$.
\subsection{Shannon Entropies and Mutual information \cite{Shannon48,Nielsen-Chuang}}
Suppose $A$ and $B$ are two random variable with  a joint probability distribution $p(a,b)$, where $a \in A$ and $b \in B$, satisfying $\sum_{a,b} p(a,b) = 1$. The  Shannon entropies are defined as 
\begin{equation}
  H(A,B) = -\sum_{a,b} p(a,b)\log_2 p(a,b),
\end{equation} 
and 
\begin{equation}
    H(A) = -\sum_{a} p(a)\log_2 p(a),
\end{equation}
where $p(a) = \sum_b p(a,b)$, similarly for $H(B)$. The Shannon entropy quantifies the amount of ``uncertainty" inherent to the given random variable. \\
Mutual information of two random variable $A$ and $B$ is a measure of the mutual dependence between the two variables and is defined as
\begin{eqnarray}
   I(A:B) &=& H(A) + H(B) - H(A,B)  \\
   &=& H(A) - H(A|B) = H(B) - H(B|A),
\end{eqnarray}
 where $H(A|B)$, is the conditional Shannon  entropy of the random variables $A$ and $B$ and quantifies the randomness remaining in $A$ upon knowing the value of $B$ and is given by 
\begin{eqnarray}
   H(A|B) =  H(A,B)  - H(B).
\end{eqnarray}
\subsection{von Neumann Entropies and Quantum Mutual information \cite{Nielsen-Chuang}}
For a quantum state $\rho_A \in {\cal H}^A$, the von Neumann entropy $S(A)_{\rho_A}$, is defined as 
\begin{equation}
    S(A)_{\rho_A} = - \tr(\rho_A \log_2 \rho_A).
\end{equation}
For a bipartite quantum state $\rho_{AB} \in {\cal H}^A \otimes {\cal H}^B$, the quantum mutual information, $I(A:B)_{\rho_{AB}}$ is 
\begin{eqnarray}
   I(A:B)_{\rho_{AB}} &=& S(A)_{\rho_A} +S(B)_{\rho_B} -S(AB)_{\rho_{AB}} \\
   &=& S(A)_{\rho_A} - S(A|B)_{\rho_{AB}} = S(B)_{\rho_B} - S(B|A)_{\rho_{AB}},
\end{eqnarray}
where $S(AB)_{\rho_{AB}}$ is the von Neumann entropy of the bipartite state $\rho_{AB}$ and $S(A|B)_{\rho_{AB}}$ denotes  conditional von Neumann entropy, defined as 
\begin{equation}
    S(A|B)_{\rho_{AB}} =  S(AB)_{\rho_{AB}} - S(B)_{\rho_B}.
\end{equation}
%$S(AB) - S(B)$ of the state $\rho_{AB}$, while  is  and equals 
\subsection{Entropic uncertainty relation}
Following \cite{Beaudry2013} we utilise a fundamental result called the {\it entropic uncertainty relation} \cite{Berta2010, tomamichel2012framework}. Namely for 
any two POVMs $\{M_z\}_z$ and
$\{M_x\}_x$ of a measurement applied to system $A$ on any state $\rho_{ABE}$ and resulting in
classical-quantum states 
$\rho_{XE}=\tr_B (M_x\otimes \mathrm{I}_{BE}\rho_{ABE})$ and $\rho_{ZB}=\tr_E( M_z\otimes\mathrm{I}_{BE}\rho_{ABE})$ 
we have:
\begin{equation}
S(Z|B)_{\rho_{ZB}} + S(X|E)_{\rho_{XE}} \geq \log_2 \frac{1}{\gamma}    
\end{equation}
where $\gamma= \max_{x,z}||M_z M_x||_{\infty}^2$. Here $||X||_\infty=\max \{\bra{\psi}X\ket{\psi}: \bra{\psi}\ket{\psi}=1\}$. 

\subsection{Classical-classial Quantum state}
In the analysis of the security of the protocol, the states under consideration are
partially classical, i.e., diagonal in the computational basis
in two subsystems and quantum in the other. We refer to such states as  ``classical-classical-quantum'' states or shortly ccq states. 
$\sum_{ijk} p_{ijk} \ket{ijk}\bra{ijk}_{AB_1B_2}\otimes \rho^{(ijk)}_E$ is an example of a ccq state with the two
classical systems held by $A$ and collectively $B_1B_2$ while Eve possesses a quantum system.

\subsection{Notations}
\begin{itemize}
\item Throughout the manuscript, we use a basis in the Hilbert space of 3-qubit system, where GHZ state is one of the elements. However, we use two different labelings of its elements namely, $\{\ket{G^s(i,j,k)}\}_{i,j,k = 0}^1$ for different values of $s = 0,1$, where 
\begin{equation}
    \ket{G^s(i,j,k)} = \frac{1}{\sqrt{2}} \sum_{l = 0}^1 (-1)^{l.(j \oplus s)}\ket{l,j\oplus l,k \oplus l},
\end{equation}
\item Four orthogonal Bell states have been considered as the basis for the 2-qubit system, with the following compact notation
\begin{equation}
    \ket{B(x,y)} = \frac{1}{\sqrt{2}} \sum_{l = 0}^1 (-1)^{l.y} \ket{l, x\oplus l}.
\end{equation}

\item Due to the nature of the dense coding protocol, the qubits between senders and the receiver travel twice - forth and back.
We model this as
quantum channels acting on
the transmitted qubits as follows:
\begin{itemize}
    \item {\it Forward transmission channel}: Quantum channel ${\cal E}^f_{\substack{A_1 \rightarrow B_1\\ A_2 \rightarrow B_2}} $, from Alice to the senders used to transmit the two subsystems of a  3-qubit GHZ state.

    \item {\it Backward transmission channel}: Quantum channel ${\cal E}^b_{\substack{B_1 \rightarrow A_1\\ B_2 \rightarrow A_2}} $, from  the senders back to Alice used to transmit the encoded 3-qubit noisy GHZ state or the measured state.
\end{itemize}
\item In this manuscript the eigenvectors of $\sigma_x$, $\{\ket{\pm} = \frac{1}{\sqrt{2}}(\ket{0} \pm \ket{1}) \}$ are denoted in a compact notation $\{\ket{\alpha_\vdash} = \frac{1}{\sqrt{2}}(\ket{0}+ (-1)^\alpha \ket{1})  \}$ for $\alpha \in \{0,1\}$, and the eigenvectors of $\sigma_z$ in the standard form of computational basis $\{\ket{\alpha}, \alpha \in \{0,1\}\}$.
\end{itemize}

\section{Multiparty Dense coding protocol: Two senders and a single receiver}\label{sec:DC_def}

  In this section, we shall discuss the multiparty quantum dense coding (DC) \cite{BennettWisener} protocol involving two senders and a single receiver in greater detail. Both the senders, called  Bob1 and Bob2,  are in a space-like separated location, and hence their encoding procedures are local. Suppose  the senders intend to send some classical information to a common receiver, Alice. The amount of information one sender wants to send is double the length of the other sender. We assume Bob1 wants to communicate a  message of length 2-bit, without loss of generality, whereas Bob2 needs to send only one bit of information.
 % Moreover, the protocol should be such that which sender need to send two bits of information is not predecided,  and only a single bit can be sent from each of the senders to the receiver. 
   In such a situation, a three-party quantum superdense coding protocol proceeds as follows: \cite{BrussALMSS2005-multidense,ShadmanNoise}, 
   \begin{itemize}
       \item A three-qubit GHZ state, \cite{GHZk}, $\ket{GHZ}_{AB_1B_2} = \frac{1}{\sqrt{2}}(\ket{000} + \ket{111})_{AB_1B_2}$ is shared between Alice, Bob1 and Bob2\footnote{The three-qubit GHZ state is the optimal quantum state of the lowest dimension required to be shared between two distant sender wishing to send three bits of classical information to a common receiver using a DC protocol. This follows from the fact that the DC capacity of GHZ state has been proven to be  $3$ \cite{BennettWisener, Hiroshima_2001}}.
       \item  Bob1 encodes his two bits $(x,y)$, where $x,y \in \{0,1\}$, in his part of shared GHZ state, by performing unitary operation out of the set of four operations labeled $U^{xy}$.
        %    As Bob1 wants to send two bits to Alice, he will apply four possible unitary rotation in his part of the shared GHZ state. 
        \item  Bob2, the other sender, wishes to send a single bit $z \in \{0,1\}$ and chooses a unitary $U^z$,  to encode his bit $z$, on his part of the shared  GHZ. 
        \item After performing the unitary encoding, both the senders send their encoded part of the shared GHZ, to Alice via a quantum channel. 
       % by both the senders, the encoded state 
        \item Alice receives all the subsystems of the shared state, which is now a 3-qubit encoded state $\ket{G(x,y,z)}$,  given by
        \begin{equation}
            \ket{G(x,y,z)} = \mathbb{I}_2 \otimes U^{x,y} \otimes U^{z} \ket{GHZ}.%_{AB_1B_2}
        \end{equation}
        She performs a  joint measurement on the total system, 
         and decodes the messages $(x,y,z), ~x,y,z \in \{0,1\}$, from her measurement outcome.  
   \end{itemize}
   
   \begin{table}
    \caption {This table shows the decoding procedure by Alice, when she performs a measurement in the complete ${\cal B}_{GHZ}$ basis, and Bob2 discloses his auxiliary  bit  $s = 0$. The first column shows Alice's measurement outcome, the second column and third column shows Bob1's and Bob2's information that they aim to communicate.}
\label{Table:Secure_DC}
\centering
\renewcommand{\arraystretch}{2.1}% Spread rows out...
  \begin{tabular}{|>{\centering}m{5cm}|  >{\centering}m{2.2cm}|  >{\centering\arraybackslash}m{2.2cm}|} 
  \hline
  \bf Measurement Outcome & \multicolumn{2}{c|}{\bf Decoded in part of } \\
  \hline
  Alice & Bob1 & Bob2 \\
  \hline
  \hline
   $\frac{1}{\sqrt{2}}(|000\rangle + |111\rangle)_{AB_1B_2}$ & 0~0  & 0   \\ 
   \hline
   $\frac{1}{\sqrt{2}}(|000\rangle - |111\rangle)_{AB_1B_2}$ & 0~1  & 0   \\ 
   \hline
   $\frac{1}{\sqrt{2}}(|010\rangle + |101\rangle)_{AB_1B_2}$ & 1~0  & 0   \\ 
   \hline
   $\frac{1}{\sqrt{2}}(|010\rangle - |101\rangle)_{AB_1B_2}$ & 1~1  & 0   \\ 
 \hline
 \hline
   $\frac{1}{\sqrt{2}}(|001\rangle + |110\rangle)_{AB_1B_2}$ & 0~0  & 1   \\ 
   \hline
    $\frac{1}{\sqrt{2}}(|001\rangle - |110\rangle)_{AB_1B_2}$ & 0~1  & 1   \\ 
   \hline
    $\frac{1}{\sqrt{2}}(|011\rangle + |100\rangle)_{AB_1B_2}$ & 1~0  & 1   \\ 
   \hline
    $\frac{1}{\sqrt{2}}(|011\rangle - |100\rangle)_{AB_1B_2}$ & 1~1  & 1   \\ 
   \hline
  \end{tabular}
  \end{table}

 Note that the messages can be sent without any ambiguity only if both the senders encode their messages by a suitable choice of the unitary operators in such a way that the total shared states for different messages become mutually orthogonal \cite{BennettWisener}. If the transmission channels of the encoded states are noisy, then the two senders (encoder) and the receiver (decoder) need to optimise over all possible encoding and decoding procedures. \cite{Hiroshima_2001,HoroCapacity,BrussALMSS2005-multidense,ShadmanNoise, DDCReznik,ROYDDC}.
   %The choice of unitary operations by the senders should be such that the corresponding encoded states are easy to distinguish. % those set of unitaries in such a way that makes the encoded GHZ states easy to distinguish, i.e., mutually orthogonal. 
   
   Let us start with the simplified situation, in which there is no noise in the system. Hence, Alice, Bob1, and Bob2 share a pure GHZ state. Moreover, the encoded state is also transferred back through a noiseless quantum channel. Then, one of the possible choices of unitary encodings is any set of mutually orthogonal unitary operators in the Hilbert space of qubit ${\cal H}^2$ \cite{Hiroshima_2001}. In particular, one can choose the three Pauli matrices along with the identity operator $\mathbb{I}_2$. 
  % It was found that for a shared three-qubit GHZ state \cite{GHZk}, $|GHZ\rangle_{AB_1B_2} $, the unitary operators are chosen to be from the set of mutually orthogonal unitary operators, in such a way that encoded states become mutually orthogonal.
  % both Bob1 and Bob2 will be able to send their respective bit of information to Alice if they choose special unitary operators from a mutually orthogonal set of unitaries. 
  % In the case of single-qubit operations, an easy choice of the 

  \begin{table}
   \caption{This table shows the decoding procedure by Alice, when she performs measurement in the complete ${\cal B}_{GHZ}$ basis, and Bob2 discloses his auxiliary  bit  $s = 1$. The first column shows Alice's measurement outcome, the second column and third column shows Bob1's and Bob2's information they want to communicate. The entire table matches with Table \ref{Table:Secure_DC}, if Bob1's second bit is replaced by $y \rightarrow y \oplus 1$.}
\label{Table:Secure_DC2}
\centering
\renewcommand{\arraystretch}{2.1}% Spread rows out...
  \begin{tabular}{|>{\centering}m{5cm}|  >{\centering}m{2.2cm}|  >{\centering\arraybackslash}m{2.2cm}|} 
  \hline
  \bf Measurement Outcome & \multicolumn{2}{c|}{\bf Decoded in part of } \\
  \hline
  Alice & Bob1 & Bob2 \\
  \hline
  \hline
   $\frac{1}{\sqrt{2}}(|000\rangle + |111\rangle)_{AB_1B_2}$ & 0~1  & 0   \\ 
   \hline
   $\frac{1}{\sqrt{2}}(|000\rangle - |111\rangle)_{AB_1B_2}$ & 0~0  & 0   \\ 
   \hline
   $\frac{1}{\sqrt{2}}(|010\rangle + |101\rangle)_{AB_1B_2}$ & 1~1  & 0   \\ 
   \hline
   $\frac{1}{\sqrt{2}}(|010\rangle - |101\rangle)_{AB_1B_2}$ & 1~0  & 0   \\ 
 \hline
 \hline
   $\frac{1}{\sqrt{2}}(|001\rangle + |110\rangle)_{AB_1B_2}$ & 0~1  & 1   \\ 
   \hline
    $\frac{1}{\sqrt{2}}(|001\rangle - |110\rangle)_{AB_1B_2}$ & 0~0  & 1   \\ 
   \hline
    $\frac{1}{\sqrt{2}}(|011\rangle + |100\rangle)_{AB_1B_2}$ & 1~1  & 1   \\ 
   \hline
    $\frac{1}{\sqrt{2}}(|011\rangle - |100\rangle)_{AB_1B_2}$ & 1~0  & 1   \\ 
   \hline
  \end{tabular}
  \end{table}

Suppose Bob1 chooses the following unitary operators 
\begin{equation}\label{Eq:encode_uni}
    \left.\begin{array}{l}
        U^{00} = \mathbb{I}_2 \\
        U^{01} = \sigma_z\\
        U^{10} = \sigma_x \\
        U^{11} = -i \sigma_y
        \end{array}\right\}, 
\end{equation}
 whereas,the unitaries
$\{U^z\}$ chosen by Bob2 are $U^0 = I$ and $U^1 = \sigma_x$. After the encoding the shared GHZ state becomes 
\begin{equation}\label{eq:encoding}
\mathbb{I}\otimes U^{xy}\otimes U^{z} \ket{GHZ} = \frac{1}{\sqrt{2}} \sum_{l = 0}^1 (-1)^{l.y}\ket{l,x\oplus l,z \oplus l} = \ket{G^0(x,y,z)},
\end{equation} 
%where $\bar{x} = x \oplus 1$, and $\bar{z} = z \oplus 1$.
It is clear that the states, belonging to the set  $\{\ket{G^0(x,y,z)}\}_{x,y,z = 0}^1$, are mutually orthogonal and they form a complete set of basis vectors in the Hilbert space of three qubits $({\cal H}^2)^{\otimes 3}$. 
%After performing the unitary operations, both Bob1 and Bob2, will send their encoded part to Alice with the help of a noiseless quantum channel. To decode the messages from both the Bobs, 
Alice can easily decode the messages by  performing a global  measurement in the basis
\begin{equation}\label{Eq:GHZ_basis}
{\cal B}_{GHZ} = \big\{\ket{G^0(x,y,z)}\bra{G^0(x,y,x)}, ~~x,y,z \in \{0,1\}\big\}.
\end{equation}
%Decoding procedure will follow as the states are mutually orthogonal, 

Moreover,  Bob2, can also choose other two Pauli matrices  to encode his bit $z \in \{0,1\}$,  i.e.,  $U^0 = \sigma_z$ and $U^1 = -i \sigma_y$, and the protocol will work equally well, as Eq. (\ref{eq:encoding}) then becomes
\begin{equation}\label{eq:encoding2}
\mathbb{I}\otimes U^{xy}\otimes U^{z} \ket{GHZ} = \frac{1}{\sqrt{2}} \sum_{l = 0}^1 (-1)^{l.(y\oplus 1)}\ket{l,x\oplus l,z \oplus l} = \ket{G^1(x,y,z)}.
\end{equation} 
This new set  $\big\{\ket{G^1(x,y,z)}\bra{G^1(x,y,z)}\big\}_{x,y,z = 0}^1 =: { {\cal B}}_{GHZ}$ also forms the same basis in the Hilbert space of 3-qubit system with a new labeling of its elements. One can easily check that the elements are related by $\ket{G^1(x,y,z)} = \ket{G^0(x,y\oplus 1,z)}$, $\forall x,y,z \in \{0,1\}$.
%vectors also form a complete set of measurement operators, which is the same as ${\cal B}_{GHZ} =  {{\cal B}}_{GHZ}$. 

From Eqs. (\ref{eq:encoding}) and (\ref{eq:encoding2}), it should be clear that we can symmetrize the encoding operations performed by both the senders. We can do so by allowing both of them to encode two bits of information. In that scenario Bob1 encodes $(x,y)$ by applying a unitary $U^{x,y}$ while Bob2 encodes $(z,s)$ by applying $U^{z,s}$, 
  as in Eq. (\ref{Eq:encode_uni}) resulting in the following shared state
%The GHZ state, transforms as follows under such encodings
\begin{equation}\label{eq:encoding3}
\mathbb{I}\otimes U^{xy}\otimes U^{zs} \ket{GHZ} = \frac{1}{\sqrt{2}} \sum_{l = 0}^1 (-1)^{l.(y\oplus s)}\ket{l,x\oplus l,z \oplus l} = \ket{G^s(x,y,z)} = \ket{G^y(x,s,z)}.
\end{equation} 
The above equations explicitly show that the second bit $``y"$ of Bob1 and $``s"$ of Bob2 are treated equally. Any one of the bits $``y"$ or $``s"$ needs to be shared publicly for the proper execution of the secure dense coding protocol. Whoever needs to send more information can keep his second bit while the other sender can disclose his bit. The choice can be decided by mutual consent. Moreover, leveraging time-sharing, Bob1 can disclose his $``y"$ bit half of the time and similarly Bob2 his $``s"$ if both of them need to communicate an equal amount of information. 

In the remaining part of our paper, we assume, without any loss of generality, that it is Bob2 who discloses his second bit $``s"$ publicly. We call it an auxiliary bit. After Alice learns the value of the auxiliary bit she performs the same ${\cal B}_{GHZ}$ basis measurement, but her decoding procedure, i.e., the identification of the messages is chosen from Table \ref{Table:Secure_DC}, if $s = 0$ and from Table \ref{Table:Secure_DC2}, if $s = 1$.

% by performing unitary operations , on ther 
% whereas whoever need to send  the choise of the number of the bits the senders want to send. Any one of the sender can send 2 bits and the other 1 bit. But previously both the senders can perform equal amount of encoding operations namely, .
% Note here, that the symmetry of the shared state does not impose any restriction on the choices of the number of bits the senders want to send. Any one of the Bobs is able to send two bits and the other one only one bit, which can be decided by mutual consent. 
%The entanglement content of the shared state in the senders:receiver bipartition is one ebit

%Let Bob2 also chooses his unitary operators randomly, from the four mutually orthogonal unitaries $\{I, \sigma_x, \sigma_y, \sigma_z\}$. Number them as it is done for Bob1,   $I = U^{00}, \sigma_z = U^{01}, \sigma_x = U^{10}$ and $\sigma_y = U^{11}$. Then we have a concise notation for the decoding:

%where by $\bar{x}$ we mean $x\oplus 1$.

%$$ \ket{B(r,s)} = \frac{1}{\sqrt{2}}\big( \ket{0,r} (-1)^s \ket{1,\bar{s}} \big) $$

\begin{figure}[t]
\centering
 \includegraphics[width=1\columnwidth,keepaspectratio,angle=0]{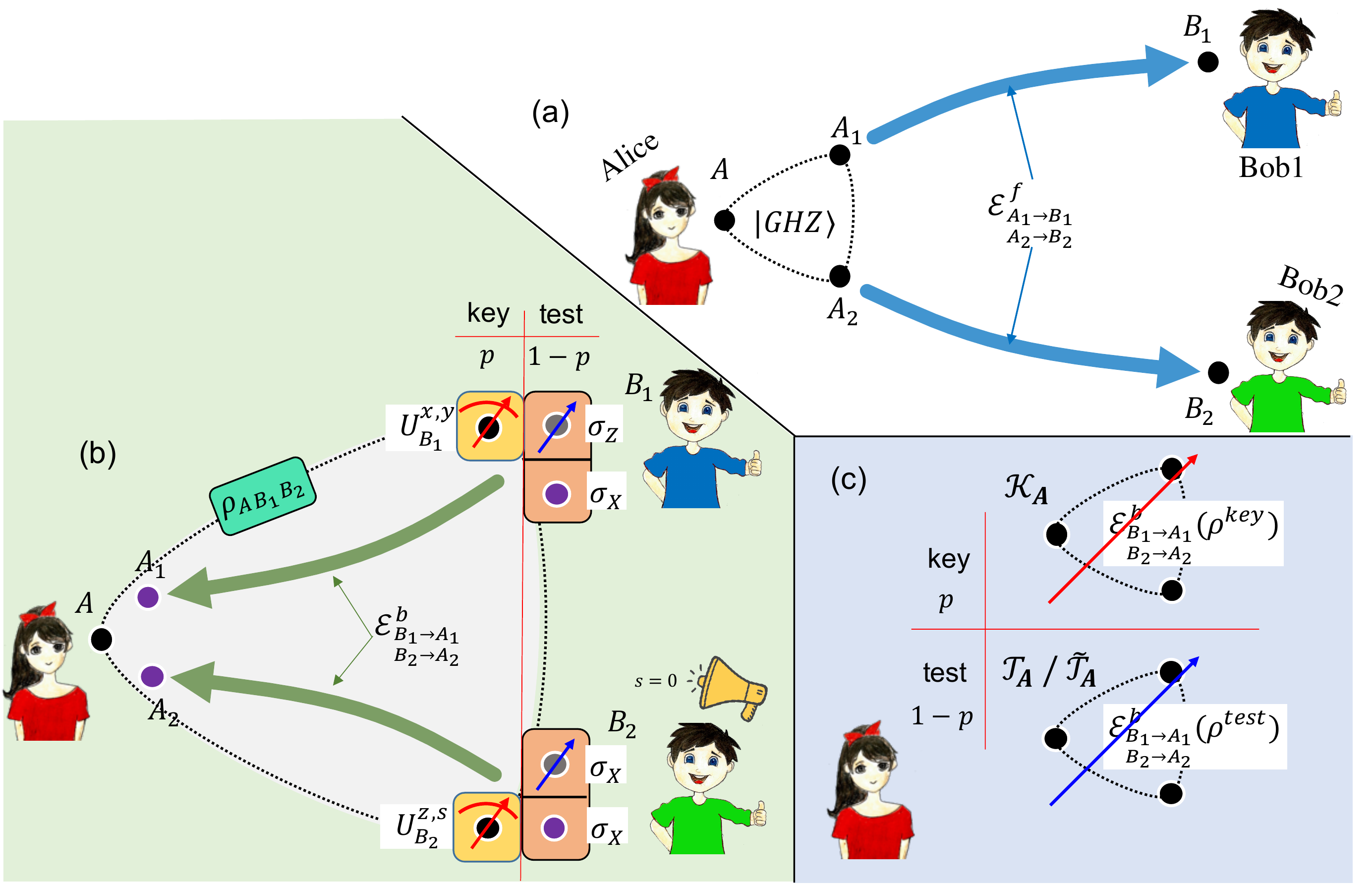}
\caption{Detailed schematic  diagram for 2-1 secure dense coding protocol.  In panel $(a)$, Alice prepares a three-qubit $\ket{GHZ}_{AA_1A_2}$ state. She keeps one qubit (part $A$), with her and sends the other two qubits, $A_1$ and $A_2$ to the two senders Bob1 and Bob2,  by using a quantum channel ${\cal E}^f$. The channel has been shown by the blue arrows in panel (a). After receiving the noisy state $\rho_{AB_1B_2}$, both the senders perform their local  encoding operations, $\{U^{\alpha,\beta}_{B_i}\}_{\alpha,\beta = 0}^1$ for $i = 0,1$, chosen uniformly from the three Pauli matrices and identity operators, as part of their key generation run which they perform with  probability $p \approx 1$.  With the remaining probability, $1 - p$, both of them perform the test run, during which they locally measure their shared part of $\rho_{AB_1B_2}$  in the eigenbasis of $\sigma_z $, $\{\ket{0}, \ket{1}\}$, for Bob1 and in the eigenbasis of $\sigma_x $, $\{\ket{+}, \ket{-}\}$, for Bob2.
They also locally prepare states randomly chosen from the eigenbasis of $\sigma_x$, i.e.,  $\{\ket{+}, \ket{-}\}$.  Bob2 announces the value of his auxiliary bit $s$ publicly, irrespective of his test and key generation run.
This part of the protocol has been depicted in panel (b). 
In the next step, the senders send their states back to Alice by using a backward transmission quantum channel ${\cal E}^b$ as depicted in (b). 
After getting the shared state back in her laboratory, Alice performs a  measurement on the entire three qubit states, with probability $p$, denoted by the ${\cal K}_{AA_1A_2} = \{\ket{G^s(i,j,k)}\bra{G^s(i,j,k)}_{AA_1A_2}\}_{i,j,k= 0}^1$, to decode the secure messages she got from both senders. Otherwise, with probability $1-p$ she measures either in the basis of ${\cal T}_{AA_1A_2} = \{ \hat{T}^{i,j}_{AA_1A_2}\}_{i,j = 0}^1$ or $\tilde{\cal T}_{AA_1A_2} = \{(\hat{\tilde{T}}^{k}_s)_{AA_1A_2}\}_{k = 0}^1$ as part of her test run, which is depicted in panel (c), with $\mathbf{A} = AA_1A_2$. }
\label{fig:SDC-schematic}
 \end{figure}

  \section{Multiparty Secure key distribution protocol based on multiparty dense coding protocol}
\label{sec:secure-dc}

This section describes a secure key distribution protocol based on the multipartite dense coding discussed earlier. Here, the aim is for 
Alice and  Bob1 and Alice and Bob2 to share independent secret keys.
We refer to this protocol as {\it $2$-$1$ secure dense coding protocol}. The secret key shared between Alice and Bob1 is not correlated with the key shared between Alice and Bob2. Thus the considered scenario is different from the quantum conference key agreement \cite{CKA_review}.

The protocol we are going to present is a generalization of the well-known ping-pong protocol and the dense coding-based QKD protocol proposed by Beaudry {\it et al.} \cite{Beaudry2013}.

In our protocol $\ket{GHZ}_{AB_1B_2}$ state  is shared between  Alice, Bob1 and Bob2, as depicted in figure  \ref{fig:SDC-schematic}. After receiving the respective parts of the shared GHZ, both the senders (Bob1 and Bob2) encode their secret key bits by performing unitary operations (see Sec \ref{sec:DC_def}). This step we shall call the {\it DC key generation run by senders}. The honest parties choose this action with probability $ p \approx 1$. 
 After the encoding procedure, both the senders send their qubits back to Alice via a quantum channel. Alice then decodes the secret key bits by performing ${\cal B}_{GHZ}$ basis measurements upon the suitable choice of the Table dependent on the auxiliary bit.
 
In the ideal situation, when the shared state has not been affected by any noise, Alice can share two bits of secure keys with one sender and an additional one bit with the other in each run of the protocol. 
However, in general, one can not assume the ideal channel between senders and a receiver. For example, a malicious eavesdropper, Eve, can try to intercept the information. Eve can attack the channel in two possible ways, either at the time of sharing the $\ket{GHZ}$ state or at the time of transmitting its encoded subsystems back. To assure the security of this protocol, the honest parties need to know how close the shared state is to the pure $\ket{GHZ}$, which determines how correlated the eavesdropper can be with the honest parties. Hence, a test run needs to been carried out with probability $1-p$ to detect the presence of an eavesdropper. The test run will be performed by each of the parties independently, with a small but non-zero probability   $1 - p$.
 %To test the authenticity of the channel, all the parties perform a test run with probability $1 - p$.
 
 The steps of  $2$-$1$ secure dense coding (DC) protocol are as follows.

\begin{itemize}
     \item {\bf State preparation:} Alice prepares a three-qubit GHZ state, $\ket{GHZ}_{AA_1A_2}$. She keeps one qubit (part $A$) in her  quantum memory (or ``cloud"), and sends the other two qubits to Bob1 and Bob2.
     
    %  and sends the other two qubits to Bob1 (part $A_1$) and Bob2 (part $A_2$) respectively.
    \item {\bf State transfer:} A forward quantum channel ${\cal E}^f_{\substack{A_1\rightarrow B_1 \\ A_2\rightarrow B_2}}$ is used to transfer the sub-system 
     $A_1A_2$ of the GHZ state to the senders Bob1 and Bob2 respectively. The channel can be noisy resulting in a shared mixed state 
     \begin{equation}
         \rho_{AB_1B_2} = {\cal E}^f_{\substack{A_1\rightarrow B_1 \\ A_2\rightarrow B_2}}\left(\ket{GHZ}\bra{GHZ}_{AA_1A_2}\right),
     \end{equation}
      see figure \ref{fig:SDC-schematic} (a).  
    \item {\bf Key generation run:} Both the senders, with probability $p \approx 1$, initiate the key generation procedure independently,  by performing the following encoding operations, (see figure \ref{fig:SDC-schematic} (b)):
    \begin{itemize}
        \item  {\bf Encoding1:} 
        Bob1 uniformly chooses two bits of key $(x,y)$,  %~x,y \in \{0,1\}$, 
        which are generated from a trusted uniform random number generator. He encodes his two bits $(x,y)$, by performing unitary encoding $U_{B_1}^{x,y}$, given in Eq. (\ref{Eq:encode_uni}), on his part of the shared state $\rho_{AB_1B_2}$ (see  Sec. \ref{sec:DC_def}), where $x,y \in \{0,1\}$.   
        \item {\bf Encoding2:} Bob2 applies uniformly at random one of the four unitary operators $U_{B_2}^{z,s}$, as given in Eq. (\ref{Eq:encode_uni}) on his part of the quantum state. $z$ denotes his secret key bit and $s$ the auxiliary bit, where $z,s \in \{0,1\}$. %A unitary used in each run has to be chosen uniformly at random.
        \item {\bf Transferring states back:} Both the senders  send their respective encoded part back to Alice, by using backward  quantum transmission channel ${\cal E}^b_{\substack{B_1\rightarrow A_1 \\ B_2\rightarrow A_2}}$. Alice receives the state 
        \begin{equation}
            {\cal E}^b_{\substack{B_1\rightarrow A_1 \\ B_2\rightarrow A_2}} \big(U_{B_1}^{x,y} \otimes U_{B_2}^{z,s} \rho_{AB_1B_2} U_{B_1}^{x,y\dagger} \otimes U_{B_2}^{z,s \dagger} \big).
        \end{equation}
        Eve can intercept this channel to learn about the secret key bits of Alice and Bobs.
        %, that can affect form and therefore the capacity of this channel.
        \item {\bf Bit announcement:} Bob2 publicly announces the value of his auxiliary bit $s$.
    \end{itemize}
    \item {\bf Test run: } To detect the presence of eavesdropper, both the senders perform test run independently, with probability $1 - p$ in the following sequence of actions:
\begin{itemize}
\item Bob1 performs a measurement in the eigenbasis of $\sigma_z$, i.e., in the $\{\ket{0}, \ket{1}\} := \{\ket{\alpha}_{\alpha = 0}^1\}$ basis on his part of the shared state and prepares a state taken randomly from the set of the eigenvectors of $\sigma_x$ i.e., from $\{\ket{+}, \ket{-}\} := \{\ket{\beta_\vdash}\}_{\beta = 0}^1$.
\item Bob2 performs a measurement in the eigenbasis of $\sigma_x$ and prepare a state also in the eigenbasis of $\sigma_x$. If his measurement outcome and preparation are same then his auxiliary bit is $s = 0$ and if they differ then he discloses his auxiliary bit to be $s = 1$. 
\item Finally both the senders send their prepared states back to Alice via the backward quantum transmission channel ${\cal E}^b_{\substack{B_1\rightarrow A_1 \\ B_2\rightarrow A_2}}$. 
\end{itemize}
\item  After receiving the states from both Bobs, Alice performs a decoding operation of the key generation run with probability $p$ for each of the senders and a test run with small but non-zero probability $1 - p$ in order to detect the presence of an eavesdropper. The measurement is carried out jointly on all three qubits, as depicted in figure \ref{fig:SDC-schematic} (c).
\begin{itemize}
    \item {\bf With Bob1 :} To share the local keys securely with Bob1, Alice performs key generation run - decoding of the keys sent by Bob1 - with probability $p \approx 1$. With the remaining $1 - p$ she performs the test run.
    \begin{itemize}
        \item {\bf Decoding procedure:} To identify the raw keys shared with Bob1, she performs a rank two projective measurements  ${\cal K}^2_{AA_1A_2} = \{\bar{K}^{i,j}_s\}_{i,j = 0}^1$, which depends on the auxiliary bit $s$, where $\tilde{K}^{i,j}_s = \sum_{k' = 0}^1 \ket{G^s(i,j,k')}\bra{G^s(i,j,k')}_{AA_1A_2}$.
        \item {\bf Testing:} In the test run, she performs a joint measurement in the product  eigenbasis of $(\sigma_z)_A \otimes (\sigma_x)_{A_1} \otimes \mathbb{I}_{A_2}$, denoted as ${\cal T}_{AA_1A_2} = \{\hat{T}^{i,j}_{AA_1A_2}\}_{i,j = 0}^1$, where $\hat{T}^{i,j}_{AA_1A_2} =  \ket{i}\bra{i}_A \otimes \ket{j_\vdash}\bra{j_\vdash}_{A_1} \otimes \mathbb{I}_{A_2}$.
    \end{itemize}
    \item {\bf With Bob2 :}  As for Bob1, Alice applies the key generation and test protocols with probability $p$ and $1 - p$ respectively.
    \begin{itemize}
        \item  {\bf Decoding procedure:} She applies a rank four projective measurements  ${\cal K}^4_{AA_1A_2} = \{\tilde{K}^{k}_s\}_{k = 0}^1$ to obtain the key bit of Bob2, where $\tilde{K}^{k}_s = \sum_{i',j' = 0}^1 \ket{G^s(i',j',k)}\bra{G^s(i',j',k)}_{AA_1A_2}$.
        \item {\bf Testing:} Test run invloves the measurements in the eigenbasis of $\sigma_x$ performed only on the subsystem $A_2$, denoted as $\tilde{\cal T}_{AA_1A_2} = \{(\hat{\tilde{T}}^{k}_s)_{AA_1A_2}\}_{k = 0}^1$, where $\{(\hat{\tilde{T}}^{k}_s)_{AA_1A_2}\}_{k = 0}^1 = \mathbb{I}_{A} \otimes \mathbb{I}_{A_1}\otimes \ket{(k\oplus s)_\vdash}\bra{(k \oplus s)_\vdash}_{A_2} $.
        %four possible Bell basis and 
        %the eigenbasis of $\sigma_x$, i.e., $\ket{B(i'j)}\bra{B(i'j')}_{AA_1} \otimes \ket{k'_\vdash}\bra{k'_\vdash}_{A_2} = \hat{T}^{i',j',k'}_{AA_1A_2}$.
        %, which we will denote as $\bar{\cal T}_\mathbf{A} = \{\hat{\bar{T}}^{k',k'}_\mathbf{A}\}_{k,k' = 0}^1$, where $\hat{\bar{T}}^{k,k'}_\mathbf{A} =  \ket{k}\bra{k}_A \otimes \mathbb{I}_{A_1} \otimes \ket{k'_\vdash}\bra{k'_\vdash}_{A_2} $.
    \end{itemize}
\end{itemize}
\item {\bf Classical post-processing: } The classical post-processing consists of two steps. First, Alice runs data compression individually with two Bobs \cite{RR}. Then they run a privacy amplification protocol by applying two-universal hash functions \cite{tomamichel2011leftover}. The respective protocols are described in more details in Remarks \ref{rem:datacompression} and  \ref{rem:privacyamplification}.
\end{itemize}

After the test and the key generation runs, all parties disclose the results of their measurements in the test run and some parts of the key generation run. From the correlation of their measurement outcomes, they estimate certain statistics. If the lower bound for the key rate based on these statistics is not positive, they abort the protocol. 
However, if the rate is positive, they perform classical post-processing on the remaining outcomes of the key generation runs. Explicit form of the acceptance/abortion condition
is provided in Remark \ref{rem:acc_abort}.
{\remark[Conditions for acceptance/aborting]
\label{rem:acc_abort}
After performing both the key generation runs and the test runs, all the parties disclose their measurement outcomes from each test run. Based on these measurement outcomes, all parties can estimate the conditional probabilities given in Eqs. (\ref{Eq:kappa_def})  and (\ref{eq:varrho_def}). They further estimate the key rate by computing the conditional Shannon entropies $( 1/4   \sum_{z,s} H(T_{A_1}|T_{B_1})_{{\sigma^s}(z)})$ and $(1/8 \sum_{xys} H(T_{A_2}|T_{B_2})_{\sigma^s(xy)})$. They abort the protocol if the estimated lower bounds for the local key rates are negative. If both of them are positive, they perform data compression, and privacy amplification adapted to the expected value of the key rate.
}

{\remark \label{rem:datacompression} Data compression protocol makes use of a family of two-universal hash functions $f: \mathcal{X} \rightarrow \{  0,1\}^n  $ i.e. a family of hash functions s.t. for a randomly drawn $f$, the probability of collision $f(x)=f(x')$, for two different inputs $x \neq x'$ is at most $1/2^n$. In a data compression step between Alice and Bob1, Alice chooses a random two-universal hash function $f_1$, applies it to its raw key and sends the result $c_1$ to Bob1. He then applies a pretty good measurement as in Lemma 1 in \cite{RR}.
The protocol is equivalent for Alice and Bob2. 
}
{\remark \label{rem:privacyamplification} In the privacy amplification step, Alice randomly chooses a two-universal hash function $g_1$, announces it publicly, and both Alice and Bob1 apply it to their common string. Similarly, she and Bob2 apply a two-universal hash function $g_2$. Thus, by the Leftover Hash Lemma, the outputs are almost uniformly random against quantum adversaries.
}

{\remark \label{rem:sampling} Note that all of the three parties choose the key generation run randomly with probability $p \approx 1$, and test run with probability $1 - p$. To agree on a random subset of runs for the test, they can use a certain amount of secure key, which they also need for authentication. It is
important to note that they require only $O(\sqrt{n})$ of such key since a random sample of size $\sqrt{n}$ represents the sampled set
with exponential precision \cite{Lo_Chau_Ardechali}. In this way, they can avoid mismatches in the choice of runs for testing to agree on a common subset of all the runs.}

 \section{Purification protocol}
 \label{sec:purification_protocol}
 
In this section, we will show that the encoding operation in the key generation run and the measurements performed in the test run by Bob1 and Bob2 can be purified to a joint von-Neumann measurement on the shared state and some suitably chosen auxiliary state. The {\it purified $2$-$1$ secure dense coding protocol} differs from the $2$-$1$ SDC protocol only in the encoding performed by the senders. In the key generation run, the senders replace Pauli rotation by teleportation of their state via the singlet state. In this way, the output gets rotated by the Pauli operations. Similarly, in the test run, they measure half of the singlet state instead of preparing state in some basis. Since these operations are done on the Bobs' sites, Eve can not tell apart the $2$-$1$ SDC protocol from the purified one. Indeed by assumption, she can not access the sites of Bob1 and Bob2, while their actions result in identical states as in standard $2$-$1$ SDC protocol. We explicitly show the purified encodings below. 

First note that in general the state shared between Alice, Bob1 and Bob2 does not have to be a pure multipartite entangled $\ket{GHZ}$ state. Instead, we write it is as a mixed state $\rho_{AB_1B_2} = {\cal E}^f_{\substack{A_1 \rightarrow B_1\\ A_2\rightarrow B_2}}(\ket{GHZ}\bra{GHZ}_{AA_1A_2})$. We assume that the eavesdropper possess the purificiation of this system, so that the joint state of Alice, Bob1, Bob2 and Eve is $\ket{\psi}_{AB_1B_2E}$. Here $E$ includes a joint presence of the eavesdropper in the key distribution protocol between Alice and Bob1 as well Alice and Bob2 so that $E = E_1E_2$
%.  The additional interfaces are in possession of Eve1 and Eve2, and $\mathbf{B} = B_1B_2$ and $\mathbf{E} = E_1E_2$, where
with $\rho_{AB_1B_2} = \text{tr}_{E}(\ket{\psi}\bra{\psi}_{AB_1B_2E})$.
\color{black}

The encoding operations performed by both senders in their respective parts of the shared $\rho_{AB_1B_2}$ are a completely positive trace preserving (CPTP) maps $\{\Theta^{x,y}_{B_1}\}$ and  $\{\Theta^{z,s}_{B_2}\}$, i.e., 
 %are the CPTP maps for Bob1 and Bob2 respectively, equivalent to the unitary rotation as given in Eq. (\ref{eq:encoding}), on $\rho_{AB_1B_2}$, i.e.,
\begin{eqnarray}\label{eq:Uni_CPTP}
\Theta^{x,y}_{B_1 }\otimes \Theta^{z,s}_{B_2}(\rho_{AB_1B_2}) 
&=& \big(U_{B_1}^{x,y} \otimes U_{B_2}^{z,s}\big) \rho_{AB_1B_2} \big(U_{B_1}^{x,y\dagger} \otimes U_{B_2}^{z,s \dagger}\big).
\end{eqnarray}

\begin{figure}[h]
\centering
 \includegraphics[width=1\columnwidth,keepaspectratio,angle=0]{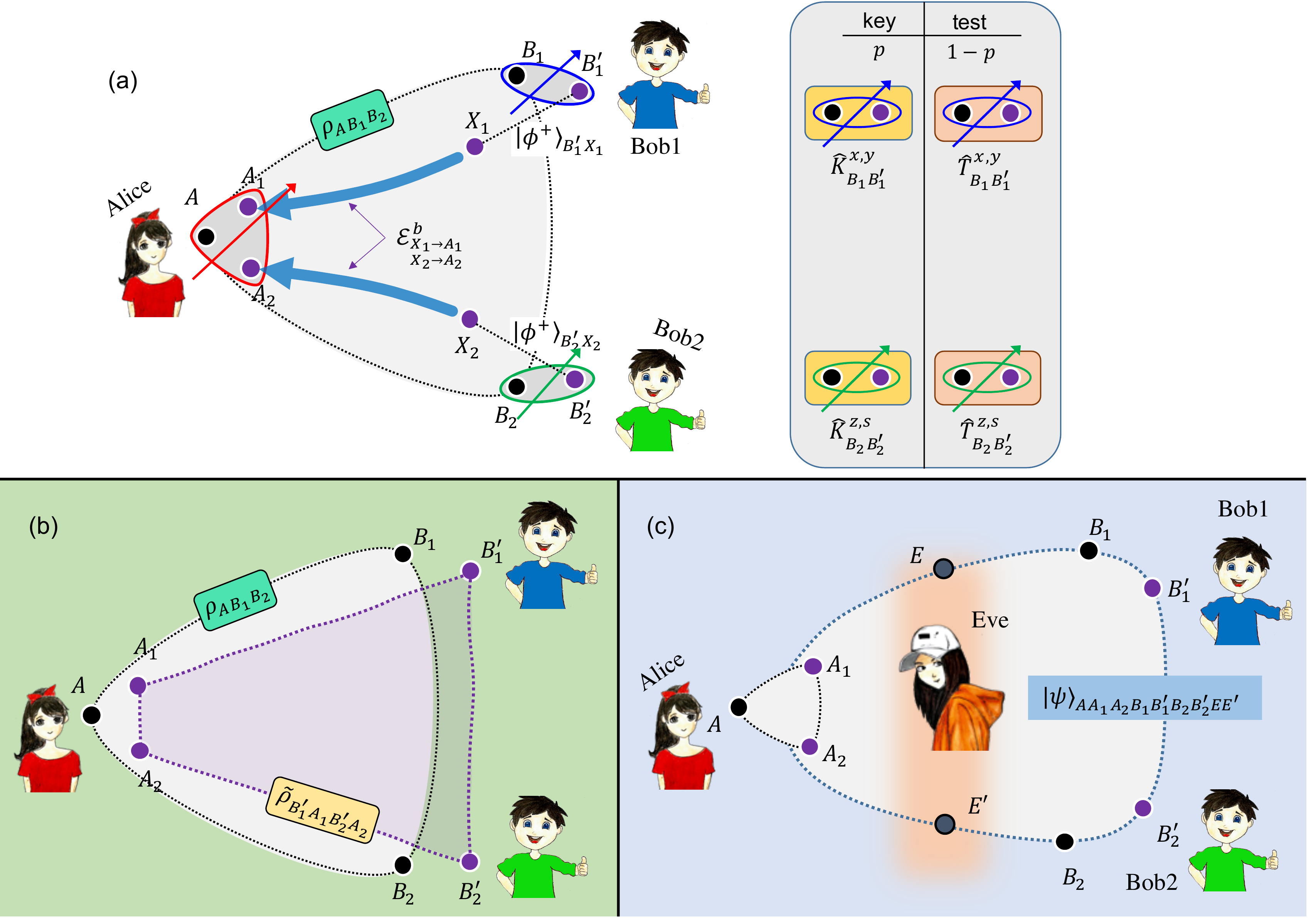}
\caption{Schematic diagram of the {\it purified $2$-$1$ secure dense coding protocol}. The unitary encodings $\{U^{\alpha,\beta}_{B_i}\}_{\alpha,\beta = 0}^1$ for $i = 0,1$ that both senders have performed on their respective parts of the shared noisy GHZ state $\rho_{AB_1B_2}$ can be purified to a joint von Neumann measurement $\{K^{\alpha,\beta}_{B_1B'_i}\}_{\alpha,\beta = 0}^1$, where $K^{\alpha,\beta}_{B_iB'_i} = \ket{B(\alpha,\beta)}\bra{B(\alpha,\beta)}_{B_iB'_i}$,  on the shared part of $\rho_{AB_1B_2}$ and a half of the auxiliary Bell state $\ket{\phi^+}_{B'_iX_i}$ for $i = 1,2$.
The measurement and preparation procedure  performed in the test run can also be purified in a similar manner wherein the joint measurements are $\hat{T}^{x,y}_{B_1B'_1}$  and $\hat{T}^{z,s}_{B_2B'_2}$
 given in Sec. \ref{sec:ratesA1B1} and \ref{sec:ratesA2B2}.
After the measurement, the other subsystem $X_i$ of the auxiliary Bell state is transferred back to Alice via the quantum channel ${\cal E}^b$. The purification protocol is shown in panel (a).  The effect of noise in the forward and backward channels is presented in panel (b). Noise in the backward channel affects the subsystem $X_i$ of the  state $\ket{\phi^+}_{B'_iX_i}$, for $i = 1,2$,  and transforms it to a mixed state $\tilde{\rho}_{B'_1A_1B'_2A_2} $, as shown in Eq. (\ref{eq:noiseonpuri}). We have introduce the eavesdropper in the system by giving her full access to the additional interfaces of the purification $\ket{\Psi}_{AA_1A_2B_1B'_1B_2B'_2EE'}$ of the product of mixed state $\rho_{AB_1B_2} \otimes \tilde{\rho}_{B'_1A_1B'_2A_2}$. The pure state shared among all the parties has been depicted in panel (c).
}
\label{fig:purification}
 \end{figure}

%The superscript $B_j, ~ j \in \{1,2\}$, represents the CPTP map that has been performed by  Bob1 and Bob2, respectively.
In our key generation run, the unitary rotations have been chosen from the three Pauli matrices $\{\sigma_x, \sigma_y, \sigma_z\}$ along with the identity operator $\mathbb{I}_2$, in the Hilbert space of dimension $2$.
Hence, using the teleportation protocol \cite{BBCJPW93} of an unknown quantum state, one can easily find that, the CPTP maps can be purified to a joint von-Neumann measurements $\{\hat{K}^{x,y}_{B_1B'_1}\}_{x,y = 0}^1$ and $\{\hat{K}^{z,s}_{B_2B'_2}\}_{z,s}^1$,
\footnote{This von-Neumann measurement operator should be a member of the complete set of measurement operators $\{\hat{M}^i\}_{i=1}^n$, such that it is projective $ (\hat{M}^{i})^2 = \hat{M}^{i}$ and satisfies the completeness relation $\sum_{i = 1}^n \hat{M}^{i} = \mathbb{I}$.}% and the number of elements in this set $n  = 4$.},
performed on one subsystem of the shared state, $B_j$ and on another subsystem $B'_j$, of an auxiliary Bell state, $\ket{\phi^+}_{B'_jX_j} = \frac{1}{\sqrt{2}}(\ket{00} + \ket{11})_{B'_jX_j}$ for both  $j = 1,2$\footnote{
It has been proved in Ref. \cite{Beaudry2013}, that  an arbitrary CPTP map $\{\Xi_B^x\}$, on a quantum state $\tau_{AB}$, can also be purified to joint POVM measurements $\{{\cal M}^x\}$ on  $\tau_{AB}$, and an auxiliary pure state $\ket{\psi}_{BB'}$, i.e., 
\begin{equation}
\Xi^{x}_{B}(\tau_{AB}) = n \times \text{tr}_{B'B''}({\cal M}^x_{B'B''}(\tau_{AB'} \otimes |\psi\rangle\langle\psi|_{B''B}){\cal M}^{x\dagger}_{B'B''}).
\end{equation}
Here, $n$ is the dimension of the Hilbert space of the system $B'B''$.
}, e.g.,
\begin{eqnarray}\label{eq:encode_puri}
\Theta^{x,y}_{X_1}\otimes \Theta^{z,s}_{X_2}\big(\rho_{AX_1X_2}\big) &=& 4^2 \times \tr_{B_1B'_1}\tr_{B_2B'_2}\big(\hat{K}^{x,y}_{B_1B'_1}\otimes \hat{K}^{z,s}_{B_2B'_2}(\rho_{AB_1B_2} \otimes |\phi^+\rangle\langle\phi^+|_{B'_1X_1} \nonumber \\
&& \hspace{5cm}\otimes |\phi^+\rangle\langle\phi^+|_{B'_2X_2}) \hat{K}^{{x,y}\dagger}_{B_1B'_1}\otimes  \hat{K}^{{z,s}\dagger}_{B_2B'_2}\big), 
\end{eqnarray} 
where $\hat{K}^{\alpha,\beta} = \ket{B(\alpha,\beta)}\bra{B(\alpha,\beta)},~\alpha,\beta \in\{0,1\}$, acting on the subsystem $B_jB'_j, ~j = 1,2$, are four orthogonal Bell states  given by 
\begin{equation}
    \ket{B(\alpha,\beta)} = \frac{1}{\sqrt{2}}(\ket{0,\alpha} + (-1)^\beta \ket{1,\alpha \oplus 1}) = \frac{1}{\sqrt{2}} \sum_{l = 0}^1 (-1)^{l.\beta} \ket{l, \alpha\oplus l}.
\end{equation}
On the left-hand side of Eq. (\ref{eq:encode_puri}), we have used the subscripts $X_1$ and $X_2$, to denote the additional subsystem of the Bell state. These subsystems initially belong to the senders, but after the measurements, they are transferred back to Alice, through the backward transmission channel  ${\cal E}^b_{\substack{X_1 \rightarrow A_1\\ X_2\rightarrow A_2}}$. \\
A detailed proof of Eq. (\ref{eq:encode_puri}) is given in the Appendix \ref{sec:proof_puri}.
Moreover, the total state now in part of Alice, reads as 
\begin{eqnarray}
&&{\cal E}^b_{\substack{X_1 \rightarrow A_1\\ X_2\rightarrow A_2}} \left(\Theta^{x,y}_{X_1}\otimes \Theta^{z,s}_{X_2}\big(\rho_{AX_1X_2}\big)\right) \nonumber \\
&=& 4^2 \times \tr_{B_1B'_1}\tr_{B_2B'_2}\bigg(\hat{K}^{x,y}_{B_1B'_1}\otimes \hat{K}^{z,s}_{B_2B'_2}\Big(\rho_{AB_1B_2}   \otimes {\cal E}^b_{\substack{X_1 \rightarrow A_1\\ X_2\rightarrow A_2}} \left(|\phi^+\rangle\langle\phi^+|_{B'_1X_1} \otimes |\phi^+\rangle\langle\phi^+|_{B'_2X_2}\right)\Big)\hat{K}^{{x,y}\dagger}_{B_1B'_1}\otimes  \hat{K}^{{z,s}\dagger}_{B_2B'_2}\bigg), \label{eq:puri_noise} ~~~~~\\
&=& 4^2 \times \tr_{B_1B'_1}\tr_{B_2B'_2}\Big(\hat{K}^{x,y}_{B_1B'_1}\otimes \hat{K}^{z,s}_{B_2B'_2}\left(\rho_{AB_1B_2}   \otimes \tilde{\rho}_{B'_1A_1B'_2A_2}\right)\hat{K}^{{x,y}\dagger}_{B_1B'_1}\otimes  \hat{K}^{{z,s}\dagger}_{B_2B'_2}\Big), \label{eq:puri_noise2} ~~~~~~~
\end{eqnarray}
where in Eq. (\ref{eq:puri_noise}) we have used the fact that the measurement and the noise are acting on two different subsystems and therefore they commute, while in Eq. (\ref{eq:puri_noise2}), we have assumed that
\begin{equation}\label{eq:noiseonpuri}
    \tilde{\rho}_{B'_1A_1B'_2A_2} = {\cal E}^b_{\substack{X_1 \rightarrow A_1\\ X_2\rightarrow A_2}} \left(|\phi^+\rangle\langle\phi^+|_{B'_1X_1} \otimes |\phi^+\rangle\langle\phi^+|_{B'_2X_2}\right)
\end{equation}
The presence of eavesdropper can be introduced in the backward transmission channel by giving her access to the purified system of the purification of the above state, i.e., $\ket{\tilde{\psi}}_{B'_1A_1B'_2A_2E'}$. 
Moreover, we can also assume a global presence of the eavesdropper who performs a joint attack on both transmission channels. This is introduced by considering the purification of the total state $\rho_{AB_1B_2}   \otimes \tilde{\rho}_{B'_1A_1B'_2A_2}$ as $\ket{\Psi}_{AA_1A_2B_1B'_1B_2B'_2EE'}$, whereas for independent attacks on individual channels we simply have $\ket{\Psi}_{AA_1A_2B_1B'_1B_2B'_2EE'} = \ket{\psi}_{AB_1B_2E}\otimes \ket{\tilde{\psi}}_{B'_1A_1B'_2A_2E'}$.

The purified version of the 
test run can be equivalently  expressed 
%possible 
in terms of extended measurements on the shared state and the auxiliary system. 
In the proposed protocol, Bob1 performs projective measurements on his shared part of $\rho_{AB_1B_2}$ in the eigenbasis of $\sigma_z$, denoted by $\{\ket{\alpha}\}_{\alpha = 0}^1$, and prepares a pure state in the eigenbasis of $\sigma_x$, denoted by $\{\ket{\beta_\vdash}\}_{\beta =0}^1$.
Similarly, Bob2 performs projective $\sigma_x$ measurement on the shared $\rho_{AB_1B_2}$ and prepares a state in the eigenbasis of the same Pauli operator $\sigma_x$. 
This measurement and preparation protocol can be purified to a single joint von-Neumann measurement of both senders. Here Bob1's measurement and preparation procedure can be purified as a joint measurement $\{\hat{T}^{x,y}_{B_1B'_1} = \ket{xy_\vdash}\bra{xy_\vdash}_{B_1B'_1}\}$, on the shared GHZ and one half of the Bell state. Similarly,  for Bob2 the purified measurement is  $\{\hat{T}^{z,s}_{B_2B'_2} = \ket{z_\vdash,(z\oplus s)_\vdash}\bra{z_\vdash, (z \oplus s)_\vdash}_{B_2B_2'}\}$. Each of these   $\{\hat{T}^{x,y}\}_{x,y = 0}^1$ and $\{\hat{T}^{z,s}\}_{s,z = 0}^1$, are  a complete set of von-Neumann measurement operators.
%, which are discussed in Sec. \ref{sec:ratesA1B1} and \ref{sec:ratesA2B2}.
%part on the shared state and part on the auxiliary system, by both the senders,  in the following way.

Recall that both senders and the receiver perform the test run to detect the presence of an eavesdropper in the quantum channel from Alice to both Bobs. Since according to our  protocol 
%In the above equations we have considered a particular step of the protocol where both the senders perform the key generation run. But we have considered that 
both senders perform the test run and the key generation run completely randomly and independently there might be a situation when one sender performs key generation run and the other one test run or vice a versa. Test run by one sender and key generation by other happens with probability $p(1 - p)$, and the purification for such a situation reads as 
\begin{eqnarray}
&& \big((\mathbb{I}_{A} \otimes \ket{x}\bra{x}_{B_1} \otimes \mathbb{I}_{X_2} ) \Theta_{X_2}^{z,s}(\rho_{AB_1X_2})(\mathbb{I}_{A}\otimes \ket{x}\bra{x}_{B_1} \otimes \mathbb{I}_{X_2} )\big) \otimes \ket{y_\vdash}\bra{y_\vdash}_{X_1}, \label{eq:test_key_puri} \nonumber \\
&&= 4^2 \times \tr_{B_1B'_1}\tr_{B_2B'_2}\big(\hat{T}^{x,y}_{B_1B'_1}\otimes \hat{K}^{z,s}_{B_2B'_2}(\rho_{AB_1B_2} \otimes |\phi^+\rangle\langle\phi^+|_{B'_1X_1}  \otimes |\phi^+\rangle\langle\phi^+|_{B'_2X_2}) \hat{T}^{{x,y}\dagger}_{B_1B'_1}\otimes  \hat{K}^{{z,s}\dagger}_{B_2B'_2}\big), \label{eq:test_key_puri2}
\end{eqnarray}
and
\begin{eqnarray}
&& \big((\mathbb{I}_{A} \otimes \mathbb{I}_{X_1} \otimes \ket{z_\vdash}\bra{z_\vdash}_{B_2})\Theta_{X_1}^{x,y}(\rho_{AX_1B_2})(\mathbb{I}_{A} \otimes \mathbb{I}_{X_1}\otimes \ket{z_\vdash}\bra{z_\vdash}_{B_2}) \big) \otimes  \ket{(z\oplus s)_\vdash}\bra{(z \oplus s)_\vdash}_{X_2}, \label{eq:key_test_puri} \nonumber \\
&&= 4^2 \times \tr_{B_1B'_1}\tr_{B_2B'_2}\big(\hat{K}^{x,y}_{B_1B'_1}\otimes \hat{T}^{z,s}_{B_2B'_2}(\rho_{AB_1B_2} \otimes |\phi^+\rangle\langle\phi^+|_{B'_1X_1}  \otimes |\phi^+\rangle\langle\phi^+|_{B'_2X_2}) \hat{K}^{{x,y}\dagger}_{B_1B'_1}\otimes  \hat{T}^{{z,s}\dagger}_{B_2B'_2}\big). \label{eq:key_test_puri2}
\end{eqnarray}

With a very small probability of $\approx (1 - p)^2$, both the senders perform common test runs. They detect a global eavesdropper. The purification in that scenario can be expressed as 

\begin{eqnarray}
&& \big((\mathbb{I}_A \otimes \ket{x}\bra{x}_{B_1}\otimes \ket{z}\bra{z}_{B_2})\rho_{AB_1B_2}(\mathbb{I}_A \otimes \ket{x}\bra{x}_{B_1}\otimes \ket{z}\bra{z}_{B_2}) \big) \otimes \ket{y_\vdash}\bra{y_\vdash}_{X_1} \otimes \ket{s_\vdash}\bra{s_\vdash}_{X_2} \label{eq:test_puri} \nonumber \\
&=& \rho_{A}^{x,z} \otimes \ket{y_\vdash}\bra{y_\vdash}_{X_1} \otimes \ket{s_\vdash}\bra{s_\vdash}_{X_2}. \label{eq:test_puri3} \\
&=& 4^2 \times \tr_{B_1B'_1}\tr_{B_2B'_2}\big(\hat{T}^{x,y}_{B_1B'_1}\otimes \hat{T}^{z,s}_{B_2B'_2}(\rho_{AB_1B_2} \otimes |\phi^+\rangle\langle\phi^+|_{B'_1X_1}  \otimes |\phi^+\rangle\langle\phi^+|_{B'_2X_2}) \hat{T}^{{x,y}\dagger}_{B_1B'_1}\otimes  \hat{T}^{{z,s}\dagger}_{B_2B'_2}\big), \label{eq:test_puri2} 
\end{eqnarray}

 The state at the end of the test run is transformed back to Alice via the channel ${\cal E}^b_{\substack{X_1 \rightarrow A_1\\ X_2\rightarrow A_2}}$. As such, we can assume that all measurements performed by the senders Bob1 and Bob2 are on the joint pure state $\ket{\Psi}_{AA_1A_2B_1B'_1B_2B'_2EE'}$.
%we put the subscript $X_1$ and $X_2$ in the prepared state.

  \section{Derivation of secure key rate.}
\label{sec:derivation_of_key}
In this section, we will study the security of the purified protocol of key distribution between the Alice and Bob1 and Alice and Bob2. We imagine the presence of eavesdroppers Eve1 ($E_1$) and Eve2 ($E_2$), who are colluding and can intercept the forward channel ${\cal E}^f_{\substack{A_1 \rightarrow B_1 \\ A_2 \rightarrow B_2}}$, that connects Alice with Bob1 and Alice with Bob2, at the time of sharing the $\ket{GHZ}$. The same situation happens in the case of the backward channel ${\cal E}^b_{\substack{B_1 \rightarrow A_1 \\ B_2 \rightarrow A_2}}$,  connecting Bob1 with Alice  and Bob2 with Alice. To prove  the security of the proposed $2$-$1$ secure dense coding protocol, we provide a lower bound on the secret key rate that Alice can distribute collectively with Bob1 and Bob2 while running the purified protocol. The security of the $2$-$1$ secure dense coding protocol follows from the fact that these two protocols are identical from Eve's point of view, as we argue in Section \ref{sec:purification_protocol}.

Before we compute the key rate, we need to define the post-measured states, that both the senders and the receiver will share with malicious eavesdropper, after all the honest parties have performed key generation run. %\textcolor{red}{in three different stages of the protocol.} 
The entire security proof is based on the purified version of the $2$-$1$ SDC protocol.
In the key generation run,  Bob1 performs the projective measurement ${\cal K}_{B_1B'_1}  = \{K_{B_1B_1'}^{xy}\}_{x,y = 0}^1$, in four orthogonal Bell states,
where $\hat{K}^{\alpha,\beta} = \ket{B(\alpha,\beta)}\bra{B(\alpha,\beta)},~\alpha,\beta \in\{0,1\}$, with $\ket{B(\alpha,\beta)} = \frac{1}{\sqrt{2}} \sum_{l = 0}^1 (-1)^{l.\beta} \ket{l, \alpha\oplus l}$. Similarly 
 Bob2's projective measurement is  ${\cal K}_{B_2B'_2} =  \{K_{B_2B_2'}^{zs}\}_{z,s = 0}^1 $, in which $s$ denotes the auxiliary bit.
On the other hand, Alice performs measurement, %$$ 
 ${\cal K}_{AA_1A_2} = \{(\hat{K}^{i,j,k}_s)_{AA_1A_2}\}_{i,j,k = 0}^1, ~~s = 0,1 $,  whose elements are eight orthogonal rank one projectors $\{\ket{G^s(i,j,k)}\bra{G^s(i,j,k)}\}_{i,j,k = 0}^1$, for both $s = 0,1$, with $\ket{G^s(i,j,k)} = \frac{1}{\sqrt{2}} \sum_{l = 0}^1 (-1)^{l.(j\oplus s)}\ket{l,i\oplus l,k \oplus l}$, on the Hilbert space $({\cal H}^2)^{\otimes 3}$.   Note that Alice chooses her set of measurements based on the values of the auxiliary bit $s$ which is publicly announced by Bob2.
 
\begin{definition}{\normalfont(State at the end of the key generation run performed by both senders and Alice).}
 The classical-classical-quantum state distributed among all the parties in their common key generation run is given by
\begin{eqnarray}\label{Eq:kappa_def}
&&\kappa_{K_{A_1}K_{A_2}K_{B_1}K_{B_2}XEE'} \nonumber \\
&=& {\cal K}_{AA_1A_2} \otimes {\cal K}_{B_1B'_1} \otimes {\cal K}_{B_2B'_2} (\ket{\Psi}\bra{\Psi}_{AA_1A_2B_1B'_1B_2B'_2EE'}) \nonumber \\
&=& \sum_{ijk} \sum_{xy} \sum_{zs} ({\hat K}^{i,j,k}_s)_{AA_1A_2} \otimes  {\hat K}^{x,y}_{B_1B'_1} \otimes {\hat K}^{z,s}_{B_2B'_2} (\ket{\Psi}\bra{\Psi}_{AA_1A_2B_1B'_1B_2B'_2EE'})({\hat K}^{i,j,k}_s)_{AA_1A_2} \otimes  {\hat K}^{x,y}_{B_1B'_1} \otimes {\hat K}^{z,s}_{B_2B'_2}, \nonumber \\
&=& \sum_{ijk} \sum_{xy} \sum_{zs} p(ijk;xy;zs)  \ket{ij}\bra{ij}_{K_{A_1}} \otimes \ket{k}\bra{k}_{K_{A_2}} \otimes \ket{xy}\bra{xy}_{K_{B_1}} \otimes \ket{z}\bra{z}_{K_{B_2}} \otimes \ket{s}\bra{s}_X \otimes \rho_{EE'}^{ijk;xy;zs}
\end{eqnarray}
where, $\ket{\Psi}_{AA_1A_2B_1B'_1B_2B'_2EE'}$ is the joint pure state shared among all the parties, $p(ijk;xy;zs) = \tr(\varrho_{EE'}^{ijk;xy;zs})$, and  $\rho_{EE'}^{ijk;xy;zs} = \frac{1}{p(ijk;xy;zs)} \varrho_{EE'}^{ijk;xy;zs}$, where the non-normalized quantum state, $\varrho_{EE'}^{ijk;xy;zs}$ is given by 
\begin{eqnarray}\label{eq:varrho_def}
   \varrho_{EE'}^{ijk;xy;zs} =  \bra{G^s(i,j,k)}_{AA_1A_2} \otimes \bra{B(x,y)}_{B_1B'_1} \otimes \bra{B(z,s)}_{B_2B'_2} (\ket{\Psi}\bra{\Psi}_{AA_1A_2B_1B'_1B_2B'_2EE'}) \ket{G^s(i,j,k)}_{AA_1A_2} \nonumber \\
   \otimes \ket{B(x,y)}_{B_1B'_1} \otimes \ket{B(z,s)}_{B_2B'_2}.
\end{eqnarray}
%is the purification of the noisy GHZ state, $\rho_{GHZ}$, which all the parties shared after it is transferred through the forward transmission channel. %${\cal E}^F_{\substack{A_1 \rightarrow B_1 \\ A_2 \rightarrow B_2}}$.
\end{definition}
Here, $\ket{ij}\bra{ij}_{K_{A_1}} \otimes \ket{k}\bra{k}_{K_{A_2}} $ is the state of the register identifying the measurement outcomes $(i,j,k)$ of Alice. Similarly $\ket{xy}\bra{xy}_{K_{B_1}} \otimes \ket{z}\bra{z}_{K_{B_2}} \otimes \ket{s}\bra{s}_X$, represents   the same for Bob1 and Bob2 accounting for the outcomes $(x,y)$ and $(z,s)$ respectively. 
Moreover, the subscript $K_{Y_w}$, for $Y \in \{A,B\}$ and $w = 1,2$ represents the raw key string shared among the parties at the end of the common key generation run. Note that in our protocol, Alice's key string $K_{A_1}$ has the highest correlation with 
%the key string  of Alice has the highest correlation with key string 
$K_{B_1}$ of Bob1, and $K_{A_2}$ is correlated with the key string of Bob2  $K_{B_2}$.

Note that the projective measurement of Bob2 yields a binary key string $(z,s)$ out of which only a single bit $z$ stays secret. The auxiliairy bit $s$ is disclosed publicly at the end of the protocol.
In Eq. (\ref{Eq:kappa_def}), we use subscript $X$ to denote the system of the register state, which keeps the information of $s$. Alice generates her key string $(i,j,k)$
depended on the value of
$s$, by performing the measurement ${\cal K}_{AA_1A_2}$, where the labeling of its elements is dependent on $s$.
%of raw key generation run by both Bob1 and Bob2. 

%is the key string Alice shared with Bob1  
 The probability of seeing outcome $(x,y;z,s)$, is 
\begin{equation}\label{Eq:probxyzs}
    p(x,y;z,s) = \sum_{ijk} p(ijk;xy;zs) = \frac{1}{16}.
\end{equation}
The proof of Eq. (\ref{Eq:probxyzs}), is given in Appendix \ref{sec:proof_equal_prob}. From this we find that 
the outcome of the auxiliary bit $s$, in each run is, $p(s) = \sum_{x,y,z = 0}^1 p(x,y;z,s) = \frac 12$.

% Eq. (\ref{Eq:kappa_def}), can be further simplified as 
% \begin{eqnarray}\label{Eq:kappa_def2}
% \kappa_{AA_1A_2B_1B'_1B_2B'_2EE'}  = \sum_{ijk} \sum_{xy} \sum_{zs} p(ijk;xy;zs)  \ket{ijk}\bra{ijk}_{K_{A_1}K_{A_2}} \otimes \ket{xy}\bra{xy}_{K_{B_1}} \otimes \ket{zs}\bra{zs}_{K_{B_2}X} \otimes \rho_{EE'}^{ijk;xy;zs}
% \end{eqnarray}

% After performing the measurement by Bob2, and prior to the measurement of Alice, the postmeasured state reads as 
% \begin{eqnarray}
% \kappa^{inter}_\mathbf{AB_1B_2E} &=&   {\cal K}_{\mathbf{B_1B_2}}\big(\ket{\psi}\bra{\psi}_{A\mathbf{BE}}\otimes |\phi^+\rangle\langle\phi^+|_{B'_1X_1} \otimes |\phi^+\rangle\langle\phi^+|_{B'_2X_2} \big) \\
% &=& \sum_{x,y = 0}^1  \sum_{z,s = 0}^1 K_\mathbf{B_1}^{x,y} \otimes  K_\mathbf{B_2}^{z,s} \big(\ket{\psi}\bra{\psi}_{A\mathbf{BE}}\otimes |\phi^+\rangle\langle\phi^+|_{B'_1X_1} \otimes |\phi^+\rangle\langle\phi^+|_{B'_2X_2} \big) K_\mathbf{B_1}^{x,y} \otimes K_\mathbf{B_2}^{z,s}   \\
% &=& \frac{1}{4} \times \frac{1}{4} \sum_{x,y = 0}^1 \sum_{z,s = 0}^1 \Theta^{x,y}_{X_1} \otimes \Theta^{z,s}_{X_2} \big(\ket{\psi}\bra{\psi}_{AX_1X_2E}  \big) \otimes \ket{xy}\bra{xy}_\mathbf{B_1} \otimes \ket{zs}\bra{zs}_\mathbf{B_2}, \label{eq:interlast}
% %\times \frac{1}{4} {\cal K}_\mathbf{B_1} \left(\sum_{z,s = 0}^1  \Theta^{z,s}_{X_2} \big(\ket{\psi}\bra{\psi}_{AB_1X_2E}  \big) \otimes |\phi^+\rangle\langle\phi^+|_{B'_1X_1}\right) \otimes \ket{zs}\bra{zs}_{B_2B'_2}, \label{eq:interlast}
% \end{eqnarray}
% where we have used Eq. (\ref{eq:encode_puri}), to obtain the Eq. (\ref{eq:interlast}). 

We emphasise that in $2$-$1$ SDC protocol, the auxiliary bit ``$s$" plays a crucial role. After Bob2 discloses the value of $s$ publicly, Alice chooses her output bits, based on her measurement ${\cal K}_{AA_1A_2}$, according to the Table \ref{Table:Secure_DC} or \ref{Table:Secure_DC2}.
The sets of projective measurements ${\cal K}_{AA_1A_2} = \{(\hat{K}_s^{i,j,k})_{{AA_1A_2}}\}_{i,j,k = 0}^1$, are the same for both $s = 0$ and $s = 1$, but the labeling of the elements is linked to 
%numbering of their  elements are different, with 
$\{(\hat{K}_1^{i,j,k})_{AA_1A_2} = (\hat{K}_0^{i,j \oplus 1,k})_{AA_1A_2}\}_{x,y,z = 0}^1$. 
Table \ref{Table:Secure_DC}, corresponds to the choice of Alice's measurements,  $\{(\hat{K}_0^{x,y,z})_{{AA_1A_2}}\}_{x,y,z = 0}^1$, (left column), and her identification of the encoded bits of Bob1 (middle column) and Bob2 (rightmost column), when $s = 0$. Table \ref{Table:Secure_DC2} describes the analogous case when $s = 1$.

%After Bob2 disclosed the value of auxiliary bit $s$, and when Alice performs the measurements accordingly, 
Now Eq. (\ref{Eq:kappa_def}), can be expressed as $\kappa_{K_{A_1}K_{A_2}K_{B_1}K_{B_2}XEE'} = \sum_s p(s) \kappa^s_{K_{A_1}K_{A_2}K_{B_1}K_{B_2}EE'} \otimes \ket{s}\bra{s}_{X}$, 
and  hence,
the ccq state shared among the honest parties (See Eq. (\ref{Eq:kappa_def})), when the auxiliary bit  $s$ is disclosed, is
\begin{eqnarray}\label{Eq:kappa_def3}
 \kappa^s_{K_{A_1}K_{A_2}K_{B_1}K_{B_2}EE'} 
= \sum_{ijk} \sum_{xy} \sum_{z} p(ijk;xy;z|s)  \ket{ij}\bra{ij}_{K_{A_1}} \otimes \ket{k}\bra{k}_{K_{A_2}} \otimes \ket{xy}\bra{xy}_{K_{B_1}} \otimes \ket{z}\bra{z}_{K_{B_2}}  \otimes \rho_{EE'}^{ijk;xy;zs},
\end{eqnarray}
where $p(ijk;xy;z|s) = 2 p(ijk;xy;zs)$.
% , in the last equality. The proof of the above formula has been given in appendix \ref{sec:independence}. 

In order to calculate the key rates between the honest parties from now onwards, we consider the ccq state $\kappa^s_{K_{A_1}K_{A_2}K_{B_1}K_{B_2}EE'}$, averaged over $s$. 
% only the following ccq state, shared between the honest parties , 
% \begin{eqnarray}\label{eq:kappa_final_reduced}
%   \kappa_{AA_1A_2B_1B'_1B_2EE'} = \sum_{ijk} \sum_{xy} \sum_{z} p(ijk;xy;z) \ket{ijk}\bra{ijk}_{AA_1A_2} \otimes \ket{xy}\bra{xy}_{B_1B'_1} \otimes \ket{z}\bra{z}_{B_2} \otimes \rho_{EE'}^{ijk;xy;z}.
% \end{eqnarray}

\subsection{Proof of the security of the multiparty secure dense coding protocol }
Once, the honest parties, generate the  raw key strings among themselves, the ccq state, the three-party secure protocol consists of the following  two main steps: first running one-shot Renes-Renner protocol to establish the key between Alice and Bob1 and then running one-shot Renes-Renner protocol again for Alice and Bob2. Let us state the main theorem.

\begin{theorem}
	Let there be an input state $(\rho^{in}_{A_1B_1A_2B_2E})^{\otimes n}$ with $A_1,B_1,A_2,B_2$ being the classical registers, for sufficiently large $n$. Denote by ${\cal P}_{A_1B_1}$ a Renes-Renner protocol applied to subsystems  $A_1B_1$ with secrecy and correctness parameters less than $\epsilon_{1}$ and $\delta_{1}$ respectively and by 
	${\cal P}_{A_2B_2}$ a Renes-Renner protocol applied to  subsystems  $A_2B_2$ with correctness and secrecy parameter less than $\epsilon_{2}$ and $\delta_{2}$. Let us assume that
\begin{eqnarray}
I(A_1:B_1) \geq H(A_1) - \delta_1,  \label{eq:assum1}\\
I(A_2:B_2) \geq H(A_2) - \delta_2, \label{eq:assum2}\\
I(B_1:B_2)\leq\delta_3, \label{eq:assum3}
\end{eqnarray}
which the parties can estimate by running statistical tests. Then, for any  pair of the rates $r({\cal P}_i)$, satisfying 
	\begin{align}
	    r({\cal P}_1) \leq I(A_1:B_1) - I(A_1:E|B_2) -\delta_1 - 2\delta_2-\delta_3,  \\
	    r({\cal P}_2) \leq I(A_2:B_2)-I(A_2:E|B_1) - 2\delta_1 -\delta_2 - \delta_3,
	\end{align}
	the pair of protocols  outputs a state $\rho^{out}_{A_1B_1A_2B_2E}$ which satisfies:
	\begin{eqnarray}
	||\rho_{A_1B_1A_2B_2E}^{out} - K_{A_1B_1}\otimes K_{A_2B_2}\otimes\sigma_{EC_1C_2}||_1\leq 2(\epsilon_{1} +\epsilon_{2})
	\label{eq:claim2}
	\end{eqnarray}
	with $K_{A_iB_i}=(\frac{1}{2}({\ket{00}\bra{00}+\ket{11}\bra{11}}))^{\otimes (n \times r({\cal P}_i))}$.
	\label{thm:main}
\end{theorem}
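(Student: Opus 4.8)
The plan is to analyse the two Renes--Renner protocols separately and then compose their guarantees. I would first invoke the one-shot lower bound of \cite{RR}: for a ccq state with classical key register $X$, a legitimate recipient holding $Y$, and an eavesdropper holding a quantum system $Z$, the protocol distils an $\epsilon$-secure key at any rate up to the asymptotic (i.i.d.) value $I(X:Y)-I(X:Z)$. The whole problem then reduces to identifying, for each of the two keys, the correct ``effective eavesdropper'' system and controlling the resulting $I(X:Z)$ with the three statistical assumptions \eqref{eq:assum1}--\eqref{eq:assum3}.

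For the link $A_1:B_1$ I would run ${\cal P}_{A_1B_1}$ first. Since ${\cal P}_{A_2B_2}$ is executed afterwards and all of its classical communication is public, the conservative (worst-case) choice is to grant the adversary the entire protocol-2 register, letting her hold $EA_2B_2$. The bound of \cite{RR} then certifies that every rate up to $I(A_1:B_1)-I(A_1:EA_2B_2)$ is achievable, so it suffices to show
\begin{equation}
I(A_1:EA_2B_2)\leq I(A_1:E|B_2)+\delta_1+2\delta_2+\delta_3 .
\end{equation}
I would obtain this from the chain-rule expansion
\begin{equation}
I(A_1:EA_2B_2)=I(A_1:B_2)+I(A_1:A_2|B_2)+I(A_1:E|A_2B_2),
\end{equation}
bounding each term in turn. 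Assumption \eqref{eq:assum1} gives $H(A_1|B_1)\leq\delta_1$ and \eqref{eq:assum3} gives $I(B_1:B_2)\leq\delta_3$; combining them through $I(A_1:B_2)\leq I(B_1:B_2)+I(A_1:B_2|B_1)\leq\delta_3+H(A_1|B_1)$ yields $I(A_1:B_2)\leq\delta_1+\delta_3$. Assumption \eqref{eq:assum2} gives $H(A_2|B_2)\leq\delta_2$, whence $I(A_1:A_2|B_2)\leq H(A_2|B_2)\leq\delta_2$ and, by a second chain-rule step together with $H(A_2|EB_2)\leq H(A_2|B_2)$, $I(A_1:E|A_2B_2)\leq I(A_1:E|B_2)+\delta_2$. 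Summing the three contributions reproduces exactly the correction $\delta_1+2\delta_2+\delta_3$ claimed for ${\cal P}_1$, and the rate for ${\cal P}_2$ follows by the mirror-image argument, handing the adversary $EA_1B_1$ and using the same three assumptions to obtain $2\delta_1+\delta_2+\delta_3$.

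Finally I would assemble the two single-link guarantees into the joint statement \eqref{eq:claim2}. Each protocol produces, by \cite{RR}, a state $\epsilon_i$-close in trace distance to the ideal decoupled key $K_{A_iB_i}\otimes\sigma$; I would apply the triangle inequality twice---once to replace the genuine protocol-1 output by its ideal uniform-and-decoupled key while ${\cal P}_{A_2B_2}$ runs, and once for the symmetric replacement---so that the two errors add and the decoupling from the adversary's final system $EC_1C_2$ is preserved, giving the bound $2(\epsilon_1+\epsilon_2)$. The main obstacle is precisely this composition: since Eve mounts a single joint attack on the shared state, one must verify that executing ${\cal P}_{A_2B_2}$ neither degrades the secrecy of the already-distilled $K_{A_1B_1}$ nor creates correlations between the two keys. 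This is exactly what the worst-case device of handing Eve the whole complementary register, together with the near-independence assumption $I(B_1:B_2)\leq\delta_3$ and the strong-correlation assumptions \eqref{eq:assum1}--\eqref{eq:assum2}, is designed to control; tracking the trace-distance errors cleanly through the sequential composition is where the care is needed.
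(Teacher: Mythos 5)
Your proposal is correct and takes essentially the same approach as the paper's proof: run the two Renes--Renner protocols sequentially, pessimistically grant the eavesdropper the full complementary register ($EA_2B_2$ for the first key, $EA_1B_1$ for the second), reduce the claimed rates to the entropic bounds $I(A_1:B_2)\leq\delta_1+\delta_3$, $I(A_1:A_2|B_2)\leq\delta_2$ and $I(A_1:E|A_2B_2)\leq I(A_1:E|B_2)+\delta_2$ (and their mirror images), and compose the two secrecy guarantees with two triangle inequalities plus contractivity of the trace norm under CPTP maps to obtain $2(\epsilon_1+\epsilon_2)$. The only deviations are cosmetic and valid: you derive $I(A_1:B_2)\leq\delta_1+\delta_3$ from monotonicity and the chain rule, $I(A_1:B_2)\leq I(B_1:B_2)+I(A_1:B_2|B_1)\leq\delta_3+H(A_1|B_1)$, where the paper invokes Lemma~\ref{lem:useful} of Grudka \emph{et al.}, and you bound the conditional term via nested chain rules together with $H(A_2|EB_2)\leq\delta_2$ where the paper expands entropies directly as in Eq.~(\ref{eq:hardest}); moreover, your worst-case framing of the second protocol's adversary implicitly absorbs, via data processing, the paper's separate i.i.d.\ argument (Corollary~\ref{cor:iid}).
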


\begin{proof}
Our figure of merit is the input state $(\rho^{in}_{A_1B_1A_2B_2E})$.
Applying the first protocol  ${\cal P}_1$ to a single copy of this state gives the output state
	\begin{eqnarray}
	||{\cal P}_1(\rho^{in}_{A_1B_1A_2B_2E}) - K^{(1)}_{A_1B_1}\otimes \rho_{A_2B_2EC_1} ||_1\leq 2\epsilon_1
	\label{eq:output1}
	\end{eqnarray}
with the rate
\begin{align}
\ell_{\mathrm{secr}}^{\epsilon_1}  (A_1:B_1|A_2 B_2 E) \geq  H_{\mathrm{min}}^{\epsilon_{3}'}(A_1|A_2B_2E)-H_{\mathrm{max}}^{\epsilon_3}(A_1|B_1)-4 \log\frac{1}{\epsilon_4}-3,
\end{align}
where $\epsilon_1 = \epsilon_3+\epsilon_3'+2\epsilon_4$.
We further note, that the protocol ${\cal P}_2$ acts on $\rho_{A_2B_2EC_1}$. Hence,
after the application of ${\cal P}_2$ to 
$\rho_{A_2B_2EC}$, its
rate reads     
\begin{align}
\ell_{\mathrm{secr}}^{\epsilon_2} (A_2:B_2|EA_1) \geq H_{\mathrm{min}}^{\epsilon_5'}(A_2|EC_1)-H_{\mathrm{max}}^{\epsilon_5}(A_2|B_2)-4 \log\frac{1}{\epsilon_6}-3
\end{align}
where $\epsilon_2=\epsilon_5+\epsilon_5'+2\epsilon_6$.
We further note that
\begin{align}
    H_{\mathrm{min}}^{\epsilon_5'}(A_2|EC_1)_{\rho_{A_2B_2EC_1}} \geq  H_{\mathrm{min}}^{\epsilon_5'}(A_2|EA_1)_{\rho^{in}_{A_2B_2EA_1}}
\end{align}
The above inequality follows from the fact that $\rho^{in}_{A_2B_2EA_1}$ is an extension of $\rho_{A_2B_2E}$ to a system $A_1$, and there exists a deterministic map 
on $A_1$ which outputs
$C_1$. Due to the data processing inequality (c.f. Theorem 5.7 in \cite{tomamichel2012framework}), $H_{\mathrm{min}}^{\epsilon_5'}(A_2|EA_1)$ can
only increase under this (local) map, hence
the inequality.

Now consider $n$ copies of the state $\rho^{in}_{A_1 A_2 B_2 E}$. The rate after applying the first protocol  ${\cal P}_1$ becomes
\begin{align}
\frac{1}{n} \ell_{\mathrm{secr}}^{\epsilon_1}  (A_1^n:B_1^n|A_2^n B_2 ^n E^n) \geq \frac{1}{n}\left( H_{\mathrm{min}}^{\epsilon_3'}(A_1^n|A_2^n B_2^n E^n )-H_{\mathrm{max}}^{\epsilon_3}(A_1^n|B_1 ^n)-4 \log\frac{1}{\epsilon_4}-3 \right)
\end{align}
Since the input state is i.i.d. in the asymptotic limit of many copies $n \to \infty$ we can apply the  fully quantum asymptotic equipartition property (see Result 6 in \cite{tomamichel2012framework}) to obtain
\begin{align}
r({\cal P}_1) \geq H(A_1|A_2B_2E)- H(A_1|B_1) = I(A_1:B_1)-I(A_1:EA_2B_2).
\end{align}

Similarly, the rate of  ${\cal P}_2$  in the many-copy case reads
\begin{align}
\frac{1}{n} \ell_{\mathrm{secr}}^{\epsilon_2}  (A_2^n:B_2^n|E^n A_1 ^n) \geq \frac{1}{n}\left( H_{\mathrm{min}}^{\epsilon_3'}(A_2^n|E^nA_1 ^n )-H_{\mathrm{max}}^{\epsilon_3}(A_2^n|B_2 ^n)-4 \log\frac{1}{\epsilon_4}-3 \right)
\end{align}
By corollary \ref{cor:iid}, the state $\rho^n_{A_1 A_2 B_2 E}$ is i.i.d 
$\rho^n_{A_1 A_2 B_2 E}= \rho^{\otimes n}_{A_1 A_2 B_2 E}$. and as before we can employ the  fully quantum asymptotic equipartition property to get

\begin{align}
r({\cal P}_2) \geq H(A_2|EA_1)- H(A_2|B_2) = I(A_2:B_2)-I(A_2:EA_1).
\end{align}

So far we have proven, that the rate of protocol ${\cal P}_1$ is $I(A_1:B_1)-I(A_1:EA_2B_2)$, and that of ${\cal P}_2$ reads $I(A_2:B_2) - I(A_2:EA_1)$. To finalize and lower bound these rates as claimed, we need to prove the following lemmas first.
\begin{lemma}
Let random variables $A_1,A_2,B_1,B_2$ satisfy the assumptions (\ref{eq:assum1}) to (\ref{eq:assum3}), 
%following properties:
% \begin{eqnarray}
% I(A_1:B_1) \geq H(A_1) - \delta_1 \label{eq:a1}\\
% I(A_2:B_2) \geq H(A_2) - \delta_2
% \label{eq:a2}\\
% I(B_1:B_2) \leq \delta_3 \label{eq:b1},
% \end{eqnarray}
then both $I(A_1:A_2B_2)$ and $I(A_1B_1:A_2)$ are bounded by $\delta_1+\delta_2 + \delta_3$, i.e., 
\begin{eqnarray}
    I(A_1:A_2B_2) \leq \delta_1+\delta_2 + \delta_3 \label{eq:IA1A2B2},\\
    I(A_1B_1:A_2) \leq \delta_1+\delta_2 + \delta_3 \label{eq:IA1B1A2}.
\end{eqnarray}
\label{lem:monogamy}
\end{lemma}
The idea of the above lemma is that the raw keys prepared by Bob1, $K_{B_1} \in \{x,y\}_{x,y = 0}^1$ and by Bob2, $K_{B_2} \in \{z\}_{z = 0}^1$ are by the virtue of the protocol de-correlated i.e. $p(x,y,z)=p(x,y)p(z)$. Hence, upon observing high correlation between $K_{A_1}$ and $K_{B_1}$ as well as $K_{A_2}$ and $K_{B_2}$, one can not
observe high correlations between $K_{A_1}$ and $K_{A_2}K_{B_2}$ and so between $K_{A_2}$ and $K_{A_1}K_{B_1}$ or else $K_{B_1}$ and $K_{B_2}$ would be correlated.

%We formalize this fact in the following lemma:
Before we prove the above lemma, we need to invoke another one:
\begin{lemma} For any random variables $S,T,U$ there is
\begin{equation}
I(S:T)+I(T:U)\leq I(S:U) + I(T:SU).
\end{equation}
\label{lem:useful}
\end{lemma}
\begin{proof}
For the proof of this lemma, see lemma 5 in the Supplemental Material of \cite{Grudka2014}. Finally, we are ready to show the proof of the Lemma \ref{lem:monogamy}.
\end{proof}

\begin{proof}{of Lemma \ref{lem:monogamy}}:
We first use the chain rule to observe that
\begin{eqnarray}
I(A_1:A_2B_2) = I(A_1:B_2) + I(A_1:A_2|B_2) \label{eq:first}
\end{eqnarray}
% and similarly apply the chain rule again to each of the terms to get
% \begin{eqnarray}
% I(A_1B_1:A_2B_2) &=& I(A_1:B_2)\label{eq:first}\\ &+& I(A_1:A_2|B_2)\label{eq:second} \\
% &+&I(B_1:B_2|A_1) \label{eq:third}\\&+&I(B_1:A_2|A_1B_2) \label{eq:fourth}
% \end{eqnarray}
We focus on the first term in Eq. (\ref{eq:first}), and use Lemma \ref{lem:useful}.
\begin{eqnarray}
I(A_1:B_1)+I(A_1:B_2) \leq I(B_1:B_2) + I(A_1:B_1B_2)
\end{eqnarray}
We further note, that by assumption (\ref{eq:assum1}) $I(A_1:B_1) \geq H(A_1) -\delta_1$ while $I(B_1:B_2) \leq \delta_3$. We thus obtain
\begin{eqnarray}
H(A_1) -\delta_1 +I(A_1:B_2) \leq I(A_1:B_1B_2) +\delta_3 \leq H(A_1) +\delta_3
\end{eqnarray}
where in the last inequality we have used the fact that $I(X:Y)\leq \min \{H(X),H(Y)\}$.
Thus we obtain that 
\begin{equation}
I(A_1:B_2)\leq \delta_1 +\delta_3.
\label{eq:small_cross}
\end{equation}
Now in the second term of Eq. (\ref{eq:first}), we apply the sequence of (in)equalities  given by %which we comment below
% \begin{eqnarray}
% I(A_1:A_2|B_2) &=& I(A_1:A_2B_2) - I(A_1:B_2) \label{eq:chain}\\
% &=& H(A_1) + H(A_2B_2) -H(A_1A_2B_2)  - I(A_1:B_2) \label{eq:mut}\\
% &\leq& H(A_1)+H(B_2) + \delta_2 - [ H({A_1}) + H(A_2B_2|A_1)] - I(A_1:B_2) \label{eq:lb} \\
% &=&H(B_2) +\delta_2 - H(A_2B_2|A_1)  - I(A_1:B_2) \label{eq:skrac}\\
% &\leq& H(B_2) + \delta_2 - H(B_2|A_1)  - I(A_1:B_2)\label{eq:drop}\\
% &=&\delta_2
% \end{eqnarray}
% In the first equality, we use the chain rule.
% In the second line we use the definition of mutual information.
% In the (\ref{eq:lb}) we observe that $H(A_2B_2) \leq H(B_2) + \delta_2$ which follows from (\ref{eq:assum2}).
% The last inequality (\ref{eq:drop}) follows from dropping $A_2$ which can only decrease the entropy.
% Finally, we use the definition of mutual information, and obtained  (\ref{eq:IA1A2B2}). 
% \textcolor{red}{ Alternate proof. 
\begin{eqnarray}
I(A_1:A_2|B_2) &=& H(A_2|B_2) - H(A_1|A_2B_2) \\
&\leq & \delta_2 - H(A_1|A_2B_2) \leq  \delta_2,
\end{eqnarray}
where in the first inequality we have used (\ref{eq:assum2}) and the second inequality follows from the positivity of conditional  Shannon entropy $H(A_1|A_2B_2) \geq 0$.

Since the above argument is symmetric under the exchange of indices $1$ and $2$, this concludes the proof.
%By changing the index of $1$ and $2$, in Eqs. (\ref{eq:IA1A2B2}), (\ref{eq:first}), ()
\end{proof}

To this end we first note, that for any ccq state $\rho_{A_1B_1A_2B_2E}$, with $A_1B_1A_2B_2$, being classical and $E$ quantum, satisfying  assumptions \eqref{eq:assum1}, \eqref{eq:assum2} and \eqref{eq:assum3}, we can employ Lemma \ref{lem:lb} to obtain 
$r({\cal P}_1) \geq I(A_1:B_1)-I(A_1:E|B_2) -\delta_1 - 2\delta_2-\delta_3$ and  $r({\cal P}_2)\geq I(A_2:B_2)-I(A_2:E|B_1) - 2\delta_1 - \delta_2 -\delta_3$.
We will now prove the following Lemma \ref{lem:lb}, 
%the lower bounds on  local key rates between Alice to Bob1 and Alice to Bob2 
with the help of Lemma  \ref{lem:monogamy}
%need another technical lemma:
\begin{lemma}
	\label{lem:lb} For any ccq state $\rho_{A_1B_1A_2B_2E}$ with $A_1B_1A_2B_2$ being classical and $E$ quantum, satisfying
 \begin{align}
    I(A_1:B_1)\geq H(A_1) - \delta_1, \label{eq4:as1}\\
        I(A_2:B_2) \geq H(A_2) -\delta_2, \label{eq4:as2}\\
         I(B_1:B_2) \leq \delta_3,\label{eq4:zero}
    \end{align}
there is 
\begin{eqnarray}
I(A_1:B_1) - I(A_1:A_2B_2E) &\geq&  
I(A_1:B_1) - I(A_1:E|B_2) - \delta_1 - 2\delta_2-\delta_3, \label{eq:ExpressionA1B1}
\end{eqnarray}
and 
\begin{eqnarray}
I(A_2:B_2) - I(A_2:EA_1)&\geq& I(A_2:B_2)-I(A_2:E|B_1) - 2\delta_1 - \delta_2 -\delta_3. \label{eq:ExpressionA2B2}
\end{eqnarray}
\end{lemma}

\begin{proof}
    % First by assumptions via lemma \ref{lem:monogamy}, there is
    % \begin{equation}
    %     I(A_1B_1:A_2B_2) \leq 2\delta_1 + 2\delta_2+2\delta_3.
    %     \label{eq:corr}
    % \end{equation}
 First applying the chain rule of mutual information, in the l.h.s of (\ref{eq:ExpressionA1B1}),
	\begin{eqnarray}
	I(A_1:B_1) - I(A_1:A_2B_2E) = I(A_1:B_1) - [I(A_1:A_2B_2) + I(A_1:E|A_2B_2)]
	\end{eqnarray}
	
From Lemma \ref{lem:monogamy} the second term on the r.h.s. is greater or equal to $-\delta_1 - \delta_2-\delta_3$. We will lower bound the third term by
$-I(A_1:E|B_2)$. We have the following chain of (in)equalities upper bounding $I(A_1:E|A_2B_2)$.
\begin{eqnarray}
I(A_1:E|A_2B_2) &=& I(A_1:A_2B_2E) - I(A_1:A_2B_2) \nonumber \\
&\leq& I(A_1:A_2B_2E) -I(A_1:B_2)  \nonumber \\
&=&H(A_1)+ H(A_2B_2E) - H(A_1A_2B_2E) - I(A_1:B_2)  \nonumber\\
&\leq& H(A_1) + H(B_2E) + \delta_2 -H(A_1 B_2 E) -I(A_1 : B_2) \label{eq:hardest} \\
&=&I(A_1:E|B_2)+\delta_2
\end{eqnarray}
The first equality is just the chain rule. The first inequality follows from data processing i.e. $I(A_1:A_2B_2) \geq I(A_1:B_2)$.
The next equality follows from definition of the mutual information applied to the first term.
The inequality (\ref{eq:hardest}) is more involved. First we employ the data processing in $H(A_1 A_2 B_2 E) \geq H(A_1 B_2 E)$. Furthermore, to see that $H(A_2B_2E) \leq H(B_2E) + \delta_2 $ it is enough to show that
$H(A_2|B_2E)\leq \delta_2$. We know however by assumption (\ref{eq4:as2}) that 
\begin{eqnarray}
H(A_2) - H(A_2|B_2E) &=& I(A_2:B_2E) \nonumber \\
&\geq& I(A_2:B_2)\geq H(A_2) -\delta_2,
\end{eqnarray}
where we have used data processing inequality for quantum mutual information dropping $E$. Hence $H(A_2|B_2E) \leq \delta_2$ as claimed. The last equality follows from the definition of $I(A_1:E|B_2)$.
From this part of the proof, we conclude that
	\begin{eqnarray}
I(A_1:B_1) - I(A_1:A_2B_2E) \geq I(A_1:B_1) - I(A_1:E|B_2)  -\delta_1 -2\delta_2-\delta_3.
\end{eqnarray}

In order to prove (\ref{eq:ExpressionA2B2}), we first notice that from the data processing inequality, we obtain
\begin{align}
    I(A_2:B_2) - I(A_2:EA_1)\geq  I(A_2:B_2) - I(A_2:EA_1B_1).
\end{align}
Now we can apply Lemma \ref{lem:lb} with the indices 1 and 2 swapped, which we can do since the underlying assumption are symmetric under this swap. Then
\begin{align}
    I(A_2:B_2) - I(A_2:EA_1B_1)\geq  I(A_2:B_2) - I(A_2:E|B_1)-2\delta_1 -\delta_2-\delta_3
\end{align}
and completes the proof.
\end{proof}

Finally, we need to show the composition of the protocols and the secrecy parameters. After the application of 
${\cal P}_1$ to $(\rho^{in}_{A_1B_1A_2B_2E})^{\otimes n}$, by security of the first protocol, there is:
	\begin{eqnarray}
	||{\cal P}_1\left((\rho^{in}_{A_1B_1A_2B_2E})^{\otimes n}\right) - K_{A_1B_1}\otimes \rho_{A_2B_2EC_1} ||_1\leq 2\epsilon_1.
	\label{eq:output2}
	\end{eqnarray}
	Note here, that the key $K_{A_1B_1}$ is also decoupled from
	systems $A_2B_2$, not only $E$.
Protocol ${\cal P}_2$ acts on registers $A_2B_2$ and appends new register $C_2$ holding communication. Let us denote  $\rho_{A_1B_1A_2B_2EC_1}^{(n)}:={\cal P}_1\left((\rho^{in}_{A_1B_1A_2B_2E})^{\otimes n}\right)$. Since ${\cal P}_2$ acts only on systems $A_2B_2$ and appends communication $C_2$ to Eve it effectively has the input $\tr_{A_1B_1}\rho^{(n)}_{A_1B_1A_2B_2EC_1}=:\rho^{(n)}_{A_2B_2EC_1}$.
%Hence $\rho^{out}_{A_1B_1A_2B_2E}={\cal P}_2({\cal P}_1\left((\rho^{in}_{A_1B_1A_2B_2E})^{\otimes n}\right))$ and
	\begin{eqnarray}
	||{\cal P}_2\left(\rho_{A_2B_2EC_1}^{(n)}\right) - K_{A_2B_2}\otimes \sigma_{EC_1C_2} ||_1\leq 2\epsilon_2
	\label{eq:output3}
	\end{eqnarray}
Further, for $\rho^{out}_{A_1B_1A_2B_2EC_1C_2}:={\cal P}_2({\cal P}_1\left((\rho^{in}_{A_1B_1A_2B_2E})^{\otimes n}\right))$ we have:
	\begin{align}
	&||\rho^{out}_{A_1B_1A_2B_2EC_1C_2}- K_{A_2B_2}\otimes K_{A_1B_1}\otimes \sigma_{EC_1C_2} ||_1 \nonumber \\&= ||\rho^{out}_{A_1B_1A_2B_2EC_1C_2}-  K_{A_1B_1}\otimes {\cal P}_2(\rho_{A_2B_2EC_1}^{(n)})+ K_{A_1B_1}\otimes {\cal P}_2(\rho_{A_2B_2EC_1}^{(n)})-K_{A_2B_2}\otimes K_{A_1B_1}\otimes\sigma_{EC_1C_2} ||_1 \nonumber \\
	&\leq||\rho^{out}_{A_1B_1A_2B_2EC_1C_2}-  K_{A_1B_1}\otimes {\cal P}_2(\rho_{A_2B_2EC_1}^{(n)})||_1 +|| K_{A_1B_1}\otimes {\cal P}_2(\rho_{A_2B_2EC_1}^{(n)})-K_{A_1B_1}\otimes K_{A_2B_2}\otimes\sigma_{EC_1C_2} ||_1 \nonumber \\
	&\leq 2\epsilon_1 +2\epsilon_2
	\end{align}
The first equality comes from adding and subtracting $K_{A_1B_1}\otimes {\cal P}_2(\rho_{A_2B_2EC_1}^{(n)})$,
the next line is just a triangle inequality,
The first in the last inequality is a combination of the Eq. \eqref{eq:output2} and the fact that the trace norm is non-increasing under local operations and public communication. The second term follows from the Eq. \eqref{eq:output3} and the multiplicativity of the trace norm under the tensor product. 
\end{proof}

% To obtain further simplification on the lower bounds on the local key rates, we need the following lemmas.

%\section{Technical lemmas}

%\textcolor{red}{ Previously we had $I(A_1B_1:A_2B_2) \leq 2\delta_1+2\delta_2 + 2\delta_3$}

% In what follows, 

% \begin{lemma}\label{lem:lb_bob2}
%      For random variables $A_1,A_2,B_1,B_2$ satisfying
%  \begin{align}
%     I(A_1:B_1)\geq H(A_1) - \delta_1 \label{eq:as1}\\
%         I(A_2:B_2) \geq H(A_2) -\delta_2 \label{eq:as2}\\
%          I(B_1:B_2) \leq \delta_3\label{eq:zero}
%     \end{align}
%     there is
%     \label{lem:technical}
% \end{lemma}
% \begin{proof}

% \end{proof}

\begin{observation}\label{obs:ncopies}
     For any state $\rho_{XY}^{\otimes n}$ and any channel $\Lambda: \mathcal{L}(\mathcal{H}_Y^{\otimes n}) \mapsto \mathcal{L}(\mathcal{H}_{\tilde{Y}})$ mapping $\Lambda(\rho_{Y}^{\otimes n})= \rho^{(n)}_{\tilde{Y}} $ there is 
     \begin{equation}
         \tr_{\rho^{(n)}_{\tilde{Y}}} ( \mathcal{I}_n \otimes \Lambda ) \rho_{XY}^{\otimes n} = \tr _{\rho_{Y}^{\otimes n}}\rho_{XY}^{\otimes n} = \rho_{X}^{\otimes n}
     \end{equation}
\end{observation}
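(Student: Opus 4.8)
The plan is to reduce the claim to the single defining feature of a quantum channel, namely trace preservation, and then lift it to the $n$-copy setting by the tensor structure. The central observation is that the composite map $\tr_{\tilde Y}\circ(\mathcal I_n\otimes\Lambda)$, regarded as a linear map from $\mathcal L(\mathcal H_X^{\otimes n}\otimes\mathcal H_Y^{\otimes n})$ to $\mathcal L(\mathcal H_X^{\otimes n})$, coincides identically with the partial trace $\tr_{Y^{\otimes n}}$. Here I read the subscript notation $\tr_{\rho^{(n)}_{\tilde Y}}$ and $\tr_{\rho_Y^{\otimes n}}$ as partial traces over the systems $\tilde Y$ and $Y^{\otimes n}$ respectively, so the identity asserts that discarding the output of $\Lambda$ returns the untouched marginal $\rho_X^{\otimes n}$. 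Once the equality of these two maps is established, one simply evaluates both sides on the particular input $\rho_{XY}^{\otimes n}$, and the final equality $\tr_{Y^{\otimes n}}\rho_{XY}^{\otimes n}=\rho_X^{\otimes n}$ follows from multiplicativity of the partial trace over a tensor product of states.

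To prove the equality of the two maps I would argue by linearity and check agreement on a spanning family of product operators. Writing an arbitrary element of $\mathcal L(\mathcal H_X^{\otimes n}\otimes\mathcal H_Y^{\otimes n})$ as a linear combination of operators $A\otimes B$ with $A\in\mathcal L(\mathcal H_X^{\otimes n})$ and $B\in\mathcal L(\mathcal H_Y^{\otimes n})$, the computation is
\begin{equation}
\tr_{\tilde Y}\big[(\mathcal I_n\otimes\Lambda)(A\otimes B)\big]=\tr_{\tilde Y}\big[A\otimes\Lambda(B)\big]=A\,\tr[\Lambda(B)]=A\,\tr[B],
\end{equation}
where the last step uses that $\Lambda$ is trace preserving. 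On the other hand $\tr_{Y^{\otimes n}}[A\otimes B]=A\,\tr[B]$, so the two maps agree on every product operator and hence, by linearity, on all of $\mathcal L(\mathcal H_X^{\otimes n}\otimes\mathcal H_Y^{\otimes n})$.

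There is essentially no hard step: the only property of $\Lambda$ that enters is trace preservation (complete positivity is not even needed), and both sides reduce to the same factorized expression $A\,\tr[B]$. If one prefers an explicit representation, the identity $\tr[\Lambda(B)]=\tr[B]$ can equally be seen from a Kraus decomposition $\Lambda(B)=\sum_k E_kBE_k^\dagger$ with $\sum_k E_k^\dagger E_k=\mathbb I$, using cyclicity of the trace. The only point that merits careful phrasing is the meaning of the subscripted trace symbols, after which the statement is precisely the fact used in the main proof when passing from the post-$\mathcal P_1$ state (with the appended communication register $C_1$) back to the reduced state $\rho^{(n)}_{A_2B_2EC_1}$ required for the subsequent data-processing argument, guaranteeing that appending and then discarding a channel output leaves the relevant i.i.d.\ marginal intact.
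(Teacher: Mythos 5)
Your proof is correct, and it takes a genuinely different route from the paper. The paper's own proof is a physical argument by contradiction: if the marginal on $X$ changed, an observer holding many copies of the $X$ systems could detect whether $\Lambda$ had been applied to the $Y$ systems, violating the no-signalling principle; the paper only remarks in passing that an algebraic proof via Kraus operators and cyclicity of trace is also possible, without carrying it out. You instead prove the statement directly, showing that $\tr_{\tilde Y}\circ(\mathcal I_n\otimes\Lambda)$ and $\tr_{Y^{\otimes n}}$ coincide as linear maps by checking them on the spanning family of product operators $A\otimes B$, where both reduce to $A\,\tr[B]$ using nothing beyond trace preservation of $\Lambda$. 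Your version is more elementary and self-contained: it does not invoke no-signalling as an external physical axiom (which is itself usually justified by exactly this kind of computation), it makes explicit that complete positivity plays no role, and it is sharper than the paper's sketched alternative since you bypass Kraus decompositions entirely, relegating them to an optional remark. What the paper's argument buys in exchange is physical intuition --- it explains \emph{why} the statement must hold in any operationally sensible theory --- and brevity, at the cost of resting on a principle that is strictly stronger machinery than the two-line algebra actually required. Your reading of the paper's (admittedly awkward) subscripted-trace notation as partial traces over the indicated systems is also the intended one, and your closing remark correctly identifies how the observation feeds into Corollary 1 and the i.i.d.\ step of the main theorem.
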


\begin{proof}
Suppose the output state $ \rho_{\tilde{X}}$ was different from $\rho_{X}^{\otimes n}$. Then an observer $X$ having access to many copies of this state could differentiate between the two states, which would violate the no-signalling principle.
This is a direct application of the no-communication theorem, which can also be shown by writing the state as the convex sum with respect to the partition $XY$, expanding the channel in terms of its Kraus operators and invoking cyclicity of trace.
\end{proof}

\begin{corollary}
Consider an i.i.d. input state $\rho_{A_1B_1A_2B_2E}^{\otimes n}$. The following chain of physical operations results in
\begin{align}
\left(\rho^{in}_{A_1B_1A_2B_2E}\right)^{\otimes n} &\xrightarrow{\text{copy $A_1$} }
(\rho^{in}_{\hat{A}_1A_1B_1A_2B_2E})^{\otimes n} \xrightarrow{\text{apply $\mathcal{P}_1$}} \rho^{(n)}_{A_1K_{A_1}K_{B_1}A_2B_2EC_1}\\&\xrightarrow{\tr_{K_{A_1}K_{B_1}C_1}}\rho^{(n)}_{A_2B_2EA_1}
\end{align}
and $\rho^{(n)}_{A_2B_2EA_1}=\rho^{\otimes n}_{A_2B_2EA_1}$
\label{cor:iid}
\end{corollary}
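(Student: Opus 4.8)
The plan is to reduce the claim to Observation \ref{obs:ncopies}, which is exactly the no-signalling statement that a channel applied to one part of an i.i.d. state cannot alter the marginal of the untouched part. First I would make precise the first arrow: since $A_1$ is a classical register, the operation ``copy $A_1$'' is the legitimate CPTP fan-out map $\ket{a}\bra{a}_{A_1}\mapsto\ket{a}\bra{a}_{\hat{A}_1}\otimes\ket{a}\bra{a}_{A_1}$ applied independently to each of the $n$ copies. Consequently the copied state retains the tensor-power form $(\rho^{in}_{\hat{A}_1 A_1 B_1 A_2 B_2 E})^{\otimes n}$, and, crucially, its marginal on $A_1 A_2 B_2 E$ is left intact, i.e. $\tr_{\hat{A}_1 B_1}\rho^{in}_{\hat{A}_1 A_1 B_1 A_2 B_2 E}=\rho^{in}_{A_1 A_2 B_2 E}$.

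Next I would identify the bipartition required by Observation \ref{obs:ncopies}. I set $X:=A_1 A_2 B_2 E$ (the systems that are kept) and $Y:=\hat{A}_1 B_1$ (the systems consumed by the first protocol). The point to verify here is that the Renes--Renner protocol $\mathcal{P}_1$ is a channel $\Lambda:\mathcal{L}(\mathcal{H}_{\hat{A}_1}^{\otimes n}\otimes\mathcal{H}_{B_1}^{\otimes n})\to\mathcal{L}(\mathcal{H}_{K_{A_1}K_{B_1}C_1})$ acting on $Y^{\otimes n}$ only: Alice hashes her copy $\hat{A}_1^n$ and broadcasts the communication register $C_1$, while Bob1 processes $B_1^n$ together with $C_1$ into $K_{A_1}K_{B_1}$, as in Remarks \ref{rem:datacompression} and \ref{rem:privacyamplification}; none of these steps touches $A_1$, $A_2$, $B_2$ or $E$. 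Hence the state after the second arrow is precisely $(\mathcal{I}_n\otimes\Lambda)\,\rho_{XY}^{\otimes n}$, where the surviving register $A_1$ is the copy that was set aside.

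Finally I would invoke Observation \ref{obs:ncopies} with this $X$, $Y$, and $\tilde{Y}=K_{A_1}K_{B_1}C_1$. Tracing out the output $\tilde{Y}$ of $\Lambda$ yields
\begin{equation}
\rho^{(n)}_{A_2 B_2 E A_1}=\tr_{K_{A_1}K_{B_1}C_1}(\mathcal{I}_n\otimes\Lambda)\,\rho_{XY}^{\otimes n}=\rho_X^{\otimes n}=\rho^{\otimes n}_{A_1 A_2 B_2 E},
\end{equation}
which is the asserted i.i.d. form.

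I expect the only genuine obstacle to be the careful bookkeeping of the first two paragraphs rather than anything deep: one must confirm that classical copying preserves both the tensor-power structure and the $A_1 A_2 B_2 E$ marginal, and that $\mathcal{P}_1$ really leaves $A_1 A_2 B_2 E$ alone, so that the no-signalling Observation applies verbatim with the bipartition $(X,Y)$. Once the partition is fixed correctly, the conclusion is immediate, and it is exactly this i.i.d. structure of $\rho^{(n)}_{A_2 B_2 E A_1}$ that licenses the use of the fully quantum asymptotic equipartition property for the rate of $\mathcal{P}_2$ in the proof of Theorem \ref{thm:main}.
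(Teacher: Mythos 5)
Your proposal is correct and follows essentially the same route as the paper: both reduce the claim to Observation \ref{obs:ncopies} with $\tilde{Y}=K_{A_1}K_{B_1}C_1$, the only difference being that you keep the original register $A_1$ and feed the copy $\hat{A}_1$ into $\mathcal{P}_1$, whereas the paper takes $X=\hat{A}_1A_2B_2E$ and $Y=A_1B_1$, i.e.\ keeps the copy and consumes the original. Since $A_1$ is classical and the fan-out map produces a state symmetric under exchange of $A_1$ and $\hat{A}_1$, this relabeling is immaterial and your argument is sound.
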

\begin{proof}
Apply observation \ref{obs:ncopies} to the state $(\rho^{in}_{\hat{A}_1A_1B_1A_2B_2E})^{\otimes n}$ with $X=\hat{A}_1A_2B_2E$, $Y=A_1B_1$ and $\tilde{Y}=K_{A_1}K_{B_1}C_1$.
\end{proof}

\color{black}
\begin{corollary}
     The lower bound on the  local key rates, between Alice to Bob1 and Alice to Bob2, for the ccq state given in Eq. (\ref{Eq:kappa_def3}), shared   at the end of the common key generation run  is given by 
     \begin{eqnarray}
        r({\cal P}_1) \geq \frac 12 \sum_s \big( I(K_{A_1}:K_{B_1})_{\kappa^s} - I(K_{A_1}:E|K_{B_2})_{\kappa^s} - \delta_1(s) - 2\delta_2(s) \big) \label{eq:A1B1keyrate}\\
         r({\cal P}_2) \geq \frac 12 \sum_s \big( I(K_{A_2}:K_{B_2})_{\kappa^s} - I(K_{A_2}:E|K_{B_1})_{\kappa^s} - 2\delta_1(s) - \delta_2(s) \big) \label{eq:A2B2keyrate}
     \end{eqnarray}
     where by $\kappa^s$, in the subscript of the above inequalities we refer to $\kappa^s_{K_{A_1}K_{A_2}K_{B_1}K_{B_2}XEE'}$.
\end{corollary}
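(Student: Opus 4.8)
The plan is to read this corollary as the instantiation of Theorem~\ref{thm:main} and Lemma~\ref{lem:lb} on the concrete ccq state $\kappa^s_{K_{A_1}K_{A_2}K_{B_1}K_{B_2}EE'}$ of Eq.~(\ref{Eq:kappa_def3}), followed by an average over the publicly announced bit $s$. First I would fix a value $s\in\{0,1\}$ and treat $\kappa^s$ as the generic input state $\rho_{A_1B_1A_2B_2E}$ appearing in Lemma~\ref{lem:lb}, under the identifications $A_1\leftrightarrow K_{A_1}$, $A_2\leftrightarrow K_{A_2}$, $B_1\leftrightarrow K_{B_1}$, $B_2\leftrightarrow K_{B_2}$ and $E\leftrightarrow EE'$, where all four key registers are classical. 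Theorem~\ref{thm:main} already gives $r({\cal P}_1)\geq I(K_{A_1}:K_{B_1})_{\kappa^s}-I(K_{A_1}:EK_{A_2}K_{B_2})_{\kappa^s}$ and the analogous expression for ${\cal P}_2$, so it remains to lower bound these via Lemma~\ref{lem:lb}, whose three hypotheses must be checked for each $\kappa^s$ separately with $s$-dependent slack parameters.

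The two correlation hypotheses $I(K_{A_1}:K_{B_1})_{\kappa^s}\geq H(K_{A_1})_{\kappa^s}-\delta_1(s)$ and $I(K_{A_2}:K_{B_2})_{\kappa^s}\geq H(K_{A_2})_{\kappa^s}-\delta_2(s)$ are precisely the quantities the honest parties estimate from their disclosed test-run statistics, so they hold with the observed $\delta_1(s),\delta_2(s)$. The crucial step is the third hypothesis: I would show that $I(K_{B_1}:K_{B_2})_{\kappa^s}=0$ \emph{exactly}, whence $\delta_3(s)=0$ and the $\delta_3$ term of Lemma~\ref{lem:lb} disappears from the bound, explaining its absence in (\ref{eq:A1B1keyrate}) and (\ref{eq:A2B2keyrate}). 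This follows from the joint probabilities: by Eq.~(\ref{Eq:probxyzs}) one has $p(x,y;z,s)=1/16$, and since $p(s)=1/2$ this gives $p(x,y,z\mid s)=1/8$, a uniform product distribution in which Bob1's raw string $(x,y)$ and Bob2's raw bit $z$ are statistically independent. Hence the two senders' keys are de-correlated not only on average but conditionally on each value of $s$, which is exactly what forces $\delta_3(s)=0$.

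With all hypotheses verified, applying Lemma~\ref{lem:lb} (its conclusions (\ref{eq:ExpressionA1B1}) and (\ref{eq:ExpressionA2B2})) to $\kappa^s$ yields for each $s$ the conditional bounds $I(K_{A_1}:K_{B_1})_{\kappa^s}-I(K_{A_1}:E\mid K_{B_2})_{\kappa^s}-\delta_1(s)-2\delta_2(s)$ for ${\cal P}_1$ and $I(K_{A_2}:K_{B_2})_{\kappa^s}-I(K_{A_2}:E\mid K_{B_1})_{\kappa^s}-2\delta_1(s)-\delta_2(s)$ for ${\cal P}_2$, with no $\delta_3$ contribution by the previous paragraph. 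Finally I would combine the two values of $s$: because $s$ is announced publicly (and is therefore known to Eve as well), the parties partition their runs according to the observed value of $s$ and run an independent Renes--Renner instance on each part, so the total rate is the convex combination $\sum_s p(s)\,r_s(\cdot)=\tfrac12\sum_s r_s(\cdot)$. This reproduces exactly (\ref{eq:A1B1keyrate}) and (\ref{eq:A2B2keyrate}).

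The main obstacle I expect is ensuring that everything is established for the $s$-conditioned state $\kappa^s$ rather than merely for the $s$-averaged $\kappa$: the monogamy argument underlying Lemma~\ref{lem:lb} (via Lemma~\ref{lem:monogamy}) requires $I(K_{B_1}:K_{B_2})$ to be small for the state actually fed into the protocol, so the conditional computation $p(x,y,z\mid s)=1/8$ from Eq.~(\ref{Eq:probxyzs}) is doing real work. I would also have to confirm that the $s$-dependent relabeling of Alice's decoding (Tables~\ref{Table:Secure_DC} and~\ref{Table:Secure_DC2}) preserves the correlation structure assumed in the two estimated hypotheses, so that $\delta_1(s)$ and $\delta_2(s)$ are the genuinely observed quantities for each branch. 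Once these conditional facts are pinned down, the remainder is a direct substitution into the already-proven lemma together with the standard public-partitioning argument for the average.
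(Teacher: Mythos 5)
Your proposal is correct and follows essentially the same route as the paper's own (very terse) proof: apply Lemma~\ref{lem:lb} (together with the rate bounds from Theorem~\ref{thm:main}) to each conditional state $\kappa^s$, set $\delta_3(s)=0$, and average over the publicly announced bit $s$ with $p(s)=\tfrac12$. Your two additions --- the explicit verification that $p(x,y,z\mid s)=1/8$ factorizes (via Eq.~(\ref{Eq:probxyzs})) so that $I(K_{B_1}:K_{B_2})_{\kappa^s}=0$ exactly, and the run-partitioning justification of the averaging --- are details the paper leaves implicit, and both are sound.
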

\begin{proof}
The proof of the corollary is immediate as Lemma \ref{lem:lb} 
is true for the ccq state $\kappa^s_{K_{A_1}K_{A_2}K_{B_1}K_{B_2}XEE'}$, for both $s = 0,1$. Averaging over $s$, and identifying the $s$ dependence of  $\delta_i, ~i = 1,2$ (given in  (\ref{eq:assum1}) and (\ref{eq:assum2})) and $\delta_3 = 0$.
\end{proof}

In the next section, we will simplify the lower bounds of the local key rates by using the entropic uncertainty relations.

 \subsection{Local key rates: Alice and Bob1}\label{sec:ratesA1B1}
 In this section, we will derive a computable expression of the lower bound on the secure key rate $r({\cal P}_1)$ between Alice and Bob1, given in  (\ref{eq:A1B1keyrate}). In our protocol, Bob1 prepares two bits of raw key $K_{B_1}\in \{x,y\}_{x,y = 0}^1$, with equal probability $p(x,y) = \frac 14$ from Eq. (\ref{Eq:probxyzs}), which he wishes to share with Alice securely. After the proper execution of the protocol Alice's key string $K_{A_1} \in \{i,j\}_{i,j=0}^1$ is highly correlated with $K_{B_1}$, and it reaches a maximum $K_{A_1} = K_{B_1}$ in the ideal situation .
 %In the ideal execution of the protocol, the first two bits of the Alice's measurement outcomes $\{i,j\}$ which are associated with Bob1 and hence, we should have  $(x,y) = (i,j)$.
 %But due to the presence of the Evesdropper, this is not always happened. In this section, we slightly abuse the notations, by $\mathbf{A_1}$, we denote  the classical state of Alice (state of the register in part of Alice) which contains the information of bits $\{i,j\}$ and by $\mathbf{B_1}$, the same for Bob1.
 %, that Alice can share with Bob1 and, $r_{A \rightarrow B_2}$ with Bob2.
Moreover, in order to estimate the second term on the r.h.s of (\ref{eq:A1B1keyrate}),  we assume that Bob2 is an honest party who follows the protocol. In this process, Bob2 will announce publicly his key bit $K_{B_2} \in \{z\}_{z = 0}^1$ which he encodes in his part of the shared $\rho_{AB_1B_2}$, with equal probability. Hence from  (\ref{eq:A1B1keyrate}), we have
 
%  \textcolor{red}{ TO DO
% \\
% \begin{eqnarray}
%     I(A_1:B_1) - I(A_1:E|B_2) &=& H(A_1) - H(A_1|B_1) - (H(A_1|B_2) - H(A_1|EB_2)) \\
%     &=& (H(A_1) - H(A_1|B_2)) - H(A_1|B_1) + H(A_1|EB_2) \\
%     &\geq& - H(A_1|B_1) + H(A_1|EB_2),
% \end{eqnarray}
% where we use the fact that $H(A_1) - H(A_1|B_2) = I(A_1:B_2)\geq 0$.}

\begin{eqnarray}
 r({\cal P}_1) &\geq& 
 \frac 12 \sum_s \big( I(K_{A_1}:K_{B_1})_{\kappa^s} - I(K_{A_1}:E|K_{B_2})_{\kappa^s} - \delta_1(s) - 2\delta_2(s) \big), \nonumber \\
 &=& \frac 12 \sum_s \big( H(K_{A_1})_{\kappa^s} - H(K_{A_1}|K_{B_1})_{\kappa^s} - H(K_{A_1}|K_{B_2})_{\kappa^s} + H(K_{A_1}|EK_{B_2})_{\kappa^s} -\delta_1(s) - 2\delta_2(s) \big), \label{eq:local_A1B1_1} \nonumber \\
 &\geq&  \frac 12 \sum_s \big( H(K_{A_1}|EK_{B_2})_{\kappa^s} - H(K_{A_1}|K_{B_1})_{\kappa^s}-\delta_1(s) - 2\delta_2(s) \big),  \label{eq:local_A1B1_2} \nonumber \\
 &=& \frac 12 \sum_s \left( \frac 12 \sum_z H(K_{A_1}|E)_{\kappa^s(z)} - H(K_{A_1}|K_{B_1})_{\kappa^s}-\delta_1(s) - 2\delta_2(s) \right).  \label{eq:local_A1B1_3}
 \end{eqnarray}
 In the second inequality we have used the fact that $H(K_{A_1})_{\kappa^s} - H(K_{A_1}|K_{B_2})_{\kappa^s} = I(K_{A_1}:K_{B_2})_{\kappa^s} \geq 0$, and by $\kappa^s$, in the subscript of the above expressions, we refer to the ccq state, $\kappa^s_{K_{A_1}K_{A_2}K_{B_1}K_{B_2}EE'}$, given in Eq. (\ref{Eq:kappa_def3}),
%  by
%  \begin{eqnarray}\label{eq:kappa_A1B1A2B2E}
%   \kappa^s_{K_{A_1}K_{A_2}K_{B_1}K_{B_2}XEE'}= \sum_{ijk} \sum_{xy} \sum_{z} p(ijk;xy;z|s) \ket{ij}\bra{ij}_\mathbf{A_1} \otimes  \ket{xy}\bra{xy}_\mathbf{B_1} \otimes \ket{k}\bra{k}_{A_2} \otimes \ket{z}\bra{z}_{B_2} \otimes \rho_{EE'}^{ijk;xy;zs},
%  \end{eqnarray}
 %here we change the subscript from Eq. (\ref{Eq:kappa_def3}), to denote the fact that the measurement outcomes by Bob1 is binary $x,y$ so we choose $\mathbf{B_1}$, and $i.j$  is the corresponding measurement outcomes for Alice, $\mathbf{A_1}$, and so on.
 and by $\kappa^s(z)$,  we denote %the following state
 \begin{eqnarray}
   \kappa^s_{K_{A_1}K_{A_2}K_{B_1}EE'}(z) = \sum_{ijk} \sum_{xy}  p(ijk;xy|zs) \ket{ij}\bra{ij}_{K_{A_1}} \otimes  \ket{xy}\bra{xy}_{K_{B_1}} \otimes \ket{k}\bra{k}_{K_{A_2}}  \otimes \rho_{EE'}^{ijk;xy;zs},
 \end{eqnarray}
 where the probability of getting outcome $z$, $p(z) = \frac 12$, $p(ijk;xy|zs) = 4 p(ijk;xy;zs)$. According to our protocol, the key bits between Bob1 and Bob2 are completely uncorrelated, hence, $\delta_3 = 0$.
 
 Computing $H(K_{A_1}|K_{B_1})_{\kappa^s}$, and $\delta_i(s)$ for $i = 1,2$ is not very difficult once we know $\kappa^s_{K_{A_1}K_{A_2}K_{B_1}K_{B_2}EE'}$, as $\delta_i(s) \geq H(K_{A_i}|K_{B_i})_{\kappa^s}, ~i = 1,2$ from (\ref{eq4:as1}) and (\ref{eq4:as2}). The lower bound in (\ref{eq:A1B1keyrate}), holds true for all $\delta_i(s) \geq H(K_{A_i}|K_{B_i})_{\kappa^s}, ~i = 1,2$, and therefore
 \begin{eqnarray}
    r({\cal P}_1) &\geq& \frac 14   \sum_{z,s} H(K_{A_1}|E)_{\kappa^s(z)} - \frac 12 \sum_s  (H(K_{A_1}|K_{B_1})_{\kappa^s} + H(K_{A_2}|K_{B_2})_{\kappa^s} ).
 \end{eqnarray}
 
 The non-trivial term in the above expression is $H(K_{A_1}|E)_{\kappa^s(z)}$, and
 note that in order to compute it we no longer need to consider  $K_{A_2}$. As such,  the minimal required ccq state is 
 \begin{eqnarray}\label{eq:kappa_A1B1E}
   && \kappa^s_{K_{A_1}K_{B_1}EE'}(z) \nonumber \\
   &=& \sum_{ij} \sum_{xy}  \sum_k p(ijk;xy|zs) \ket{ij}\bra{ij}_{K_{A_1}} \otimes  \ket{xy}\bra{xy}_{K_{B_1}}  \otimes \rho_{EE'}^{ijk;xy;zs}, \nonumber \\
   &=&\sum_{ij} \sum_{xy}  \sum_k  tr_{AA_1A_2B_1B'_1} \left( (\hat{K}^{i,j,k}_s)_{AA_1A_2} \otimes \hat{K}^{x,y}_{B_1B_1'} \otimes \rho^{z,s}_{AA_1A_2B_1B'_1EE'} (\hat{K}^{i,j,k}_s)_{AA_1A_2} \otimes \hat{K}^{x,y}_{B_1B_1'} \right) \ket{ij}\bra{ij}_{K_{A_1}} \otimes  \ket{xy}\bra{xy}_{K_{B_1}} \nonumber \\
   &=& \sum_{ij} \sum_{xy}   tr_{AA_1A_2B_1B'_1} \left( (\hat{\bar{K}}^{i,j}_s)_{AA_1A_2} \otimes \hat{K}^{x,y}_{B_1B_1'} \otimes \rho^{z,s}_{AA_1A_2B_1B'_1EE'} (\hat{\bar{K}}^{i,j}_s)_{AA_1A_2} \otimes \hat{K}^{x,y}_{B_1B_1'} \right) \ket{ij}\bra{ij}_{K_{A_1}} \otimes  \ket{xy}\bra{xy}_{K_{B_1}} 
 \end{eqnarray}
%  and hence, 
%  \begin{eqnarray}
%   H(K_{A_1}|K_{B_1})_{\kappa^s} &=&  \sum_{xy} p(xy) \sum_{ij}  p(ij|xyzs) \log_2 p(ij|xyzs)\\
%   &=& \frac 14 \sum_{xy} \sum_{ij}  p(ij|xyzs) \log_2 p(ij|xyzs),
%  \end{eqnarray}
%  where $p(ij|xyzs) = \frac{p(ij;xy|zs)}{p(x,y|zs)} = \frac{p(ij;xy|zs)}{p(x,y)}$, as the events $x,y$ is independent of $z,s$, $p(x,y) = \frac 14$ and $p(ij;xy|zs) = \sum_k p(ijk;xy|zs)$.
%  To calculate $H(\mathbf{A_1}|E)_{\kappa^s(z)}$, we will rewrite Eq. (\ref{eq:kappa_A1B1E}), as follows
%  \begin{eqnarray}\label{eq:kappa_A1B1E2}
%   && \kappa^s_{\mathbf{A_1}\mathbf{B_1}EE'}(z) \nonumber \\
%   &=& \sum_{ij} \sum_{xy}  \sum_k  tr_{AA_1A_2B_1B'_1} \left( (\hat{K}^{i,j,k}_0)_{AA_1A_2} \otimes \hat{K}^{x,y}_{B_1B_1'} \otimes \rho^{z,s}_{AA_1A_2B_1B'_1EE'} (\hat{K}^{i,j,k}_0)_{AA_1A_2} \otimes \hat{K}^{x,y}_{B_1B_1'} \right) \ket{ij}\bra{ij}_\mathbf{A_1} \otimes  \ket{xy}\bra{xy}_\mathbf{B_1}, \nonumber \\
%   &=& \sum_{ij} \sum_{xy}   tr_{AA_1A_2B_1B'_1} \left( (\hat{\bar{K}}^{i,j}_0)_{AA_1A_2} \otimes \hat{K}^{x,y}_{B_1B_1'} \otimes \rho^{z,s}_{AA_1A_2B_1B'_1EE'} (\hat{\bar{K}}^{i,j}_0)_{AA_1A_2} \otimes \hat{K}^{x,y}_{B_1B_1'} \right) \ket{ij}\bra{ij}_\mathbf{A_1} \otimes  \ket{xy}\bra{xy}_\mathbf{B_1}, 
%  \end{eqnarray}
 where we have used Eq. (\ref{eq:varrho_def}) for $p(ijk;xy|zs)$, $\rho^{z,s}_{AA_1A_2B_1B'_1EE'} = 4 \tr_{B_2B'_2}(\ket{B(z,s)}\bra{B(z,s)}_{B_2B'_2}\ket{\Psi}\bra{\Psi}_{AA_1A_2B_1B'_1B_2B'_2EE'})$ and $\hat{\bar{K}}^{i,j}_s = \sum_k \hat{K}^{i,j,k}_s$, for $s = 0,1$, is rank-2 projective measurement.
 
 Let us now consider the test run of the protocol, which both the honest parties perform to estimate the correlation present between the honest parties and the eavesdropper. Since each of the honest parties perform the test run with probability $1 - p$, then the ccq state when only Bob1 perform the test run is as follows
 \begin{eqnarray}\label{Eq:xi_defAB1}
  && \xi^s_{K_{A_1}T_{B_1}EE'}(z) \nonumber \\
   &=& \sum_{ij} \sum_{\bar{x}\bar{y}}   tr_{AA_1A_2B_1B_2} \left( (\hat{\bar{K}}^{i,j}_s)_{AA_1A_2} \otimes \hat{T}^{x,y}_{B_1B_1'} \otimes \rho^{z,s}_{AA_1A_2B_1B'_1EE'} (\hat{\bar{K}}^{i,j}_s)_{AA_1A_2} \otimes \hat{T}^{x,y}_{B_1B_1'} \right) \ket{ij}\bra{ij}_{K_{A_1}} \otimes  \ket{x,y}\bra{x,y}_{T_{B_1}}, \nonumber \\
 \end{eqnarray}
 where, $\hat{T}^{x,y}_{B_1B_1'} = \ket{x,y_\vdash}\bra{x,y_\vdash}_{B_1B'_1}$ are the projective measurements for the test run and  $T_{B_1}$ represents the system of the register, which stores the measurement outcome of the test run of Bob1.
 
%  \textcolor{blue}{For ALice to Bob1, Bob1 choose two complementary measurement basis
%  $\ket{B(x,y)}$ and $\ket{x,y_\vdash} \in\{\ket{0,+}, \ket{0,-}, \ket{1,+}, \ket{1,-}\}$, and the overlap 
%  $ |\bra{B(x,y)} \ket{x',y_\vdash'}|^2 = \frac 14$
%  }
 
From the no-singalling principle we find that $H(K_{A_1}|E)_{\kappa^s(z)} =
H(K_{A_1}|E)_{\xi^s(z)}$.
Moreover,  when Alice also perform test run the ccq state takes the form 
\begin{eqnarray}\label{Eq:sigma_defAB1}
  && \sigma^s_{T_{A_1}T_{B_1}EE'}(z) \nonumber \\
   &=& \sum_{i,j} \sum_{x,y}   tr_{AA_1A_2B_1B_2} \left( (\hat{T}^{i,j})_{AA_1A_2} \otimes \hat{T}^{x,y}_{B_1B_1'} \otimes \rho^{z,s}_{AA_1A_2B_1B'_1EE'} (\hat{T}^{i,j})_{AA_1A_2} \otimes \hat{T}^{x,y}_{B_1B_1'} \right) \ket{i,j}\bra{i,j}_{T_{A_1}} \otimes  \ket{x,y}\bra{x,y}_{T_{B_1}}, \nonumber \\
 \end{eqnarray}
 The rank two projective measurements performed by Alice on her entire 3-qubit system are given by  $\{ (\hat{T}^{i,j})_{AA_1A_2}\}_{i,j = 0}^1$,
  %Alice used for the test run,  in her part of the shared state is given by
 \begin{equation}\label{eq:rank2_test}
     (\hat{T}^{i,j})_{AA_1A_2} =  \ket{i}\bra{i}_A \otimes \ket{j_\vdash}\bra{j_\vdash}_{A_1} \otimes \mathbb{I}_{A_2},
 \end{equation}
 where $\ket{\beta_\vdash} = \frac{1}{\sqrt{2}}(\ket{0} + (-1)^\beta \ket{1})$.
 
 Now applying the entropic uncertainty relation \cite{Berta2010,tomamichel2012framework}, in the two states $\sigma^s_{T_{A_1}T_{B_1}EE'}(z)$ and $\xi^s_{K_{A_1}T_{B_1}EE'}(z)$, where  two different sets of measurements $\{ (\hat{\bar{K}}^{i,j}_s)_{AA_1A_2}\}_{i,j = 0}^1$ and $\{ (\hat{T}^{i,j})_{AA_1A_2}\}_{i,j = 0}^1$ are applied to the subsystem kept by Alice gives
% are acting on the subsystem shared by Alice, i.e., 
 \begin{equation}
     H(K_{A_1}|E)_{{\xi^s}(z)} + H(T_{A_1}|T_{B_1})_{{\sigma^s}(z)} \geq \log_2 \frac 1c,
 \end{equation}
 where, $c$ is the overlap between the two set of measurement operators. It is given by \cite{tomamichel2012framework}
 \begin{eqnarray}
     c = \max_{(i,j), (i'j')} ||\sqrt{(\hat{K}_s^{i,j})_{AA_1A_2}}\sqrt{(\hat{T}^{i',j'})_{AA_1A_2}}||_{\infty}^2 = \frac 14. 
     %&=& \max_{(i,j), (i'j')} ||\sqrt{\hat{T}_{AA_1A_2}^{i',j'}}(\hat{K}_s^{i,j})_{AA_1A_2}\sqrt{\hat{T}_{AA_1A_2}^{i',j'}}||_{\infty} 
     \label{eq:overlap1}
 \end{eqnarray}
 The proof of Eq. (\ref{eq:overlap1}), has been given in Appendix \ref{appen:overlap}.
 Finally,  the lower bound on the local key rate between  Alice to Bob1 can be expressed as 
 
 \begin{eqnarray}
 r({\cal P}_1)  &\geq& 2 - \frac 14   \sum_{z,s} H(T_{A_1}|T_{B_1})_{{\sigma^s}(z)} - \frac 12 \sum_s  (H(K_{A_1}|K_{B_1})_{\kappa^s} + H(K_{A_2}|K_{B_2})_{\kappa^s} )\label{eq:final_rate_A1B1}.
 \end{eqnarray}
 
We calculate the above bound for different noise models in Sec. \ref{sec:noisy_case}.
 
% From (\ref{eq:as1}) and (\ref{eq:as2}), it is clear that $\delta_1 \geq H(\mathbf{A_1}|\mathbf{B_1})_{\kappa^s}$ and $\delta_2 \geq H(A_2|B_2)_{\kappa^s}$, and (\ref{eq:final_rate_A1B1}) is true for all such $\delta_1$ and $\delta_2$, hence we conclude that 

% \begin{eqnarray}
%  r({\cal P}_1) &\geq& 2 - \frac 12 \sum_s \left( \frac 12 \sum_{z}  H(\mathbf{A_1}|\mathbf{B_1})_{\sigma^s(z)} + 2 H(\mathbf{A_1}|\mathbf{B_1})_{\kappa^s}  + 2 H(A_2|B_2)_{\kappa^s}\right)\label{eq:final_rate_A1B1_2}.
%  \end{eqnarray}
 
 \subsection{Local key rates: Alice to Bob2}\label{sec:ratesA2B2}
 In our protocol, Bob2 prepares a single bit $K_{B_2} \in \{z\}_{z = 0}^1$, which he wants to share with Alice, by performing a unitary encoding  - or equivalently a measurement in the purified protocol - on the shared state. Alice performs a measurement on the encoded state and obtains information about $K_{A_2} \in \{k\}_{k = 0}^1$. In the ideal execution of the protocol we always have $K_{A_2} = K_{B_2}$, but due to the presence of the eavesdropper, the amount of secure key they can share between each other is less than this. It is given in Eq. (\ref{eq:A2B2keyrate}), which can be written as 
%Let us restate  the lower bound on the local key rate between Alice and Bob2, 
 \begin{eqnarray}
 r({\cal P}_2) &\geq&
\frac 12 \sum_s \big( I(K_{A_2}:K_{B_2})_{\kappa^s} - I(K_{A_2}:E|K_{B_1})_{\kappa^s} - 2\delta_1(s) - \delta_2(s) \big) \label{eq:local_A2B_1}, \nonumber \\
&=& \frac 12 \sum_s (H(K_{A_2})_{\kappa^s} - H(K_{A_2}|K_{B_2})_{\kappa^s} - H(K_{A_2}|K_{B_1})_{\kappa^s} + H(K_{A_2}|EK_{B_1})_{\kappa^s} - 2\delta_1(s) - \delta_2(s)), \nonumber \\
&\geq&  \frac 12 \sum_s \big( H(K_{A_2}|EK_{B_1})_{\kappa^s} - H(K_{A_2}|K_{B_2})_{\kappa^s} - 2\delta_1(s) - \delta_2(s) \big),  \label{eq:local_A2B2_2} \\
 &=& \frac 12 \sum_s \left( \frac 14 \sum_{xy} H(K_{A_2}|E)_{\kappa^s(xy)} - H(K_{A_2}|K_{B_2})_{\kappa^s}-2\delta_1(s) - \delta_2(s) \right),  \label{eq:local_A2B2_3}
 \end{eqnarray}
 where $\kappa^s$ in the subscript denotes  $\kappa^s_{K_{A_1}K_{A_2}K_{B_1}K_{B_2}EE'}$, and we use $H(K_{A_2})_{\kappa^s} - H(K_{A_2}|K_{B_1})_{\kappa^s} = I(K_{A_2}:K_{B_1}) \geq 0$ in the second inequality (\ref{eq:local_A2B2_2}).
%  Here $\kappa$ denotes the ccq states, $\kappa_{\mathbf{A_1}\mathbf{B_1}A_2B_2EE'}$, generated between the honest parties, after the common key generation run, given in Eq. (\ref{eq:kappa_A1B1A2B2E}). The normal font $A_2$ and $B_2$, in the argument of the conditional mutual information denotes the fact that the possible values of $A_2$ and $B_2$ are binary, whereas the bold-font $\mathbf{B_1}$, represents the four possible values of it.  Eq. (\ref{eq:local_A2B_1}), can be further expressed as 
%  \begin{eqnarray}
%  r({\cal P}_2) &\geq& \frac 14 \sum_{x,y} (I(A_2:B_2)_{\kappa^s(x,y)} - I(A_2:E)_{\kappa^sx,y)}) \\
%  &=&  \frac 14 \sum_{x,y} (H(A_2|E)_{\kappa^s(x,y)} - H(A_2|B_2)_{\kappa^s(x,y)}), \label{eq:local_A2B_2}
%  \end{eqnarray}
 The ccq state $\kappa^s(x,y)$, in the subscript of the first term in (\ref{eq:local_A2B2_3})  is
 \begin{eqnarray}\label{eq:kappa_A1A2B2E}
   \kappa^s_{K_{A_1}K_{A_2}K_{B_2}EE'}(x,y) = \sum_{ijk}  \sum_{z} p(ijk;z|xys) \ket{ij}\bra{ij}_{K_{A_1}} \otimes \ket{k}\bra{k}_{K_{A_2}} \otimes \ket{z}\bra{z}_{K_{B_2}} \otimes \rho_{EE'}^{ijk;xy;zs},
 \end{eqnarray}
 and the factor $\frac 14 = p(x,y)$ represents the probability of getting outcome $\{x,y\}$ in the key generation run of Bob1.
 %, when . %when Bob1 perform his . 
 Note that in order to find the lower bound on $r({\cal P}_2)$, one needs to estimate the correlation between Alice and Eve (the first term in Eq. (\ref{eq:local_A2B2_3})), when the value of $K_{B_1} \in \{x,y\}_{x,y = 0}^1$, the raw key string  of Bob1, is known. This means that Bob1 needs to disclose his bits publicly. This was also the case when we were estimating the value of $H(K_{A_1}|EK_{B_2})_{\kappa^s}$, the lower bound on $r({\cal P}_1)$. %Hence, to for both the cases the other sender should work in the favour of the other.
 
 %term in the  r.h.s of (\ref{eq:A1B1keyrate}),  we will consider that Bob2 who is another honest party, will work in the favour of Alice and Bob1. In this process, Bob2 will announce publicly, his key bit $K_{B_2} \in \{z\}_{z = 0}^1$, 
 
 Let us now focus  on the second term in Eq. (\ref{eq:local_A2B2_3}) i.e. $H(K_{A_2}|E)_{\kappa^s(x,y)}$. 
 The effective ccq state is 
 %shared between Alice and Bob2 and the eavesdropper, for the calculation of  Eq. (\ref{eq:local_A2B_2}), is 
 \begin{eqnarray}\label{eq:kappa_A2B2E}
   \kappa^s_{K_{A_2}K_{B_2}EE'}(x,y) = \sum_{k}  \sum_{z} \left(\sum_{ij} p(ijk;z|xys)  \rho_{EE'}^{ijk;xy;zs} \right) \otimes \ket{k}\bra{k}_{K_{A_2}} \otimes \ket{z}\bra{z}_{K_{B_2}},
 \end{eqnarray}
 where $p(ijk;z|xys) = 4 p(ijk;xy;z|s)$, and from Eqs. (\ref{Eq:kappa_def}) and (\ref{eq:varrho_def}), we can write $\kappa^s_{K_{A_2}K_{B_2}EE'}(x,y)$, as
% One can easily calculate $H(A_2|B_2)_{\kappa^s(x,y)})$, from the above ccq state, but for $H(A_2|E)_{\kappa^s(x,y)}$, we need the following form of Eq. (\ref{eq:kappa_A2B2E}), which is 
 \begin{eqnarray}\label{eq:kappa_A2B2E_2}
   && \kappa_{K_{A_2}K_{B_2}EE'}^s(x,y) \nonumber \\
   &=& \sum_{k}  \sum_{z} \sum_{ij} 
   \tr_{AA_1A_2B_2B'_2} \left( (\hat{K}^{i,j,k}_s)_{AA_1A_2} \otimes \hat{K}^{z,s}_{B_2B_2'} ~ \rho^{x,y}_{AA_1A_2B_2B'_2EE'} ~(\hat{K}^{i,j,k}_s)_{AA_1A_2} \otimes \hat{K}^{z,s}_{B_2B_2'} \right) \otimes \ket{k}\bra{k}_{K_{A_2}} 
   \otimes \ket{z}\bra{z}_{K_{B_2}}, \nonumber \\
    &=& \sum_{k}  \sum_{z} \tr_{AA_1A_2B_2B'_2} \left( (\hat{\tilde{K}}^{k}_s)_{AA_1A_2} \otimes \hat{K}^{z,s}_{B_2B_2'}  ~\rho^{x,y}_{AA_1A_2B_2B'_2EE'}~ (\hat{\tilde{K}}^{k}_s)_{AA_1A_2} \otimes \hat{K}^{z,s}_{B_2B_2'} \right) \otimes \ket{k}\bra{k}_{K_{A_2}} 
   \otimes \ket{z}\bra{z}_{K_{B_2}}, 
   %&&= \sum_{z} \Bigg[\sum_{k}   \tr_{AA_1A_2} \left( (\hat{\tilde{K}}^{k}_0)_{AA_1A_2}  \rho^{x,y,z,s}_{AA_1A_2EE'} (\hat{\tilde{K}}^{k}_0)_{AA_1A_2}  \right) \ket{k}\bra{k}_{A_2} \Bigg] \otimes \ket{z}\bra{z}_{B_2},
 \end{eqnarray}
 where $\rho^{x,y}_{AA_1A_2B_2B'_2EE'} =  4 \tr_{B_1B'_1}(\hat{K}^{x,y}_{B_1B'_1}\ket{\Psi}\bra{\Psi}_{AA_1A_2B_1B'_1B_2B'_2EE'}\hat{K}^{x,y}_{B_1B'_1})$ and $\hat{\tilde{K}}^{k}_s = \sum_{i,j} \hat{K}^{i,j,k}_s$, is rank-4 projective measurement.

 Now consider the following state
 \begin{eqnarray}\label{eq:xi_A2B2E_2}
   && \xi^{s}_{K_{A_2}T_{B_2}EE'}(x,y) \nonumber \\& =& \sum_{k}  \sum_{z} \tr_{AA_1A_2B_2B'_2} \left( (\hat{\tilde{K}}^{k}_s)_{AA_1A_2} \otimes \hat{T}^{z,s}_{B_2B_2'} \rho^{x,y}_{AA_1A_2B_2B'_2EE'} (\hat{\tilde{K}}^{k}_s)_{AA_1A_2} \otimes \hat{T}^{z,s}_{B_2B_2'} \right) \ket{k}\bra{k}_{K_{A_2}} 
   \otimes \ket{z}\bra{z}_{T_{B_2}} ~~~~~~~ 
 \end{eqnarray}
 where for each $s$ the projective measurement $\hat{T}^{z,s}_{B_2B_2'} = \ket{z_\vdash, (z \oplus s)_\vdash}\bra{z_\vdash, (z \oplus s)_\vdash}_{B_2B_2'}$ and $\ket{\beta_\vdash} = \frac{1}{\sqrt{2}}(\ket{0} + (-1)^\beta \ket{1})$. It is easy to verify that for each value of $s \in \{0,1\}$, the two different measurements lie in the same subspace of the Hilbert space $({\cal H}^2)^{\otimes 2}$, i.e.  $\sum_z \hat{K}^{z,s}_{B_2B_2'}  = \sum_z \hat{T}^{z,s}_{B_2B_2'}$. It implies that $\kappa_{A_2EE'}^s(x,y) = \xi_{A_2EE'}^s(x,y)$ and in return $H(A_2|E)_{\kappa^s(x,y)} = H(A_2|E)_{\xi^s(x,y)}$.
 
%  \textcolor{blue}{Why $\kappa_{A_2EE'}^s(x,y) = \xi_{A_2EE'}^s(x,y)$? From Eq. (\ref{eq:kappa_A2B2E_2}), we have 
%  \begin{eqnarray}
%      \kappa_{A_2EE'}^s(x,y) 
%     &=& \sum_{k}  \sum_{z} \tr_{AA_1A_2B_2B'_2} \left( (\hat{\tilde{K}}^{k}_0)_{AA_1A_2} \otimes \hat{K}^{z,s}_{B_2B_2'}  ~\rho^{x,y}_{AA_1A_2B_2B'_2EE'}~ (\hat{\tilde{K}}^{k}_0)_{AA_1A_2} \otimes \hat{K}^{z,s}_{B_2B_2'} \right) \ket{k}\bra{k}_{A_2}, \\
%     &=& \sum_{k}  \tr_{AA_1A_2} \left( (\hat{\tilde{K}}^{k}_0)_{AA_1A_2}  \tr_{B_2B'_2} \left( \sum_{z} \hat{K}^{z,s}_{B_2B_2'}  ~\rho^{x,y}_{AA_1A_2B_2B'_2EE'} \hat{K}^{z,s}_{B_2B_2'}\right)~ (\hat{\tilde{K}}^{k}_0)_{AA_1A_2}   \right) \ket{k}\bra{k}_{A_2},~~ \\
%     &=& \sum_{k}  \tr_{AA_1A_2} \left( (\hat{\tilde{K}}^{k}_0)_{AA_1A_2}  \tr_{B_2B'_2} \left( \sum_{z} \hat{K}^{z,s}_{B_2B_2'}  ~\rho^{x,y}_{AA_1A_2B_2B'_2EE'} \right)~ (\hat{\tilde{K}}^{k}_0)_{AA_1A_2}   \right) \ket{k}\bra{k}_{A_2} \\
%     &=& \sum_{k}  \tr_{AA_1A_2} \left( (\hat{\tilde{K}}^{k}_0)_{AA_1A_2}  \tr_{B_2B'_2} \left( \sum_{z} \hat{T}^{z,s}_{B_2B_2'}  ~\rho^{x,y}_{AA_1A_2B_2B'_2EE'} \right)~ (\hat{\tilde{K}}^{k}_0)_{AA_1A_2}   \right) \ket{k}\bra{k}_{A_2} \\
%     &=& \sum_{k}  \tr_{AA_1A_2} \left( (\hat{\tilde{K}}^{k}_0)_{AA_1A_2}  \tr_{B_2B'_2} \left( \sum_{z} \hat{T}^{z,s}_{B_2B_2'}  ~\rho^{x,y}_{AA_1A_2B_2B'_2EE'} \hat{T}^{z,s}_{B_2B_2'}\right)~ (\hat{\tilde{K}}^{k}_0)_{AA_1A_2}   \right) \ket{k}\bra{k}_{A_2},~~ 
%  \end{eqnarray}
%  and we recover Eq. (\ref{eq:xi_A2B2E_2}).
% } \\
The state of the system is given by ccq state from  Eq. (\ref{eq:xi_A2B2E_2}) when Bob2 applies the test run. This is irrespective of the key generation run performed by the other parties. Recall that Bob2's measurement basis in the test run is different from Bob1, as Bob2 shares only a single key bit with Alice, and he discloses his auxiliary bit $s$. It divides the 4-dimensional space in part of Bob2 into two sub-spaces. Moreover $\{K^{z,s}\}_{z=0}^1$ and $\{T^{z,s}\}_{z=0}^1$ form two complementary measurements basis for both $s =0,1 $.

 Alice performs her test run to detect an eavesdropper in the channel between her and Bob2 by applying measurements to her joint system. The respective projective measurement $(\hat{\tilde{T}}^{k}_s)_{AA_1A_2}$ is given by 
 \begin{equation}
      (\hat{\tilde{T}}^{k}_s)_{AA_1A_2} = \mathbb{I}_{A}\otimes \mathbb{I}_{A_1}\otimes \ket{(k \oplus s)_\vdash}\bra{(k \oplus s)_\vdash}_{A_2}.
  \end{equation}
It results in the following ccq state
 \begin{eqnarray}\label{eq:sigma_A2B2E_2}
   && \sigma^{s}_{T_{A_2}T_{B_2}EE'}(x,y) \nonumber \\& =& \sum_{k}  \sum_{z} \tr_{AA_1A_2B_2B'_2} \left( (\hat{\tilde{T}}^{k}_s)_{AA_1A_2} \otimes \hat{T}^{z,s}_{B_2B_2'} \rho^{x,y}_{AA_1A_2B_2B'_2EE'} (\hat{\tilde{T}}^{k}_s)_{AA_1A_2} \otimes \hat{T}^{z,s}_{B_2B_2'} \right) \ket{k}\bra{k}_{T_{A_2}} 
   \otimes \ket{z}\bra{z}_{T_{B_2}}, 
 \end{eqnarray}
 where $T_{A_2}$ is a register in which Alice keeps her measurement outcome of the test run.  
 
 Applying the entropic uncertainty relation to states $\xi^{s}_{K_{A_2}T_{B_2}EE'}(x,y)$ and $\sigma^{s}_{T_{A_2}T_{B_2}EE'}(x,y)$, we find 
 \begin{equation}\label{eq:entropic_uncertainty2}
     H(K_{A_2}|E)_{\xi^{s}(x,y)} + H(T_{A_2}|T_{B_2})_{\sigma^{s}(x,y)} \geq \log_2 \frac{1}{\tilde{c}},
 \end{equation}
 Quantity $\tilde{c}$ is the overlap between two measurements $\{(\hat{\tilde{K}}^{k}_s)_{AA_1A_2}\}$ and $\{(\hat{\tilde{T}}^k_s)_{AA_1A_2}\}$ by Alice, is given by 
 \begin{eqnarray}
     \tilde{c} = \max_{k', k} ||\sqrt{(\hat{\tilde{K}}_s^{k'})_{AA_1A_2}}\sqrt{(\hat{\tilde{T}}^k_s)_{AA_1A_2}}||_{\infty}^2 = \frac 12. \label{eq:overlap2}
     %&=& \max_{k', ijk} ||\sqrt{(\hat{\tilde{T}}^k_s)_{AA_1A_2}} \sum_{i'j'}(\hat{K}_s^{i'j'k'})_{AA_1A_2}\sqrt{(\hat{\tilde{T}}^k_s)_{AA_1A_2}}||_{\infty}
 \end{eqnarray}
 See Appendix \ref{appen:overlap} for the proof.
 %\textcolor{blue}{This value is $\frac 12$ due to the fact that $(\hat{\tilde{K}}_0^{k'})_{AA_1A_2}$, divide the 8 dimensional space into two 4 dimensional sub spaces for each $k'$. And the choice of $\hat{\tilde{T}}_{AA_1A_2}^{i,j,k}$ is equally distributed into this two sub spaces, hence the overlap is $\frac 12$.}\\
 Putting this altogether we obtain the final expression for the key rate 
 \begin{eqnarray}
     r({\cal P}_2) &\geq& 1 - 
  \frac 12 \sum_s \left( \frac 14 \sum_{xy} H(T_{A_2}|T_{B_2})_{\sigma^s(xy)} - H(K_{A_2}|K_{B_2})_{\kappa^s}-2\delta_1(s) - \delta_2(s) \right), \label{eq:semifinal_rate_A2B2}\\
  &=& 1 -  \frac 18 \sum_{xys} H(T_{A_2}|T_{B_2})_{\sigma^s(xy)} - \frac 12 \sum_s (H(K_{A_1}|K_{B_1})_{\kappa^s} + H(K_{A_2}|K_{B_2})_{\kappa^s}), \label{eq:final_rate_A2B2}
 \end{eqnarray}
 where we have replaced $\delta_1(s) = H(K_{A_1}|K_{B_1})_{\kappa^s}$ and $\delta_2(s) = H(K_{A_2}|K_{B_2})_{\kappa^s}$, as (\ref{eq:semifinal_rate_A2B2}) is true for all $\delta_1(s) \geq  H(K_{A_1}|K_{B_1})_{\kappa^s}$ and $\delta_2(s) \geq H(K_{A_2}|K_{B_2})_{\kappa^s}$ (from (\ref{eq4:as1}) and (\ref{eq4:as2})).

 \section{Key rates for noisy quantum channels} 
\label{sec:noisy_case}

In this section, we are going to analyse the different strategies of an eavesdropper. We will calculate the lower bound on the key rate for several exemplary quantum channels. According to our protocol, Alice and Bobs use forward and backward quantum channels to transfer two qubits where each sender receives only a single qubit.

We consider two main common noise models, namely the depolarising channel and the amplitude damping channel, under different scenarios and effects 
on forward and backward communication and calculate the respective lower bounds on the key rate.

 \subsection{Depolarising channel}
Quantum system in a state $\rho$ when passing through the depolarising channel, is transformed into ${\cal D}^\lambda(\rho) = (1 - \lambda) \rho + \frac{\lambda}{d}\mathbb{I}_d$, where $d$ is the dimension of the density matrix and $\lambda \in (0,1)$, is the noise parameter.
In the case of a qubit system, $\rho \in {\cal H}^2$, the depolarising channel can be written down in terms of the Kraus operators as
\begin{equation}
    \label{eq:depol_def}
    {\cal D}^\lambda(\rho) = (1 - 3\frac{\lambda}{4}) \rho +  \frac{\lambda}{4}(\sigma_x \rho \sigma_x +\sigma_y \rho \sigma_y + \sigma_z \rho \sigma_z),
\end{equation}
where $\lambda \in (0,1)$ is the noise parameter. 
We consider several scenarios of how the noise can affect the system. Remember that the effect of noise in the transmission channel  ${\cal E}^f_{\substack{A_1 \rightarrow X_1 \\ A_2 \rightarrow X_2}}$ and ${\cal E}^b_{\substack{X_1 \rightarrow A_1 \\ X_2 \rightarrow A_2}}$ can be considered as the effect of a possible eavesdropper. If Eve is close to Alice, she can apply the correlated noise model to the forward and backward channel. In contrast, independent noise is much more realistic if two different eavesdropper acts separately on the channel.

\subsubsection{Independent depolarising channel}
Let us first consider a scenario in which channels between Alice and two Bobs are attacked by two separate eavesdroppers. We can model the forward transmission channel in terms of two independent depolarising channels as 
\begin{equation}\label{eq:forward_depol}
    {\cal E}^f_{\substack{A_1\rightarrow B_1 \\ A_2\rightarrow B_2 }} = {\cal D}^{\lambda}_{A_1 \rightarrow B_1} \otimes {\cal D}^{\delta}_{A_2 \rightarrow B_2}.
\end{equation}
Moreover, let us assume that the same noise acts when the senders send their encoded part of the shared state back to Alice, so that
\begin{equation}\label{eq:backward_depol}
    {\cal E}^b_{\substack{B_1\rightarrow A_1 \\ B_2\rightarrow A_2 }} = {\cal D}^{\lambda}_{B_1 \rightarrow A_1} \otimes {\cal D}^{\delta}_{B_2 \rightarrow A_2}.
\end{equation}
 
 The conditional probability distribution $p(ijk|xy;zs)$, after all the honest parties perform the key generation run, is independent of   
 the auxiliary bit $s \in \{0,1\}$. Moreover, the conditional probability $p(ijk|xy;zs)$ takes the form $ \mathbb{P}(i,j,k)=p(i \oplus x, j\oplus y, k \oplus z|xy;zs)$, where
 \begin{eqnarray}
\mathbb{P}(0,0,0) &=& 1+ \frac{5}{8} \lambda \delta (2 - \lambda) (2 - \delta) - \frac{3}{4} (\lambda(2 - \lambda) + \delta(2 - \delta)), \\
\mathbb{P}(0,0,1) &=& \frac{1}{8}   \left(2 - \lambda( 2 -  \lambda) \right) \delta(2-\delta),\\
\mathbb{P}(0,1,0) &=& \frac{1}{4} (\lambda(2 - \lambda) + \delta(2 - \delta)) - \frac{3}{8} \lambda \delta (2 - \lambda) (2 - \delta),\\
\mathbb{P}(0,1,1) &=& \frac{1}{8}  \left(2 - \lambda( 2 -  \lambda) \right)\delta (2 - \delta), \\
\mathbb{P}(1,0,0) &=& \frac{1}{8} \left( 2 - \delta(2 - \delta)\right) \lambda (2 -\lambda),\\
\mathbb{P}(1,0,1) &=& \frac{1}{8} \lambda\delta (2-\lambda)  (2 -\delta),  \\
\mathbb{P}(1,1,0) &=& \frac{1}{8} \left(2 - \delta(2 - \delta) \right) \lambda (2-\lambda),\\
\mathbb{P}(1,1,1) &=& \frac{1}{8} \lambda\delta (2-\lambda)  (2 -\delta).
\end{eqnarray}
One can easily calculate $H(K_{A_1}|K_{B_1})_{\kappa^s}$ and $H(K_{A_2}|K_{B_2})_{\kappa^s}$ for the classical-classical state shared between the honest parties in the presence of the independent depolarising noise. To calculate the lower bound on $r({\cal P}_1)$, we need $H(T_{A_1}|T_{B_1})_{{\sigma^s}(z)}$. ${\sigma^s}(z)_{T_{A_1}T_{B_1}}$ describes the classical-classical state shared among Alice, Bob1 and Bob2,
when Alice and Bob1 perform the test run and Bob2 performs key generation run, and disclose the values of  $K_{B_2}$, and $s$, 
\begin{eqnarray}
   \sigma^s_{T_{A_1}T_{B_1}}(z) 
   = \sum_{i,j} \sum_{x,y}   q(ij|xy;zs) \ket{i,j}\bra{i,j}_{T_{A_1}} \otimes  \ket{x,y}\bra{x,y}_{T_{B_1}},
\end{eqnarray}
where $q(ij|xy;zs) =q(i,j;x,y;z,s)/ q(x,y;z,s) $, with
$$q(i,j;x,y;z,s) = \tr \left( (\hat{T}^{i,j})_{AA_1A_2} \otimes \hat{T}^{x,y}_{B_1B_1'} \otimes \hat{K}^{z,s}_{B_2B_2'} \otimes \ket{\Psi}\bra{\Psi}_{AA_1A_2B_1B'_1EE'} (\hat{T}^{i,j})_{AA_1A_2} \otimes \hat{T}^{x,y}_{B_1B_1'} \otimes \hat{K}^{z,s}_{B_2B_2'} \right).$$
For independent depolarising noise in both forward and backward transmission channel, we find the probabilities $q(x,y;z,s) = \sum_{ij} q(i,j;x,y;z,s) = \frac{1}{16}$. Moreover $q(i,j;x,y;z,s)$ is independent of $z$ and $s$.. Furthermore $q(i\oplus x, j \oplus y|xy;zs) = \mathbb{Q}(i,j)$, where 
\begin{eqnarray}
   \mathbb{Q}(0,0) &=& \frac{1}{4} (2 - \lambda)^2, \\
   \mathbb{Q}(0,1) &=& \frac{1}{4} \lambda (2 - \lambda ), \\
   \mathbb{Q}(1,0) &=& \frac{1}{4} \lambda (2 - \lambda ), \\
   \mathbb{Q}(1,1) &=& \frac{\lambda ^2}{4}. \\
\end{eqnarray}
We have numerically calculated the lower bounds on the key rate  $r({\cal P}_1)$ between Alice and Bob1 with the help of $\mathbb{P}$ and $\mathbb{Q}$, and ploted it in the plane of the noise parameter $\delta$ and $\lambda$, which is given in the left panel in figure \ref{fig:ind_depol}. Note that even though $H(T_{A_1}|T_{B_1})_{{\sigma^s}(z)}$ is independent of $\delta$, the conditional entropies $H(K_{A_1}|K_{B_1})_{\kappa^s}$ and $H(K_{A_2}|K_{B_2})_{\kappa^s}$ are not. 
% but that will very hard for the researcher  to consider independent parameters also for the backward 
 \begin{figure}[t]
\centering
 \includegraphics[width=1\columnwidth,keepaspectratio,angle=0]{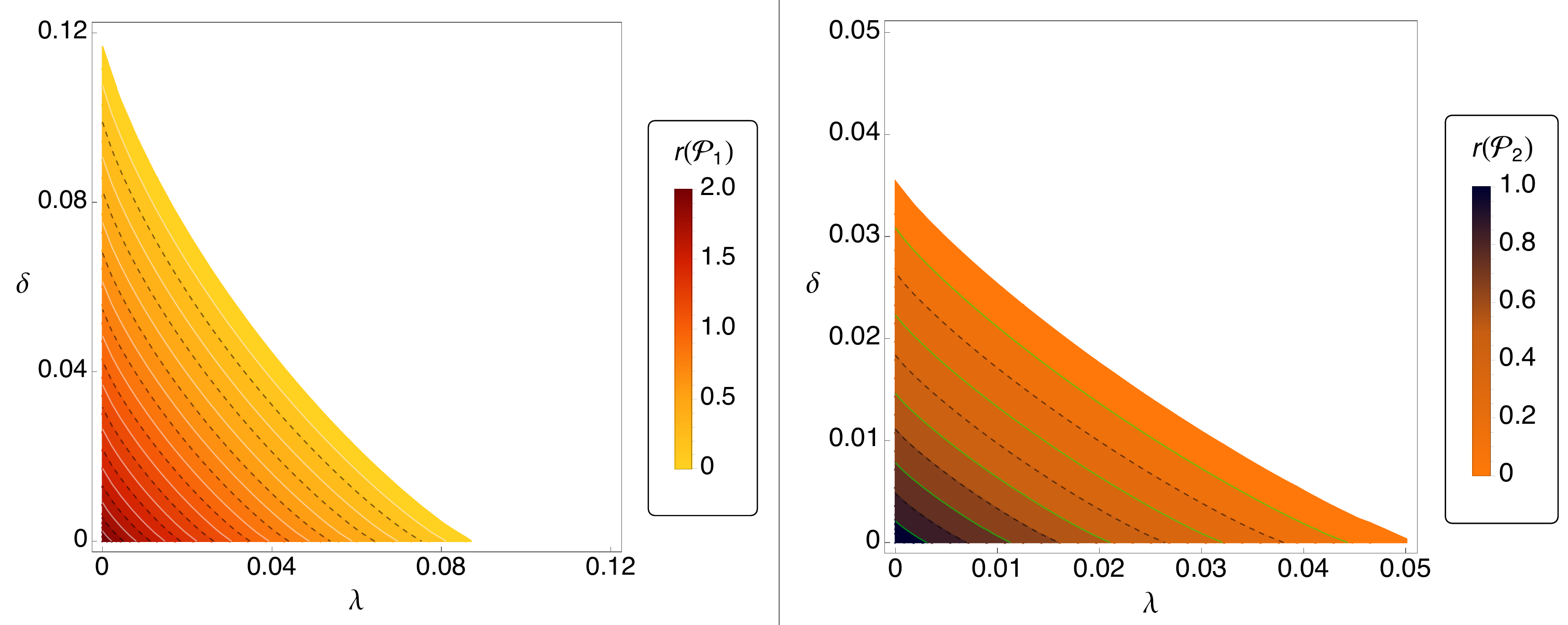}
\caption{Plots of the lower bounds on the local key rates $r({\cal P}_1)$ (the left panel) and $r({\cal P}_2)$ (the right panel), when the independent depolarising noise is affecting the forward and backward transmission quantum channels.
We have used the same noise model for both the forward and backward channel. 
Here, $\lambda$ and $\delta$ are the parameters of the independent depolarising channel as given by Eqs. (\ref{eq:forward_depol}) and (\ref{eq:backward_depol}).}
\label{fig:ind_depol}
 \end{figure}

We have also calculated numerically the lower bound on the local key rates between Alice to Bob2, $r({\cal P}_2)$ in the right panel of figure \ref{fig:ind_depol}.
%and Alice to Bob2 $r({\cal P}_2)$ are plotted in figure  
In order to calculate this, we  additionally need  $H(T_{A_2}|T_{B_2})_{\sigma^s(x,y)}$, where $\sigma^s_{T_{A_2}T_{B_2}}(x,y)$ is the classical-classical state after Alice and Bob2 perform their measurements for the test run and Bob1 discloses his bits $K_{A_1}$ for key generation run. Given $s$, it can be expressed as 
 \begin{eqnarray}
   \sigma^s_{T_{A_2}T_{B_2}}(x,y)
   = \sum_{k} \sum_{z}   \tilde{q}(k|xy;zs) \ket{k}\bra{k}_{T_{A_2}} \otimes  \ket{z}\bra{z}_{T_{B_2}},
\end{eqnarray}
where as before, $\tilde{q}(x,y;z,s) = \frac{1}{16}$ is the probability of Bob2's test measurement outcome and Bob1 key measurement while $\tilde{q}(k|xy;zs):= \tilde{q}(k \oplus z|xy;zs):=  \left\{\begin{array}{l}
        1 - \frac{\delta}{2}, ~~k = 0 \\
        \frac{\delta}{2}, ~~k = 1
        \end{array}\right. . $
        
 We have observed that $r({\cal P}_1)$ is susceptible to more noise than $r({\cal P}_2)$. Moreover, the lower bound on the key rate of $r({\cal P}_1)$ is affected more by the noise in the channel connecting Alice and the Bob1 than by the noise connecting Alice with Bob2. And the same is true for $r({\cal P}_2)$ also.

 In the noiseless scenario, when both the $\lambda = \delta = 0$, the bound on $r({\cal P}_1)$ reaches $2$, and $r({\cal P}_2)$ becomes $1$. Hence, in total, both the senders can share $3$ bits of secure key with the receiver, which is also the dense coding capacity of the pure GHZ state \cite{GHZk, BrussALMSS2005-multidense}. This result shows that our protocol of securing multiparty dense coding protocol is consistent with the capacity of classical information transmission without security. 
 
 Note that for in the dense coding capacity of an arbitrary state, one optimizes  over all possible encoding and decoding \cite{Hiroshima_2001,HoroCapacity,BrussALMSS2005-multidense,ShadmanNoise, DDCReznik,ROYDDC}. Whereas in our consideration, we stick to an encoding operation (the Pauli matrices) for both Bobs and single decoding of Alice (the GHZ basis measurements) independent of the noise present in the system.
 Moreover, it would be interesting to see how the amount of classical information one can transfer with security (key rate) compares to the total dense coding capacity in the presence of noise. However, this analysis is beyond the scope of our current manuscript.
 %The left hand side figure depicts the lower bound on $r({\cal P}_1)$, whereas the right hand side is for $r({\cal P}_2)$. 
 
 \subsubsection{Independent depolarising channel acting only one side}

We now analyse a situation when the independent depolarising channel is acting either at the time of transmission of $\ket{GHZ}_{AA_1A_2}$ or on the encoded pure state. If it acts  only on the forward channel, then ${\cal E}^f_{\substack{A_1\rightarrow B_1 \\ A_2\rightarrow B_2 }}$ is  given by Eq. (\ref{eq:forward_depol}) and ${\cal E}^b_{\substack{B_1\rightarrow A_1 \\ B_2\rightarrow A_2 }}$ is idempotent. In reverse situation, the forward channel is idempotent and Eq. (\ref{eq:backward_depol}) provides the expression for the backward channel.
% is or when both the senders send their state back to Alice, i.e., in the backward transmission channel.

Firstly, it is apparent that the bounds on $r({\cal P}_1)$ decrease more slowly than compared to the situation when both forward and backward channels are subject to noise. This is because we restrict the total effective noise acting on the system. Secondly, the bounds provide the same numerical value when we apply noise in the forward or only in the backward channel. This phenomenon is quite obvious because the depolarising channel is a covariance noise, i.e., it commutes with the encoded unitary operators. Hence, both $H(K_{A_i}|K_{B_i})_{\kappa^s}$ for  $i = 1,2$ are the same for both noise models. Moreover, according to protocol, the measurement operators of Alice and Bob1 are chosen such that   $H(T_{A_1}|T_{B_1})_{\sigma^s(z)}$ does not change with $z$ and $s$. 

However, the above equality does not hold for the lower bound on $r({\cal P}_2)$ due to the choice of the projective measurement for the test run. We have observed that for a fixed $\lambda$ and $\delta$,  the bound 
on $r({\cal P}_2)$ is lower for the noisy backward channel than for the noisy forward channel.
%isdecreases with respect to , for constant , when we shift the 

% Suppose, ${\cal E}^F$, is as given in Eq. (\ref{eq:forward_depol}), while ${\cal E}^B$ is an idempotent channel, the lower bound on the key rate has been plotted in Fig. \ref{fig:ind_depol_one}(a), whereas figure \ref{fig:ind_depol_one}(a), depicts the same when no noise acts in the forward channel and in ${\cal E}^B$, independent depolarising noise is acting.

% From the above figure, it is more clear that the key rate diminishes much more when noise acts in the backward channel. This is due to the fact that, the encodings \color{red}or control \color{black} sequence made by the senders affected considerably if noise acts in the backward channel. But if no noise acts in the backward channel, then the encodings remains intact. However due to noise in the forward channel, the unitary encodings may not be optimal ones for the information transmission, and hence the key rate gets lowered.

\begin{figure}[t]
\centering
 \includegraphics[width=1\columnwidth,keepaspectratio,angle=0]{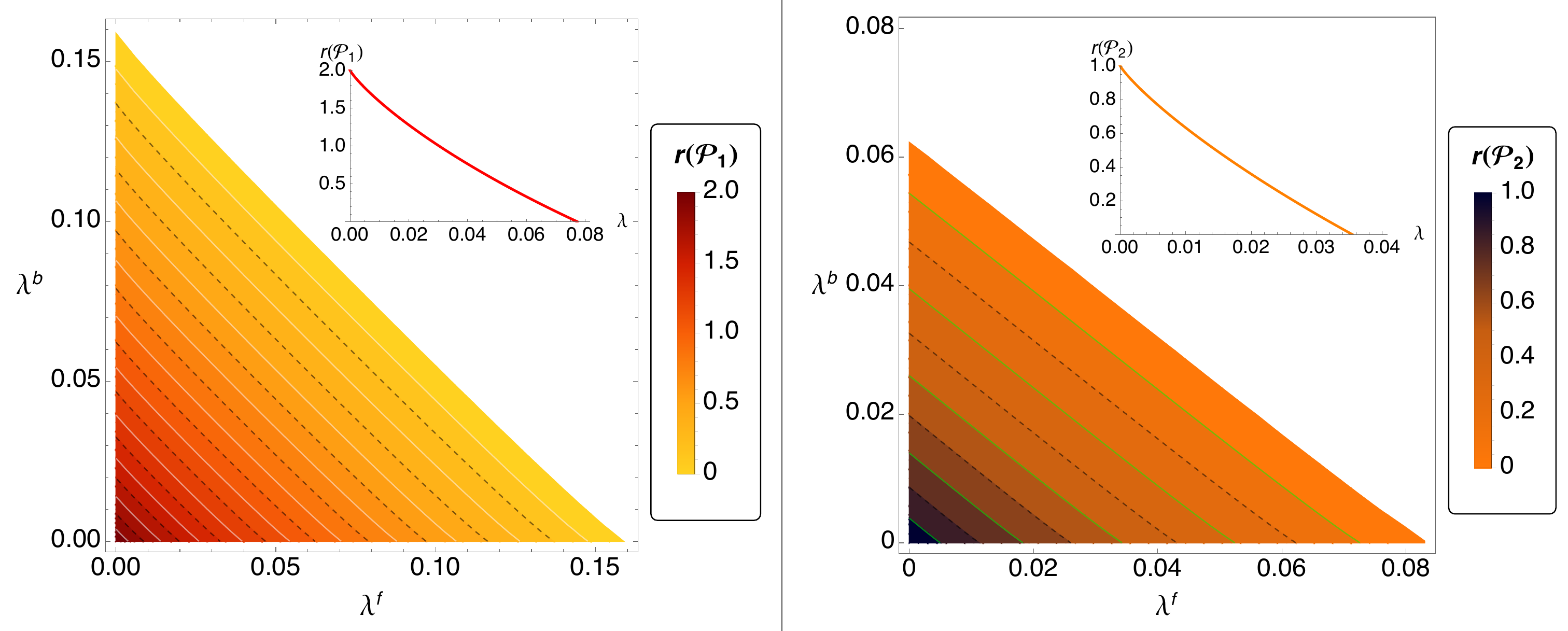}
\caption{Plot of the lower bounds on the local key rates $r({\cal P}_1)$  and $r({\cal P}_2)$, when fully correlated depolarising noise is affecting the forward and backward transmission quantum channels.
Here we have used a completely independent noise model for both the forward and backward channels, which are parametrized by $\lambda^f$ and $\lambda^b$. 
In the left panel, we have plotted the lower bound of $r({\cal P}_1)$ in the plane of  $\lambda^f$ and $\lambda^b$, whereas in the inset figure, we consider the situation when $\lambda^f = \lambda^b$, i.e., when the same channel is used twice. The lower bound on $r({\cal P}_2)$ is plotted in right panel in the same manner. }
\label{fig:corr_depol}
 \end{figure}

\subsubsection{Correlated Depolarising channel}

%One can consider the presence of the eavesdropper in close proximity to the receiver, in that situation there is a chance that the noise will act on both the quantum channel connecting to Bob1 and Bob2 equally, i.e., a correlated noise model.
%The effect of noise in the forward and backward transmission channel, can be consider as the presence of an eavesdropper. 
A single eavesdropper that is in close proximity to the receiver might be able to attack both channels connecting Alice with Bob1 and Bob2 simultaneously. In such a scenario, it is more feasible to consider a correlated noise model. In this section, we consider a fully correlated Pauli channel i.e a correlated depolarising channel given by
\begin{eqnarray}\label{eq:correlated_depol}
    {\cal E}^{f,b}_{\substack{A_1\rightarrow X_1 \\ A_2\rightarrow X_2 }} (\rho_{A_1A_2})= \left(1 - 3\frac{\lambda^{f,b}}{4}\right) \rho_{X_1X_2} +  \frac{\lambda^{f,b}}{4}\Big((\sigma_x \otimes \sigma_x) \rho_{X_1X_2} (\sigma_x \otimes \sigma_x) + (\sigma_y \otimes \sigma_y) \rho_{X_1X_2} (\sigma_y \otimes \sigma_y) \nonumber \\
    + (\sigma_z \otimes \sigma_z) \rho_{X_1X_2} (\sigma_z \otimes \sigma_z) \Big).
\end{eqnarray}

For the correlated noise model, we consider a more general scenario. This time the forward and backward channels are not the same. We choose parameter in Eq. (\ref{eq:correlated_depol}), as $\lambda^f$ for forward and $\lambda^b$ for backward transmission channel. 
The lower bound on the key rate is plotted in figure \ref{fig:corr_depol}. 
The left panel represents the lower bound on $r({\cal P}_1)$,  in the plane of $\lambda^f$ and $\lambda^b$, and the bound is symmetric. In comparison, there is an asymmetry for  $r({\cal P}_2)$, which is plotted in the right panel of figure \ref{fig:corr_depol}. 
 The symmetry of the plot implies that the lower bound on key rate for Alice to Bob1 is invariant under the swapping of forward and backward channels, which is not the case for the independent depolarising channel and the other local rate. Moreover, the the key rate for Alice to Bob2 decreases much faster for the effect of the noise in the backward channel compared to the noise in the forward channel, which is consistent with the result we have obtained for $r({\cal P}_2)$ for independent depolarising channel acting only in one side.
 %Moreover, compared to the other noise models we have discussed so far, the lower bound on the key rate always remains greater than $1$.  

\subsection{Amplitude damping channel}
In the previous section, we have considered the depolarising noise model, which is a particular class of covariant noise \cite{CovariantChannel} which commutes with the encoding operations. Due to this commutation, we have observed that the conditional probability is independent of the auxiliary bit $s$. Furthemore, we find that its dependency on the conditioned event  $x,y;z,s$ is fixed.
 Hence, in this section, we will consider a completely different noise model, namely the amplitude damping channel, whose Kraus operators do not commute with the Pauli matrices. 
 
Suppose that the transmission channels are two independent amplitude damping channels that is

\begin{equation}\label{eq:amp_damp}
     {\cal E}^{f(b)}_{\substack{A_1\rightarrow X_1 \\ A_2\rightarrow X_2 }} = {\cal A}^{\gamma_1}_{A_1 \rightarrow X_1} \otimes {\cal A}^{\gamma_2}_{A_2 \rightarrow X_2},
\end{equation}
where $\gamma_1, \gamma_2 \in (0,1)$ are the noise parameters in the two channels, and the Kraus operator representation is 
\begin{equation}
    {\cal A}^{\gamma}(\rho) = \mathbb{A}_0\rho \mathbb{A}_0\dagger + \mathbb{A}_1\rho \mathbb{A}_1^\dagger
\end{equation}
with 
$$ \mathbb{A}_0=\left(
\begin{array}{cc}
 1 & 0 \\
 0 & \sqrt{1-\gamma } \\
\end{array}
\right), ~\text{and}~
\mathbb{A}_1=\left(
\begin{array}{cc}
 0 & \sqrt{\gamma } \\
 0 & 0 \\
\end{array}
\right).$$
%Here,  is the noise parameter. \\
We have numerically computed the lower bounds on both the local key rates and plotted them against the noise parameters $\gamma_1$ and $\gamma_2$, in figure \ref{fig:amp_damp}. We have assumed that the same channel has been used for the forward transmission and backward transmission of the 3-qubit quantum state.

Alike to the previous results, the key rate $r({\cal P}_1)$ is symmetric with respect to the noise parameters $\gamma_1$ and $\gamma_2$, whereas $r({\cal P}_2)$ turns out to be assymetric.
%with respect to the  noise parameters $\gamma_1$ and $\gamma_2$ of two independent amplitude damping channels. 

%If one compares the  figure \ref{fig:amp_damp}, with figure \ref{fig:ind_depol}, then one observes the fact that key rate is much more robust for amplitude damping channel compared to the depolarising channel. However in presence of \color{red}both the noise\color{black} the lower bound eventually vanishes, which is not the case for phase damping channel. We will discuss this in the next section.

\begin{figure}[t]
\centering
 \includegraphics[width=1\columnwidth,keepaspectratio,angle=0]{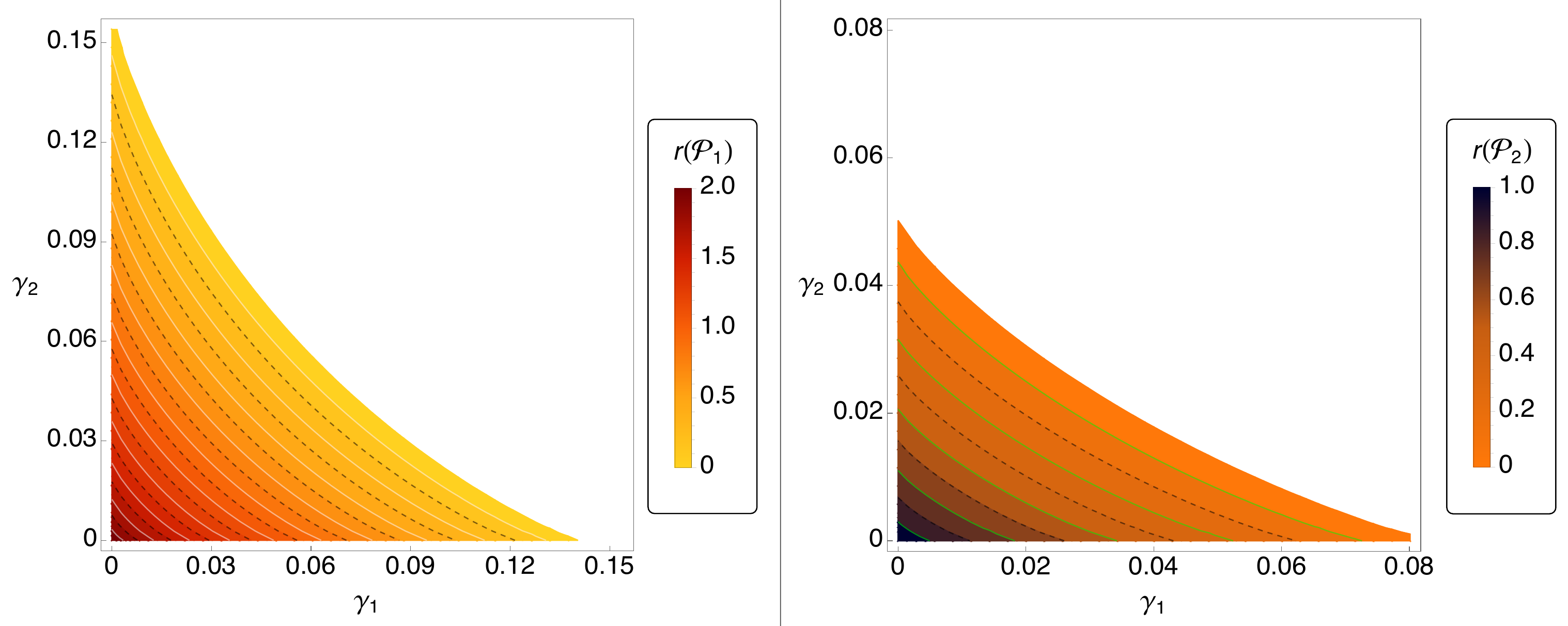}
\caption{Lower bounds on the local key rates for independent amplitude damping channels, when noise acts both in the forward and backward transmission channel. 
 $\gamma_1$ and $\gamma_2$ are the parameters of the independent depolarising channel as given in Eq. (\ref{eq:amp_damp}). }
\label{fig:amp_damp}
 \end{figure}

\section{Conclusions and Open Questions}\label{sec:discussion}
We have studied quantum key distribution over generalized multiple access channels between two senders and a single receiver. We have provided the formula for the achievable key rate region in this scenario. The lower bounds for secure key rate are given in terms of the entropic quantities of the random variables generated by a protocol which achieves the bounds. We have specified a protocol  based on the protocol of superdense coding. In that, we generalize the results of \cite{Beaudry2013} to the multipartite case.  The protocol consists of two parts. One is between the sender and a single receiver, and the second is between the sender and the other receiver. Based on distributed $GHZ$ state, it is natural in our scenario that one of the senders broadcasts part of his results. This, however, implies that 
we need to significantly modify the protocol of \cite{Beaudry2013} to the GMAC scenario.

One can expect similar
results for $n >2$ senders and a single
receiver. However, public communication required grows linearly with the number of senders in the proposed protocol. Thus, it is an important
open problem if this rate of communication is at all necessary. 

We note that other variants of $2$-$1$ SDC protocols can be provided
simply via a change of the measurements of both the sender and the receivers. One only needs to keep these measurements complementary with each other. This opens a path for designing a whole new class
of protocols with potentially higher key rates for a given
noisy channel. It would also be interesting to compare the classical information transmission with security constraints (aka SDC protocol) versus the classical capacity without security constraints. 

% To capture this effect we have introduced the so called {\it security fraction} which is the ratio
% of the key rate gained by
% a protocol for a given state,
% to the dense coding capacity 
% of this state. Extending this quantity to quantum channels would be further important step.

Recently a single-shot formula for dense coding has been
derived \cite{Korzekwa2019}. One can try to adapt it to the multipartite case together with the proof of security, which is also an important open problem.

%Finally the relation with quantum
%illumination phenomenon \cite{Lloyd2008} is worth
%of investigation in future (see \cite{Illumination-review} in this context).

We have discussed the fact that the symmetry of the GHZ state does not impose any restriction on which sender sends two bits of classical information and which one communicates one bit. However, the symmetry of the GHZ state breaks once the noise acts in the transmission channel. Therefore, the remaining open question is whether the honest parties can identify which channel is affected more by the noise during reconciliation. Connected to this is a question whether the public announcement of one bit from a selected sender reduces the effect of noise in the other party and if it is possible to disclose the bit for that particular sender whose channel is noisier.

% if have a single appendix:
%\appendix[Proof of the Zonklar Equations]
% or
%\appendix  % for no appendix heading
% do not use \section anymore after \appendix, only \section*
% is possibly needed

% use appendices with more than one appendix
% then use \section to start each appendix
% you must declare a \section before using any
% \subsection or using \label (\appendices by itself
% starts a section numbered zero.)
%

\appendices

\section{Proof of the purification protocol}\label{sec:proof_puri}
In this section, we will prove that the unitary encoding performed by Bob1 and Bob2 on their shared part of the noisy GHZ state $\rho_{AB_1B_2}$, can be purified to a joint projective measurements ${\cal K}_{B_1B'_1} = \{K_{B_1B_1'}^{xy}\}_{x,y = 0}^1$ and ${\cal K}_{B_2B'_2} = \{K_{B_2B_2'}^{zs}\}_{z,s = 0}^1$ on $\rho_{AB_1B_2}$ and half of the  shared Bell state $\ket{\phi^+}_{B'_1X_1}$ and $\ket{\phi^+}_{B'_2X_2}$. Specifically 
\begin{eqnarray}\label{eq:encode_puri2}
\big(U_{X_1}^{x,y} \otimes U_{X_2}^{z,s} \rho_{AX_1X_2} U_{X_1}^{x,y\dagger} \otimes U_{X_2}^{z,s \dagger}\big) &=& 4^2 \times \text{tr}_{B_1B_1'}\text{tr}_{B_2B_2'}\big(\hat{K}^{x,y}_{B_1B'_1}\otimes \hat{K}^{z,s}_{B_2B'_2}(\rho_{AB_1B_2} \otimes |\phi^+\rangle\langle\phi^+|_{B'_1X_1} \nonumber \\
&& \hspace{5cm}\otimes |\phi^+\rangle\langle\phi^+|_{B'_2X_2}) \hat{K}^{{x,y}\dagger}_{B_1B'_1}\otimes  \hat{K}^{{z,s}\dagger}_{B_2B'_2}\big),
\end{eqnarray} 
where in the l.h.s. of Eq. (\ref{eq:encode_puri2}), we put the subscript $X_1$ and $X_2$, instead of $B_1$ and $B_2$ to make it consistent with the r.h.s.
Recall that the projectors $\{\hat{K}^{x,y}_{B_1B'_1}\}_{x,y = 0}^1$ and $\{\hat{K}^{z,s}_{B_2B'_2}\}_{z,s = 0}^1$ are given by
\begin{eqnarray}
    \hat{K}^{x,y}_{B_1B'_1} &=& \ket{B(x,y)}\bra{B(x,y)}_{B_1B'_1} = \frac{1}{2} \sum_{l,l' = 0}^1 (-1)^{(l\oplus l').y} \ket{l, x\oplus l}\bra{l', x\oplus l'}, \label{eq:PVxy} \\
    \hat{K}^{z,s}_{B_2B'_2} &=& \ket{B(z,s)}\bra{B(z,s)}_{B_2B'_2} = \frac{1}{2} \sum_{m,m' = 0}^1 (-1)^{(m\oplus m').s} \ket{m, z\oplus m}\bra{m', z\oplus m'}. \label{eq:PVzs}
\end{eqnarray}
Now express
\begin{eqnarray}
&&\rho_{AB_1B_2} = \sum_{i,j,k; i'j'k' = 0}^1 q(i,j,k;i'j'k')\ket{i,j,k}\bra{i'j'k'}_{AB_1B_2}, \label{eq:rho_expansion}\\
&& \ket{\phi^+}\bra{\phi^+}_{B_1'X_1} = \frac 12 \sum_{p;p' = 0}^1\ket{pp}\bra{p'p'}_{B_1'X_1},\\
&&\ket{\phi^+}\bra{\phi^+}_{B_2'X_2} = \frac 12 \sum_{q;q' = 0}^1\ket{qq}\bra{q'q'}_{B_2'X_2},
\end{eqnarray}
where $\tr(\rho_{AB_1B_2}) = \sum_{i,j,k} q(i,j,k;i,j,k) = 1 $. Hence the total system, 
\begin{eqnarray}
&&\rho_{AB_1B_2}\otimes |\phi^+\rangle\langle\phi^+|_{B'_1X_1} \otimes |\phi^+\rangle\langle\phi^+|_{B'_2X_2}  \\
&=& \frac 14 \sum_{i,j,k; i'j'k' = 0}^1 \sum_{p;p' = 0}^1 \sum_{q;q' = 0}^1 q(i,j,k;i'j'k')\ket{i,j,k}\bra{i'j'k'}_{AB_1B_2} \otimes \ket{pp}\bra{p'p'}_{B_1'X_1} \otimes \ket{qq}\bra{q'q'}_{B_2'X_2}, \nonumber \\
&=& \frac 14 \sum_{i,j,k; i'j'k' = 0}^1 \sum_{p;p' = 0}^1 \sum_{q;q' = 0}^1 q(i,j,k;i'j'k') \ket{ipq}\bra{i'p'q'}_{AX_1X_2} \otimes \ket{jp}\bra{j'p'}_{B_1B'_1} \otimes \ket{kq}\bra{k'q'}_{B_2B'_2}.
\end{eqnarray}
Now the r.h.s of Eq. (\ref{eq:encode_puri2}) can be written as 
\begin{eqnarray}
&& 4^2 \times \text{tr}_{B_1B_1'}\text{tr}_{B_2B_2'} \big(\hat{K}^{x,y}_{B_1B'_1}\otimes \hat{K}^{z,s}_{B_2B'_2}(\rho_{AB_1B_2} \otimes |\phi^+\rangle\langle\phi^+|_{B'_1X_1} \otimes |\phi^+\rangle\langle\phi^+|_{B'_2X_2}) \hat{K}^{{x,y}\dagger}_{B_1B'_1}\otimes  \hat{K}^{{z,s}\dagger}_{B_2B'_2}\big) \\
&=& 4^2 \times \frac 14 \sum_{\substack{ijk = 0\\ i'j'k' 0}}^1 \sum_{p;p' = 0}^1 \sum_{q;q' = 0}^1 q(i,j,k;i'j'k') \ket{ipq}\bra{i'p'q'}_{AX_1X_2} \nonumber  \\  
&& \hspace{2in} \bra{B(x,y)}\ket{jp}\bra{j'p'}\ket{B(x,y)} \bra{B(z,s)}\ket{kq}\bra{k'q'}\ket{B(z,s)} \\
&=& 4^2 \times \frac 14 \sum_{\substack{ijk = 0\\ i'j'k' 0}}^1 \sum_{p;p' = 0}^1 \sum_{q;q' = 0}^1 q(i,j,k;i'j'k') \ket{ipq}\bra{i'p'q'}_{AX_1X_2} \nonumber  \\  
&& \hspace{0.5in} \Big[\frac 12 \sum_{l,l' = 0}^1  (-1)^{(l \oplus l').y } \delta_{lj} \delta_{x\oplus l,p}\delta_{l'j'}\delta_{x\oplus l',p'}\Big] \Big[ \frac 12 \sum_{m,m' = 0}^1 (-1)^{ (m \oplus m').s} \delta_{mk} \delta_{z\oplus m,q}\delta_{m'k'}\delta_{z\oplus m',q'} \Big] \\
&=& \sum_{\substack{ijk = 0\\ i'j'k' 0}}^1 q(i,j,k;i'j'k') (-1)^{(j \oplus j').y ~\oplus ~(k \oplus k').s} \ket{i, j\oplus x,  k \oplus z}\bra{i', j'\oplus x, k'\oplus z}_{AX_1X_2}, \label{eq:final_form_puri}
\end{eqnarray}
where we use the extended expression of the projectors given in Eq. (\ref{eq:PVxy}) and (\ref{eq:PVzs}). 
To obtain the l.h.s of Eq. (\ref{eq:encode_puri2}), one should notice that both senders use the encoding according to the choice of Pauli matrices given in Eq. (\ref{Eq:encode_uni}) and their transform a quantum state $\ket{j}\bra{j'}$ as 
% will apply unitary $U^{xy}$, to encode their two bits of secure key bits $(x,y)$, where the  $U^{xy}$ are chosen to be 
% \begin{equation}
%     U^{0,0} = \mathbb{I}, ~~ U^{0,1} = \sigma_z, ~~U^{1,0} = \sigma_x, ~~\text{and}~~ U^{1,1} = \sigma_y.
% \end{equation}
%One can easily check that the effect of the unitaries 
\begin{equation}\label{eq:uni_effect}
    U^{xy} \ket{j}\bra{j'} U^{xy \dagger} = (-1)^{(j \oplus j').y} \ket{j \oplus x}\bra{j'\oplus x}, ~\forall j,j',x,y \in \{0,1\}.
\end{equation}
Using Eq. (\ref{eq:uni_effect}) in Eq. (\ref{eq:final_form_puri}), we obtain
\begin{eqnarray}
 &&\sum_{\substack{ijk = 0\\ i'j'k' 0}}^1 q(i,j,k;i'j'k') \ket{i}\bra{i'}_A \otimes  U_{X_1}^{x,y} \ket{j}\bra{j'}_{X_1} U^{xy \dagger}  \otimes U_{X_2}^{z,s}  \ket{k}\bra{k'}_{X_2}  U_{X_2}^{z,s \dagger}, \nonumber  \\
 &=&  U_{X_1}^{x,y} \otimes U_{X_2}^{z,s } \rho_{AX_1X_2}U_{X_1}^{x,y\dagger} \otimes U_{X_2}^{z,s \dagger}
\end{eqnarray}
%By using Eq. (\ref{eq:uni_effect}), for both part $X_1$ and $X_2$, we will obtain Eq. (\ref{eq:final_form_puri}).
\hfill $\blacksquare$

\section{Proof of Eq. (\ref{Eq:probxyzs})}\label{sec:proof_equal_prob}
In this section we will prove that the probability of getting measurement outcome $\{x,y\}$ by Bob1 and $\{z,s\}$ by Bob2 satisfies Eq. (\ref{Eq:probxyzs}).
\begin{eqnarray}
&& p(xy;zs) = \sum_{ijk} p(ijk,xy;zs) = \sum_{ijk} \tr(\varrho_{EE'}^{ijk;xy;zs}) \nonumber \\
&=& \tr\left(\bra{B(x,y)}_{B_1B'_1} \otimes \bra{B(z,s)}_{B_2B'_2} \left( \ket{\Psi}\bra{\Psi}_{AA_1A_2B_1B'_1B_2B'_2EE'}\right)
    \ket{B(x,y)}_{B_1B'_1} \otimes \ket{B(z,s)}_{B_2B'_2} \right) \nonumber \\
 &=& \bra{B(x,y)}_{B_1B'_1} \otimes \bra{B(z,s)}_{B_2B'_2} \left(\rho_{AB_1B_2}   \otimes \tilde{\rho}_{B'_1A_1B'_2A_2}\right)
    \ket{B(x,y)}_{B_1B'_1} \otimes \ket{B(z,s)}_{B_2B'_2} \label{eq:prob-proof2} \\
    &=& \tr\left( \hat{K}^{x,y}_{B_1B'_1}\otimes \hat{K}^{z,s}_{B_2B'_2}(\rho_{AB_1B_2} \otimes \tilde{\rho}_{B'_1A_1B'_2A_2}) \hat{K}^{{x,y}\dagger}_{B_1B'_1}\otimes  \hat{K}^{{z,s}\dagger}_{B_2B'_2}\right) \nonumber \\
    &=& \tr\left( \hat{K}^{x,y}_{B_1B'_1}\otimes \hat{K}^{z,s}_{B_2B'_2}\left(\rho_{AB_1B_2} \otimes {\cal E}^b_{\substack{X_1 \rightarrow A_1\\ X_2\rightarrow A_2}} \left(|\phi^+\rangle\langle\phi^+|_{B'_1X_1} \otimes |\phi^+\rangle\langle\phi^+|_{B'_2X_2}\right) \right) \hat{K}^{{x,y}\dagger}_{B_1B'_1}\otimes  \hat{K}^{{z,s}\dagger}_{B_2B'_2}\right) \label{eq:prob-proof3} \\
    &=& \tr_{AA_1A_2} {\cal E}^b_{\substack{X_1 \rightarrow A_1\\ X_2\rightarrow A_2}} \left( \text{tr}_{B_1B_1'}\text{tr}_{B_2B_2'}\big( \hat{K}^{x,y}_{B_1B'_1}\otimes \hat{K}^{z,s}_{B_2B'_2}\left(\rho_{AB_1B_2} \otimes |\phi^+\rangle\langle\phi^+|_{B'_1X_1} \otimes |\phi^+\rangle\langle\phi^+|_{B'_2X_2}\right) \hat{K}^{{x,y}\dagger}_{B_1B'_1}\otimes  \hat{K}^{{z,s}\dagger}_{B_2B'_2}\big)\right) \nonumber \\
    &=& \frac{1}{16} \tr_{AA_1A_2} {\cal E}^b_{\substack{X_1 \rightarrow A_1\\ X_2\rightarrow A_2}} \left( U_{X_1}^{x,y} \otimes U_{X_2}^{z,s } \rho_{AX_1X_2}U_{X_1}^{x,y\dagger} \otimes U_{X_2}^{z,s \dagger} \right) = \frac{1}{16},
\end{eqnarray}
where in Eq. (\ref{eq:prob-proof2}), we use the fact that $\tr_{EE'}(\ket{\Psi}\bra{\Psi}_{AA_1A_2B_1B'_1B_2B'_2EE'}) = \left(\rho_{AB_1B_2}   \otimes \tilde{\rho}_{B'_1A_1B'_2A_2}\right)$, in  Eq. (\ref{eq:prob-proof3}) we used Eq. (\ref{eq:noiseonpuri}) and in the last equality we employed Eq. (\ref{eq:encode_puri2}).
 \hfill $\blacksquare$

\section{Calculation of the overlap between two measurement operators}\label{appen:overlap}
For any operator $A$, the infinity norm $||A||_\infty$ \cite{Berta2010,tomamichel2012framework}, is defined as 
\begin{equation}
    ||A||_\infty = \max_{\ket{\phi}, \braket{\phi|\phi} = 1} \bra{\phi} A \ket{\phi}.
\end{equation}

From Eq. (\ref{eq:overlap1}), we have 
\begin{eqnarray}
c &=& \max_{(i,j), (i'j')} ||\sqrt{(\hat{K}_s^{i',j'})_{AA_1A_2}}\sqrt{(\hat{T}^{i,j})_{AA_1A_2}}||_{\infty}^2 \nonumber \\
&=& \max_{(i,j), (i'j')} ||\sqrt{(\hat{T}^{i,j})_{AA_1A_2}}(\hat{K}_s^{i',j'})_{AA_1A_2}\sqrt{(\hat{T}^{i,j})_{AA_1A_2}}||_{\infty},    
\end{eqnarray}
where $(\hat{T}^{i,j})_{AA_1A_2} =  \ket{i}\bra{i}_A \otimes \ket{j_\vdash}\bra{j_\vdash}_{A_1} \otimes \mathbb{I}_{A_2}$
and 
$(\hat{K}_s^{i'j'})_{AA_1A_2} = \sum_{k'} (\hat{K}_s^{i'j'k'})_{AA_1A_2}$. Therefore 

\begin{eqnarray}
    && \sqrt{(\hat{T}^{i,j})_{AA_1A_2}}(\hat{K}_s^{i',j'})_{AA_1A_2}\sqrt{(\hat{T}^{i,j})_{AA_1A_2}} \\
    && = \ket{i}\bra{i}_A \otimes \ket{j_\vdash}\bra{j_\vdash}_{A_1} \otimes \mathbb{I}_{A_2} \left(\sum_{k'} \ket{G^s(i',j',k')} \bra{G^s(i',j',k')}_{AA_1A_2} \right) \ket{i}\bra{i}_A \otimes \ket{j_\vdash}\bra{j_\vdash}_{A_1} \otimes \mathbb{I}_{A_2}, \label{eq:overlap_AB1}
\end{eqnarray}
Now consider the partial inner product state
\begin{eqnarray}
    &&\bra{i}_A \otimes \bra{j_\vdash}_{A_1} \ket{G^s(i',j',k')}_{AA_1A_2} \\
    && = \frac 12 \bra{i}_A \otimes \sum_l (-1)^{j.l} \bra{l}_{A_1} \sum_{l'} (-1)^{l'.(j'\oplus s)} \ket{l', l' \oplus i', l' \oplus k'}_{AA_1A_2} \\
    && = \frac 12 (-1)^{i. (j \oplus j') \oplus i'.j \oplus i.s} \ket{i \oplus k'}_{A_2},
\end{eqnarray}
putting the value of this partial inner product in Eq. (\ref{eq:overlap_AB1}), we get 
\begin{eqnarray}
    && \sqrt{(\hat{T}^{i,j})_{AA_1A_2}}(\hat{K}_s^{i',j'})_{AA_1A_2}\sqrt{(\hat{T}^{i,j})_{AA_1A_2}}  \\
    &&= \frac 14 \ket{i}\bra{i}_A \otimes \ket{j_\vdash}\bra{j_\vdash}_{A_1} \otimes \sum_{k'} \ket{i \oplus k'}\bra{i \oplus k'}_{A_2} \\
    && = \frac 14 \ket{i}\bra{i}_A \otimes \ket{j_\vdash}\bra{j_\vdash}_{A_1} \otimes \mathbb{I}_{A_2}.
\end{eqnarray}
Hence, $c = \max_{(i,j), (i'j')} = \frac 14 || \ket{i}\bra{i}_A \otimes \ket{j_\vdash}\bra{j_\vdash}_{A_1} \otimes \mathbb{I}_{A_2}||_\infty = \frac 14$.

To calculate the overlap $ \tilde{c} $, given in  Eq. (\ref{eq:overlap2}), lets recall  
\begin{eqnarray}
     \tilde{c} 
     &=& \max_{k, k' } ||\sqrt{(\hat{\tilde{K}}_s^{k'})_{AA_1A_2}}\sqrt{(\hat{\tilde{T}}^k_s)_{AA_1A_2}}||_{\infty}^2 \nonumber \\
     &=& \max_{k,k'} ||\sqrt{(\tilde{T}^{k}_s)_{AA_1A_2}} \sum_{i'j'}(\hat{K}_s^{i'j'k'})_{AA_1A_2}\sqrt{(\tilde{T}^{k}_s)_{AA_1A_2}}||_{\infty},
 \end{eqnarray}
 where  $(\hat{K}_s^{i'j'k'})_{AA_1A_2}^{i,j,k} = \ket{G^s(i',j',k')} \bra{G^s(i',j',k')}_{AA_1A_2}$. Thus 
 \begin{eqnarray}
      && \sqrt{(\tilde{T}^{k}_s)_{AA_1A_2}} \sum_{i'j'}(\hat{K}_s^{i'j'k'})_{AA_1A_2}\sqrt{(\tilde{T}^{k}_s)_{AA_1A_2}} \\
      && =   \mathbb{I}_{AA_1}\otimes \ket{(k \oplus s)_\vdash}\bra{(k \oplus s)_\vdash}_{A_2} \left(\sum_{i'j'} \ket{G^s(i',j',k')} \bra{G^s(i',j',k')}_{AA_1A_2} \right)   \mathbb{I}_{AA_1}\otimes \ket{(k \oplus s)_\vdash}\bra{(k \oplus s)_\vdash}_{A_2}, \\
      && = {}_{A_2}\bra{(k \oplus s)_\vdash} \left( \ket{G^s(i',j',k')} \bra{G^s(i',j',k')}_{AA_1A_2} \right)  \ket{(k \oplus s)_\vdash}_{A_2} \otimes \ket{(k \oplus s)_\vdash}\bra{(k \oplus s)_\vdash}_{A_2}, \\
      && = \frac 12 \sum_{i'j'} \ket{B(i',j'+k)}\bra{B(i',j'+k)}_{AA_1} \otimes \ket{(k \oplus s)_\vdash}\bra{(k \oplus s)_\vdash}_{A_2}, \\
      && =  \frac 12 \mathbb{I}_{AA_1}\otimes \ket{(k \oplus s)_\vdash}\bra{(k \oplus s)_\vdash}_{A_2},
 \end{eqnarray}
 where we have used the fact that 
 \begin{eqnarray}
    {}_{A_2}\bra{(k \oplus s)_\vdash} G^s(i',j',k')\rangle_{AA_1A_2} &=& \frac 12 \sum_{m} (-1)^{(k \oplus s).m} {}_{A_2}\bra{m} \sum_l (-1)^{l.(j' \oplus s)} \ket{l, l \oplus i', l \oplus k'}_{AA_1A_2} \\
    &=& \frac 12 \sum_l (-1)^{l.(j' \oplus k) \oplus k.k'} \ket{l, l\oplus i}_{AA_1} = \frac{1}{\sqrt{2}}(-1)^{k.k'} \ket{B(i, j \oplus k')}
 \end{eqnarray}
 Finally, 
 \begin{eqnarray}
    \tilde{c} 
     = \frac 12 \max_{k', k} || \mathbb{I}_{AA_1}\otimes \ket{(k \oplus s)_\vdash}\bra{(k \oplus s)_\vdash}_{A_2} ||_\infty = \frac 12.
\end{eqnarray}

% use section* for acknowledgment
\section*{Acknowledgment}

TD thanks Shilpa Samaddar for the beautiful illustrations of Alice, Bob and Eve. 
TD and KH acknowledge grant Sonata Bis 5 (grant number: 2015/18/E/ST2/00327) from the National Science Center. 
TD and KH acknowledge partial support by Foundation for Polish Science (FNP), IRAP project ICTQT, contract no. 2018/MAB/5, cofinanced by EU Smart Growth Operational Programme. RP acknowledges support from EPSRC (UK).

% Can use something like this to put references on a page
% by themselves when using endfloat and the captionsoff option.
\ifCLASSOPTIONcaptionsoff
  \newpage
\fi

\bibliographystyle{IEEEtran}
\bibliography{reference}

% biography section
% 
% If you have an EPS/PDF photo (graphicx package needed) extra braces are
% needed around the contents of the optional argument to biography to prevent
% the LaTeX parser from getting confused when it sees the complicated
% \includegraphics command within an optional argument. (You could create
% your own custom macro containing the \includegraphics command to make things
% simpler here.)
%\begin{IEEEbiography}[{\includegraphics[width=1in,height=1.25in,clip,keepaspectratio]{mshell}}]{Michael Shell}
% or if you just want to reserve a space for a photo:

% \begin{IEEEbiographynophoto}{Tamoghna Das}
% Biography text here.
% \end{IEEEbiographynophoto}

% if you will not have a photo at all:
% \begin{IEEEbiographynophoto}{Karol Horodecki}
% Biography text here.
% \end{IEEEbiographynophoto}

% insert where needed to balance the two columns on the last page with
% biographies
%\newpage

% \begin{IEEEbiographynophoto}{Robert Pisarczyk}
% Biography text here.
% \end{IEEEbiographynophoto}

% You can push biographies down or up by placing
% a \vfill before or after them. The appropriate
% use of \vfill depends on what kind of text is
% on the last page and whether or not the columns
% are being equalized.

%\vfill

% Can be used to pull up biographies so that the bottom of the last one
% is flush with the other column.
%\enlargethispage{-5in}

% that's all folks
\end{document}